\begin{document}
\title{Computing the Expected Execution Time of Probabilistic Workflow Nets\thanks{%
\parbox[t]{0.76\textwidth}{%
The project has received funding from the European Research Council (ERC) under
the European Union’s Horizon 2020 research and innovation programme under grant
agreement No 787367 (PaVeS). Further it is partially supported by the DFG
Project No. 273811150 (Negotiations: A Model for Tractable Concurrency).
\strut}\hfill%
\parbox[t]{0.20\textwidth}{%
\vspace{-5pt}%
\strut}%
}}
%
%
\author{Philipp J. Meyer\inst{} \and
Javier Esparza\inst{} \and
Philip Offtermatt\inst{}}
\authorrunning{P. J. Meyer \and J. Esparza \and P. Offtermatt}
%
\institute{Technical University of Munich, Munich, Germany\\\email{\{meyerphi,esparza,offtermp\}@in.tum.de}}

\maketitle

\begin{abstract}
Free-Choice Workflow Petri nets, also known as Workflow Graphs, are a popular
model in Business Process Modeling.

In this paper we introduce Timed Probabilistic Workflow Nets (TPWNs), and give
them a Markov Decision Process (MDP) semantics. Since the time needed to
execute two parallel tasks is the maximum of the times, and not their sum, the
expected time cannot be directly computed using the theory of MDPs with
rewards. In our first contribution, we overcome this obstacle with the help of
``earliest-first'' schedulers, and give a single exponential-time algorithm for
computing the expected time.

In our second contribution, we show that computing the expected time is
\sharpp{}-hard, and so polynomial algorithms are very unlikely to exist.
Further, \sharpp{}-hardness holds even for workflows with a very simple
structure in which all transitions times are $1$ or $0$, and all probabilities
are $1$ or $0.5$.

Our third and final contribution is an experimental investigation of the
runtime of our algorithm on a set of industrial benchmarks. Despite the
negative theoretical results, the results are very encouraging. In particular,
the expected time of every workflow in a popular benchmark suite with 642
workflow nets can be computed in milliseconds.
\end{abstract}

\section{Introduction}\label{sec/introduction}
Workflow Petri Nets are a popular model for the representation and analysis of
business processes~\cite{DBLP:journals/jcsc/Aalst98,van2004workflow,DBLP:conf/bpm/DeselE00}.
They are used as back-end for different notations like BPMN (Business Process
Modeling Notation), EPC (Event-driven Process Chain), and UML Activity Diagrams.

There is recent interest in extending these notations with quantitative
information, like probabilities, costs, and time. The final goal is the
development of tool support for computing performance metrics, like the average
cost or the average runtime of a business process.

In a former paper we introduced Probabilistic Workflow Nets (PWN), a foundation
for the extension of Petri nets with probabilities and rewards~\cite{PEVA}.
We presented a polynomial time algorithm
for the computation of the expected cost of free-choice workflow nets, a
subclass of PWN of particular interest for the workflow process community (see
e.g.~\cite{DBLP:journals/jcsc/Aalst98,DBLP:journals/is/FavreFV15,FVM16,esparza2016reduction}).
For example, 1386 of the 1958 nets in the most popular benchmark suite in the
literature are free-choice Workflow Nets~\cite{FFJKLVW09}.

In this paper we introduce Timed PWNs (TPWNs), an extension of PWNs with time.
Following~\cite{PEVA}, we define a semantics in terms
of Markov Decision Processes (MDPs), where, loosely speaking, the
nondeterminism of the MDP models absence of information about the order in
which concurrent transitions are executed. For every scheduler, the semantics
assigns to the TPWN an expected time to termination. Using results
of~\cite{PEVA}, we prove that this expected time is
actually independent of the scheduler, and so that the notion ``expected time
of a TPWN'' is well defined.

We then proceed to study the problem of computing the expected time of a sound
TPWN (loosely speaking, of a TPWN that terminates successfully with probability
1). The expected cost and the expected time have a different interplay with
concurrency. The cost of executing two tasks in parallel is the sum of the
costs (cost models e.g.~salaries of power consumption), while the execution
time of two parallel tasks is the maximum of their individual execution times.
For this reason, standard reward-based algorithms for MDPs, which assume
additivity of the reward along a path, cannot be applied.

Our solution to this problem uses the fact that the expected time of a TPWN is
independent of the scheduler. We define an ``earliest-first'' scheduler which,
loosely speaking, resolves the nondeterminism of the MDP by picking transitions
with earliest possible firing time. Since at first sight the scheduler needs
infinite memory, its corresponding Markov chain is infinite-state, and so of no
help. However, we show how to construct another finite-state Markov chain with
additive rewards, whose expected reward is equal to the expected time of the
infinite-state chain. This finite-state Markov chain can be exponentially
larger than the TPWN, and so our algorithm has exponential complexity. We prove
that computing the expected time is \sharpp{}-hard, even for free-choice TPWNs
in which all transitions times are either $1$ or $0$, and all probabilities are
$1$ or $\nicefrac{1}{2}$. So, in particular, the existence of a polynomial
algorithm implies $\P = \NP$.

In the rest of the paper we show that, despite these negative results, our
algorithm behaves well in practice. For all 642 sound free-choice nets of the
benchmark suite of~\cite{FFJKLVW09}, computing the expected time never takes
longer than a few milliseconds. Looking for a more complicated set of examples,
we study a TPWN computed from a set of logs by process mining. We observe that
the computation of the expected time is sensitive to the distribution of the
execution time of a task. Still, our experiments show that even for
complicated distributions leading to TPWNs with hundreds of transitions and
times spanning two orders of magnitude the expected time can be computed in
minutes.

All missing proofs can be found in the Appendix.

\section{Preliminaries}\label{sec/preliminaries}
We introduce some preliminary definitions. The full version~\cite{FullVersion}
gives more details.

\smallskip

\noindent \textbf{Workflow Nets.} A {\em workflow net} is a tuple $\Net
=(P,T,F,i,o)$ where $P$ and $T$ are disjoint finite sets of \emph{places} and
\emph{transitions}; $F \subseteq (P\times T) \cup (T \times P)$ is a set of
\emph{arcs}; $i, o \in P$ are distinguished {\em initial} and {\em final}
places such that $i$ has no incoming arcs, $o$ has no outgoing arcs, and the
graph $(P \cup T, F \cup \{ (o, i) \})$ is strongly connected. For $x \in P
\cup T$, we write $\preset{x}$ for the set $\{ y \mid (y,x) \in F\}$ and
$\postset{x}$ for $\{ y \mid (x, y) \in F\}$. We call $\preset{x}$ (resp.
$\postset{x}$) the \emph{preset} (resp. \emph{postset}) of $x$. We extend
this notion to sets $X \subseteq P \cup T$ by $\preset{X} \defeq \cup_{x \in X}
\preset{x}$ resp. $\postset{X} \defeq \cup_{x \in X} \postset{x}$. The notions
of marking, enabled transitions, transition firing, firing sequence, and
reachable marking are defined as usual. The {\em initial marking} (resp. {\em
final marking}) of a workflow net, denoted by $\mI$ (resp. $\mO$), has one
token on place $i$ (resp. $o$), and no tokens elsewhere. A firing sequence
$\sigma$ is a \emph{run} if $\mI \trans{\sigma} \mO$, i.e.~if it leads to the
final marking. $\mathit{Run_\Net }$ denotes the set of all runs of $\Net$.

\smallskip

\noindent \textbf{Soundness and 1-safeness.} Well designed workflows should be
free of deadlocks and livelocks. This idea is captured by the notion of
soundness~\cite{DBLP:journals/jcsc/Aalst98,DBLP:journals/fac/AalstHHSVVW11}: A
workflow net is {\em sound} if the final marking is reachable from any
reachable marking.\footnote{In~\cite{DBLP:journals/fac/AalstHHSVVW11}, which
examines many different notions of soundness, this is called {\emph easy
soundness}.} Further, in this paper we restrict ourselves to 1-safe workflows:
A marking $M$ of a workflow net $\W$ is {\em 1-safe} if $M(p) \leq 1 $ for
every place $p$, and $\W$ itself is {\em 1-safe} if every reachable marking is
1-safe. We identify 1-safe markings $M$ with the set $\{ p \in P \mid M(p) = 1 \}$.

\smallskip

\noindent \textbf{Independence, concurrency,
conflict~\cite{DBLP:books/sp/trends86/RozenbergT86}. } Two transitions $t_1$,
$t_2$ of a workflow net are \emph{independent} if $\preset{t_1} \cap
\preset{t_2} = \emptyset$, and \emph{dependent} otherwise. Given a 1-safe
marking $M$, two transitions are {\em concurrent at $M$} if $M$ enables both of
them, and they are independent, and {\em in conflict at $M$} if $M$ enables
both of them, and they are dependent. Finally, we recall the definition of
Mazurkiewicz equivalence. Let $\Net =(P,T,F,i,o)$ be a 1-safe workflow net.
The relation $\equiv_1 \subseteq T^* \times T^*$ is defined as follows: $\sigma
\equiv_1 \tau$ if there are independent transitions $t_1, t_2$ and sequences
$\sigma',\sigma'' \in T^*$ such that $\sigma = \sigma' \, t_1 \, t_2 \sigma''$
and $\tau = \sigma' \, t_2 \, t_1 \sigma''$. Two sequences $\sigma, \tau \in
T^*$ are \emph{Mazurkiewicz equivalent} if $\sigma \equiv \tau$, where $\equiv$
is the reflexive and transitive closure of $\equiv_1$. Observe that $\sigma \in
T^*$ is a firing sequence if{}f every sequence $\tau \equiv \sigma$ is a firing
sequence.

\smallskip

\noindent \textbf{Confusion-freeness, free-choice workflows.} Let $t$ be a
transition of a workflow net, and let $M$ be a 1-safe marking that enables $t$.
The {\em conflict set of $t$ at $M$}, denoted $C(t, M)$, is the set of
transitions in conflict with $t$ at $M$. A set $U$ of transitions is a
\emph{conflict set} of $M$ if there is a transition $t$ such that $U = C(t,M)$.
The conflict sets of $M$ are given by $\mathcal{C}(M) \defeq \cup_{t \in T}
C(t, M)$. A 1-safe workflow net is {\em confusion-free} if for every reachable
marking $M$ and every transition $t$ enabled at $M$, every transition $u$
concurrent with $t$ at $M$ satisfies $C(u, M) = C(u, M \setminus \preset{t}) =
C(u, (M \setminus \preset{t}) \cup \postset{t})$. The following result follows
easily from the definitions (see also~\cite{PEVA}):

\begin{lemma}\cite{PEVA}
\label{lem:confsetspart}
Let $\Net$ be a 1-safe workflow net. If $\Net$ is confusion-free then for every
reachable marking $M$ the conflict sets $\mathcal{C}(M)$ are a partition of the
set of transitions enabled at $M$.
\end{lemma}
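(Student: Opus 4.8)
The plan is to show that the binary relation ``being in conflict at $M$'' is an equivalence relation on the set of transitions enabled at $M$, and that its equivalence classes are exactly the conflict sets $C(t,M)$; the partition claim then follows immediately. First I would record two easy observations. Since $\Net$ is a workflow net, strong connectivity forces every transition to have a nonempty preset, so every enabled transition $t$ is dependent with itself and hence lies in its own conflict set $C(t,M)$. Consequently each $C(t,M)$ is nonempty, and since a transition can be in conflict only with an enabled transition, we have $C(t,M) \subseteq \{u \mid u \text{ enabled at } M\}$. Together these show that the conflict sets cover, and are contained in, the set of enabled transitions, so it only remains to prove that any two conflict sets are either equal or disjoint.

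For this I would fix the relation $R$ on the enabled transitions defined by $t \mathrel{R} u$ iff $t$ and $u$ are in conflict at $M$, and verify that $R$ is an equivalence relation; its classes are then precisely the sets $C(t,M)$, and equal-or-disjoint is automatic. Reflexivity is the first observation above, and symmetry is immediate from the symmetry of dependence. The whole weight of the lemma thus rests on transitivity, which is exactly the step where confusion-freeness is used.

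To prove transitivity, suppose $t \mathrel{R} u$ and $u \mathrel{R} v$ but, for contradiction, that $t$ and $v$ are \emph{not} in conflict. Since $t$ and $v$ are both enabled, not being in conflict means they are independent, i.e.\ concurrent at $M$. I would then apply confusion-freeness to the enabled transition $t$ and the concurrent transition $v$, obtaining $C(v,M) = C(v, M \setminus \preset{t})$. Now $u \in C(v,M)$ because $u$ and $v$ are in conflict, so $u \in C(v, M \setminus \preset{t})$; but membership in a conflict set at $M \setminus \preset{t}$ requires $u$ to be enabled there, forcing $\preset{u} \subseteq M \setminus \preset{t}$ and hence $\preset{t} \cap \preset{u} = \emptyset$. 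This contradicts $t \mathrel{R} u$, which gives $\preset{t} \cap \preset{u} \neq \emptyset$. Therefore $t$ and $v$ are in conflict, i.e.\ $t \mathrel{R} v$, and $R$ is transitive.

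The main obstacle, as indicated, is transitivity: dependence is in general not transitive, and a ``chain'' $t$--$u$--$v$ with $t$ and $v$ independent but both dependent on $u$ is precisely a confused configuration. The point of the argument is that confusion-freeness, phrased as invariance of the conflict sets of concurrent transitions under the firing of $t$, is exactly strong enough to rule such chains out, because removing the tokens on $\preset{t}$ would disable the middle transition $u$ and thereby change $C(v, \cdot)$. Once $R$ is shown to be an equivalence relation, its classes partition the enabled transitions and coincide with the family $\{C(t,M) \mid t \text{ enabled at } M\} = \mathcal{C}(M)$, which completes the proof.
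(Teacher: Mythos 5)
Your proof is correct; note, however, that the paper itself contains no proof of this lemma---it is imported from~\cite{PEVA} with the remark that it ``follows easily from the definitions''---so there is no in-paper argument to compare against line by line. Your equivalence-relation route is the natural way to make that remark precise, and you correctly isolate the only nontrivial point: transitivity of the ``in conflict at $M$'' relation, which is exactly where confusion-freeness must enter (reflexivity via nonempty presets, guaranteed by strong connectivity of the workflow graph, and symmetry are immediate). The transitivity step is sound: assuming $t$ and $v$ independent, confusion-freeness applied to the enabled transition $t$ and the transition $v$ concurrent with it at $M$ yields $C(v,M) = C(v, M \setminus \preset{t})$; since $u \in C(v,M)$, membership in $C(v, M \setminus \preset{t})$ forces $u$ to be enabled at $M \setminus \preset{t}$, hence $\preset{u} \cap \preset{t} = \emptyset$, contradicting the dependence of $t$ and $u$. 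Two small points you handle implicitly and could make explicit: first, $C(v, M \setminus \preset{t})$ is well defined because $M \setminus \preset{t}$ is 1-safe and still enables $v$ (precisely by the assumed independence of $t$ and $v$); second, your reading that $t \in C(t,M)$ is consistent with the paper's usage (compare the free-choice formula $C(t,M) = \postset{\left(\preset{t}\right)}$ and the worked conflict sets in the appendix), so the classes of your relation coincide with the family $\mathcal{C}(M)$ and the partition claim follows as you state.
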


A workflow net is \emph{free-choice} if for every two places $p_1,p_2$, if
$\postset{p_1} \cap \postset{p_2} \neq \emptyset$, then $\postset{p_1} =
\postset{p_2}$. Any free-choice net is confusion-free, and the conflict set of
a transition $t$ enabled at a marking $M$ is given by $C(t, M) =
\postset{\left(\preset{t}\right)}$ (see e.g.~\cite{PEVA}).

\section{Timed Probabilistic Workflow Nets}\label{sec/time}
In~\cite{PEVA} we introduced a probabilistic semantics for confusion-free
workflow nets. Intuitively, at every reachable marking a choice between two
concurrent transitions is resolved nondeterministically by a scheduler, while a
choice between two transitions in conflict is resolved probabilistically; the
probability of choosing each transition is proportional to its {\em weight}.
For example, in the net in Fig.~\ref{fig:example2-pwn}, at the marking
$\{p_1,p_3\}$, the scheduler can choose between the conflict sets $\{t_2,t_3\}$
and $\{t_4\}$, and if $\{t_2,t_3\}$ is chosen, then $t_2$ is chosen with
probability $\nicefrac{1}{5}$ and $t_3$ with probability $\nicefrac{4}{5}$. We
extend Probabilistic Workflow Nets by assigning to each transition $t$ a
natural number $\td(t)$ modeling the time it takes for the transition to fire,
once it has been selected.\footnote{The semantics of the model can be defined
in the same way for both discrete and continuous time, but, since our results
only concern discrete time, we only consider this case.}

\begin{definition}[Timed Probabilistic Workflow Nets]
A {\em Timed Probabilistic Workflow Net} \/(TPWN) is a tuple $\W = (\Net ,w,
\td)$ where $\Net =(P,T,F,i,o)$ is a 1-safe confusion-free workflow net, $w
\colon T \ra \Q_{> 0}$ is a \emph{weight function}, and $\td \colon T \ra \N$
is a {\em time function} \/that assigns to every transition a duration.
\end{definition}

\noindent \textbf{Timed sequences.} We assign to each transition sequence
$\sigma$ of $\W$ and each place $p$ a \emph{timestamp} $\mu(\sigma)_p$ through
a \emph{timestamp function} $\mu : T^* \to \Nbot^P$. The set $\Nbot$ is
defined by $\Nbot \defeq \{\bot\} \cup \N$ with $\bot \le x$ and $\bot + x =
\bot$ for all $x \in \Nbot$. Intuitively, if a place $p$ is marked after
$\sigma$, then $\mu(\sigma)_p$ records the ``arrival time'' of the token in
$p$, and if $p$ is unmarked, then $\mu(\sigma)_p = \bot$. When a transition
occurs, it removes all tokens in its preset, and $\tau(t)$ time units later,
puts tokens into its postset. Formally, we define $\mu(\epsilon)_i \defeq 0$,
$\mu(\epsilon)_p \defeq \bot$ for $p \neq i$, and $\mu(\sigma t) \defeq
upd(\mu(\sigma), t)$, where the update function $upd : \Nbot^P \times T \to
\Nbot^P$ is given by:
\begin{align*}
    upd(\vec{x}, t)_p &\defeq
    \begin{cases}
        \max_{q \in \preset{t}} \vec{x}_q + \tau(t) & \text{if }p \in \postset{t} \\
        \bot & \text{if }p \in \preset{t} \setminus \postset{t} \\
        \vec{x}_p & \text{if }p \not\in \preset{t} \cup \postset{t}
    \end{cases}
\end{align*}
We then define $\time(\sigma) \defeq \max_{p \in P} \mu(\sigma)_p$ as the time
needed to fire $\sigma$. Further $\supp{\vec{x}} \defeq \{ p \in P \mid
\vec{x}_p \neq \bot \}$ is the marking represented by a timestamp $\vec{x} \in
\Nbot^P$.

\begin{example}
The net in Fig.~\ref{fig:example2-pwn} is a TPWN. Weights are shown in red next
to transitions, and times are written in blue into the transitions. For the
sequence $\sigma_1 = t_1 t_3 t_4 t_5$, we have $\time(\sigma_1) = 9$, and for
$\sigma_2 = t_1 t_2 t_3 t_4 t_5$, we have $\time(\sigma_2) = 10$. Observe that
the time taken by the sequences is \emph{not} equal to the sum of the durations
of the transitions.
\end{example}

\noindent \textbf{Markov Decision Process semantics.}
A \emph{Markov Decision Process} (MDP) is a tuple $\mathcal{M} =
(Q, q_0, {\it Steps})$ where $Q$ is a finite set of states, $q_0\in Q$ is the
initial state, and ${\it Steps} \colon Q \ra 2^{dist(Q)}$ is the probability
transition function. Paths of an MDP, schedulers, and the probability measure
of paths compatible with a scheduler are defined as usual (see
Appendix~\ref{app-MDP}).

The semantics of a TPWN $\W$ is a Markov Decision Process ${\it MDP}_\W$. The
states of ${\it MDP}_\W$ are either markings $M$ or pairs $(M, t)$, where $t$
is a transition enabled at $M$. The intended meanings of $M$ and $(M, t)$ are
``the current marking is $M$'', and ``the current marking is $M$, and $t$ has
been selected to fire next.'' Intuitively, $t$ is chosen in two steps: first,
a conflict set enabled at $M$ is chosen nondeterministically, and then a
transition of this set is chosen at random, with probability proportional to
its weight.

\begin{figure}
\centering%
\begin{subfigure}[t]{0.49\textwidth}
\centering%
\begin{tikzpicture}[>=latex,scale=0.9,every node/.style={scale=0.9}]

    \node [place,tokens=1,label=left:$i$] at (0,0) (i) {};

    \node [place,label=above:$p_1$] at (-1,-2) (c1) {};
    \node [place,label=below:$p_2$] at (-1,-4) (c2) {};
    \node [place,label=above:$p_3$] at (1,-2) (c3) {};
    \node [place,label=below:$p_4$] at (1,-4) (c4) {};
	\node [place,label=left:$o$] at (0,-6) (o){$o$};

    \node [transition,label=left:\textcolor{tcolor}{$t_1$}] at (0,-1) (t1) {$\ttime{1}$}
	edge [pre] (i)
	edge [post] (c1)
	edge [post] (c3);

    \node [transition,label=above:\textcolor{tcolor}{$t_2$},label={below:\tweight{1}}] at (0,-2) (t2) {$\ttime{4}$};
    \draw[pre] ($(t2.west) + (0,-0.05)$) -- ($(c1.east) + (0,-0.05)$);
    \draw[post] ($(t2.west) + (0,0.05)$) -- ($(c1.east) + (0,0.05)$);

    \node [transition,label=left:\textcolor{tcolor}{$t_3$},label={right:\tweight{4}}] at (-1,-3) (t3) {$\ttime{2}$}
	edge [pre] (c1)
	edge [post] (c2);

    \node [transition,label=left:\textcolor{tcolor}{$t_4$}] at (1,-3) (t4) {$\ttime{5}$}
	edge [pre] (c3)
	edge [post] (c4);

    \node [transition,label=left:\textcolor{tcolor}{$t_5$}] at (0,-5) (t5) {$\ttime{3}$}
	edge [pre] (c2)
	edge [pre] (c4)
	edge [post] (o);

\end{tikzpicture}%
\caption{TPWN $\W$}\label{fig:example2-pwn}%
\end{subfigure}\hfill%
\begin{subfigure}[t]{0.49\textwidth}
\centering%
\begin{tikzpicture}[>=latex,scale=0.7,every node/.style={scale=0.8}]
\tikzstyle{place}=[circle,thick,draw=black,fill=black,minimum size=1mm,inner sep=0mm]
\tikzstyle{transition}=[circle,thick,draw=black,fill=white,minimum size=1mm,inner sep=0mm]

\node [place] at (-2.25,-2.25) (i){};
\node[above=0cm of i]{$\mI$};

\node [place] at (-0.75,-2.25) (it1){};
\node[above=0cm of it1]{\textcolor{tcolor}{$t_1$}};

\node [place] at (0,-3) (p1p3){};
\node[right=0cm of p1p3]{$\{p_1,p_3\}$};

\node [place] at (-1.5,-4.5) (p1p3t2){};
\node[left=0cm of p1p3t2]{\textcolor{tcolor}{$t_2$}};

\node [place] at (0,-4.5) (p1p3t3){};
\node[left=0cm of p1p3t3]{\textcolor{tcolor}{$t_3$}};

\node [place] at (1.5,-4.5) (p1p3t4){};
\node[left=0cm of p1p3t4]{\textcolor{tcolor}{$t_4$}};

\node [place] at (0,-6.0) (p2p3){};
\node[left=0cm of p2p3]{$\{p_2,p_3\}$};

\node [place] at (2.25,-6.0) (p1p4){};
\node[left=0cm of p1p4]{$\{p_1,p_4\}$};

\node [place] at (3.0,-7.5) (p1p4t2){};
\node[left=0cm of p1p4t2]{\textcolor{tcolor}{$t_2$}};

\node [place] at (1.5,-7.5) (p1p4t3){};
\node[left=0cm of p1p4t3]{\textcolor{tcolor}{$t_3$}};

\node [place] at (0,-7.5) (p2p3t4){};
\node[left=0cm of p2p3t4]{\textcolor{tcolor}{$t_4$}};

\node [place] at (0,-9.0) (p2p4){};
\node[left=0cm of p2p4]{$\{p_2,p_4\}$};

\node [place] at (0.75,-9.75) (p2p4t5){};
\node[above=0cm of p2p4t5]{\textcolor{tcolor}{$t_5$}};

\node [place] at (2.25,-9.75) (o){};
\node[above=0cm of o]{$\mO$};

\node [transition] at (-1.5,-2.25) {}
edge [pre,-] (i)
edge [post] (it1);

\node [transition] at (0,-2.25) {}
edge [pre,-] (it1)
edge [post] (p1p3);

\node [transition] at (-0.75,-3.75) {}
edge [pre,-] (p1p3)
edge [post] node[above left=-0.1cm]{$\tweight{\nicefrac{1}{5}}$} (p1p3t2)
edge [post] node[above right=-0.1cm]{$\tweight{\nicefrac{4}{5}}$} (p1p3t3);

\node [transition] at (-1.5,-5.25) {}
edge [pre,-] (p1p3t2)
edge [post, bend left=90, looseness=1.8] (p1p3);

\node [transition] at (0,-5.25) {}
edge [pre,-] (p1p3t3)
edge [post] (p2p3);

\node [transition] at (0.75,-3.75) {}
edge [pre,-] (p1p3)
edge [post] (p1p3t4);

\node [transition] at (2.25,-5.25) {}
edge [pre,-] (p1p3t4)
edge [post] (p1p4);

\node [transition] at (2.25,-6.75) {}
edge [pre,-] (p1p4)
edge [post] node[above right=-0.1cm]{$\tweight{\nicefrac{1}{5}}$} (p1p4t2)
edge [post] node[above left=-0.1cm]{$\tweight{\nicefrac{4}{5}}$} (p1p4t3);

\node [transition] at (1.5,-8.25) {}
edge [pre,-] (p1p4t3)
edge [post] (p2p4);

\node [transition] at (3.0,-8.25) {}
edge [pre,-] (p1p4t2)
edge [post, bend right=90, looseness=1.8] (p1p4);

\node [transition] at (0,-6.75) {}
edge [pre,-] (p2p3)
edge [post] (p2p3t4);

\node [transition] at (0,-8.25) {}
edge [pre,-] (p2p3t4)
edge [post] (p2p4);

\node [transition] at (0,-9.75) {}
edge [pre,-] (p2p4)
edge [post] (p2p4t5);

\node [transition] at (1.5,-9.75) {}
edge [pre,-] (p2p4t5)
edge [post] (o);

\node [transition] at (3.0,-9.75) {}
edge [pre,-,bend right = 45] (o)
edge [post, bend left = 45] (o);

\end{tikzpicture}%
\caption{${\it MDP}_\W$}\label{fig:example2-mdp}%
\end{subfigure}%
\caption{A TPWN and its associated MDP.}\label{fig:example2}%
\end{figure}
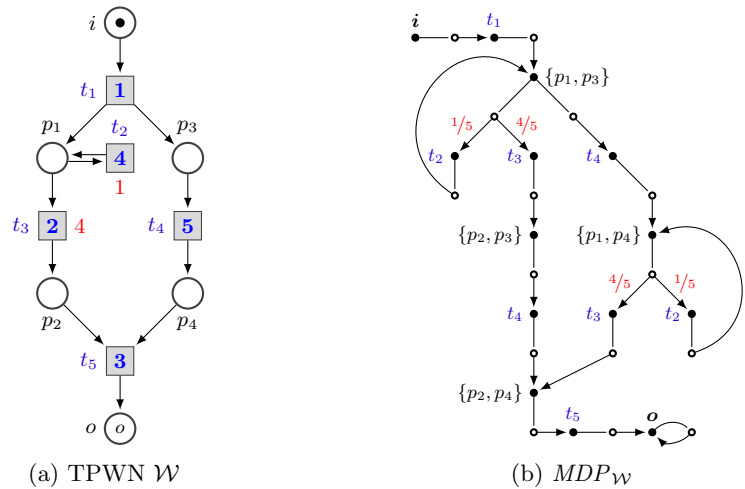

Formally, let $\W=(\Net, w, \tau)$ be a TPWN where $\Net=(P,T,F,i,o)$, let $M$
be a reachable marking of $\W$ enabling at least one transition, and let $C$
be a conflict set of $M$. Let $w(C)$ be the sum of the weights of the
transitions in $C$. The {\em probability distribution $P_{M, C}$ over $T$} is
given by $P_{M, C} (t) = \frac{w(t)}{w(C)}$ if $t \in C$ and $P_{M, C} (t) =0$
otherwise. Now, let ${\cal M}$ be the set of 1-safe markings of $\W$, and let
${\cal E}$ be the set of pairs $(M, t)$ such that $M \in {\cal M}$ and $M$
enables $t$. We define the Markov decision process $\MDP_\W = (Q,q_0,{\it
Steps})$, where $Q= {\cal M} \cup {\cal E}$, $q_0 = \mathbf{i}$, the initial
marking of $\W$, and ${\it Steps}(M)$ is defined for markings of ${\cal M}$ and
${\cal E}$ as follows. For every $M \in {\cal M}$, \begin{itemize} \item if
$M$ enables no transitions, then ${\it Steps}(M)$ contains exactly one
distribution, which assigns probability 1 to $M$, and $0$ to all other states.
\item if $M$ enables at least one transition, then ${\it Steps}(M)$ contains a
distribution $\lambda$ for each conflict set $C$ of $M$. The distribution is
defined by: $\lambda(M, t) = P_{M,C}(t)$ for every $t \in C$, and
$\lambda(s)=0$ for every other state $s$. \end{itemize} For every $(M, t) \in
{\cal E}$, ${\it Steps}(M,t)$ contains one single distribution that assigns
probability 1 to the marking $M'$ such that $M\trans{t}M'$, and probability $0$
to every other state.

\begin{example}
Fig.~\ref{fig:example2-mdp} shows a graphical representation of the MDP of the
TPWN in Fig.~\ref{fig:example2-pwn}. Black nodes represent states, white nodes
probability distributions. A black node $q$ has a white successor for each
probability distribution in ${\it Steps}(q)$. A white node $\lambda$ has a
black successor for each node $q$ such that $\lambda(q) > 0$; the arrow leading
to this black successor is labeled with $\lambda(q)$, unless $\lambda(q)=1$, in
which case there is no label. States $(M, t)$ are abbreviated to $t$.
\end{example}

\noindent \textbf{Schedulers.} Given a TPWN $\W$, a scheduler of ${\it MDP}_W$
is a function $\gamma : T^* \to 2^T$ assigning to each firing sequence $\mI
\trans{\sigma} M$ with $\mathcal{C}(M) \neq \emptyset$ a conflict set
$\gamma(\sigma) \in \mathcal{C}(M)$. A firing sequence $\mI \trans{\sigma} M$
is \emph{compatible} with a scheduler $\gamma$ if for all partitions $\sigma =
\sigma_1 t \sigma_2$ for some transition $t$, we have $t \in \gamma(\sigma_1)$.

\begin{example}
In the TPWN of Fig.~\ref{fig:example2-pwn}, after firing $t_1$ two conflict
sets become concurrently enabled: $\{t_2, t_3\}$ and $\{t_4\}$. A scheduler
picks one of the two. If the scheduler picks $\{t_2, t_3\}$ then $t_2$ may
occur, and in this case, since firing $t_2$ does not change the marking, the
scheduler chooses again one of $\{t_2, t_3\}$ and $\{t_4\}$. So there are
infinitely many possible schedulers, differing only in how many times they pick
$\{t_2, t_3\}$ before picking $t_4$.
\end{example}

\begin{definition}[(Expected) Time until a state is reached]
\label{def:MDPreward}
Let $\pi$ be an infinite path of $\MDP_\W$, and let $M$ be a reachable marking
of $\W$. Observe that $M$ is a state of $\MDP_\W$. The \emph{time needed to
reach $M$ along $\pi$}, denoted $\time(M, \pi)$, is defined as follows: If
$\pi$ does not visit $M$, then $\time(M, \pi) \defeq \infty$; otherwise,
$\time(M, \pi) \defeq \time(\Sigma(\pi'))$, where $\Sigma(\pi')$ is the
transition sequence corresponding to the shortest prefix $\pi'$ of $\pi$ ending
at $M$. Given a scheduler $S$, the expected time until reaching $M$ is defined
as
\[
ET^S_\W(M) \defeq \sum_{\pi \in \mathit{Paths}^S} \time(M,\pi) \cdot \mathit{Prob}^S(\pi).
\]
\noindent and the expected time $ET^S_\W$ is defined as $ET^S_\W \defeq
ET^S_\W(\mO)$, i.e.~the expected time until reaching the final marking.
\end{definition}

In~\cite{PEVA} we proved a result for Probabilistic Workflow Nets (PWNs) with
rewards, showing that the expected reward of a PWN is independent of the
scheduler (intuitively, this is the case because in a confusion-free Petri net
the scheduler only determines the logical order in which transitions occur, but
not which transitions occur). Despite the fact that, contrary to rewards, the
execution time of a firing sequence is not the sum of the execution times of
its transitions, the proof carries over to the expected time with only minor
modifications.

\begin{restatable}{theorem}{thmexpectedtime}\label{thm:expected-time}
Let $\W$ be a TPWN.
\begin{itemize}
\item[(1)] There exists a value $ET_\W$ such that for every scheduler $S$ of
    $\W$, the expected time $ET^S_\W$ of $\W$ under $S$ is equal to $ET_\W$.
\item[(2)] $ET_\W$ is finite if{}f $\W$ is sound.
\end{itemize}
\end{restatable}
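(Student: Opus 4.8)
The plan is to reduce everything to the scheduler-invariance result for rewards proved in \cite{PEVA}, whose only missing ingredient in the timed setting is that the execution time, although a maximum rather than a sum, still respects Mazurkiewicz equivalence. So the first thing I would establish is that the timestamp function $\mu$ — and hence $\time$ — is constant on each $\equiv$-class: if $\sigma \equiv \tau$ then $\mu(\sigma) = \mu(\tau)$. Since $\equiv$ is generated by $\equiv_1$, it suffices to prove the diamond identity $upd(upd(\vec{x}, t_1), t_2) = upd(upd(\vec{x}, t_2), t_1)$ for independent transitions $t_1, t_2$ whenever both orders are firing sequences. A short place-by-place case distinction (is $p$ in a preset, a postset, or neither?) settles this: because $\preset{t_1} \cap \preset{t_2} = \emptyset$, the value $\max_{q \in \preset{t_i}} \vec{x}_q$ driving each update is unaffected by the other transition, and the only potentially conflicting case, namely a shared postset place $p \in \postset{t_1} \cap \postset{t_2}$ receiving different timestamps in the two orders, cannot occur in a $1$-safe net since firing both transitions would then put two tokens on $p$. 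Thus $\time(\sigma) = \max_p \mu(\sigma)_p$ is invariant under $\equiv$; this is precisely the ``minor modification'' replacing the trivial commutativity of addition used for additive rewards.

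With this in hand, Part~(1) follows from the argument of \cite{PEVA} essentially verbatim. There it is shown that in a confusion-free net the probability $\mathit{Prob}^S([\sigma])$ that a scheduler $S$ assigns to the event ``the run performed before reaching $\mO$ lies in the $\equiv$-class $[\sigma]$'' is the same for every scheduler, since $S$ only fixes the logical order of the (scheduler-independent) probabilistic conflicts. Grouping the defining sum of $ET^S_\W$ by $\equiv$-classes of runs and using that $\time$ is constant on each class gives
\[
ET^S_\W = \sum_{[\sigma] \subseteq \mathit{Run}_\Net} \time(\sigma)\cdot \mathit{Prob}^S([\sigma]) \;+\; \infty \cdot \mathit{Prob}^S(\text{never reach } \mO),
\]
with the convention $\infty \cdot 0 = 0$. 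Each run-class probability is scheduler-independent, hence so is the non-termination probability $1 - \sum_{[\sigma]}\mathit{Prob}^S([\sigma])$, so the entire right-hand side is term-wise independent of $S$. Calling the common value $ET_\W \in [0,\infty]$ proves Part~(1).

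For the ``only if'' direction of Part~(2) I argue by contraposition. If $\W$ is not sound, some reachable marking $M$ cannot reach $\mO$; pick $\sigma$ with $\mI \trans{\sigma} M$. The scheduler that always selects the conflict set containing the next transition of $\sigma$ (well defined, as the conflict sets partition the enabled transitions by Lemma~\ref{lem:confsetspart}) reaches $M$ with probability at least the product of the positive probabilities $w(t)/w(C)$ along $\sigma$, and every continuation from $M$ then avoids $\mO$ forever, so $\time(\mO,\pi) = \infty$ on a positive-probability set and $ET^S_\W = \infty$; by Part~(1), $ET_\W = \infty$. For the ``if'' direction I exploit Part~(1) to replace the nondeterministic analysis by one convenient scheduler. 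Let $d(M)$ be the length of a shortest firing sequence from $M$ to $\mO$; it is finite for every reachable $M$ by soundness and bounded by some $N$, as the $1$-safe net has finitely many markings. Let $S^\ast$ be the memoryless scheduler selecting at $M \neq \mO$ the conflict set containing the first transition of such a shortest sequence. That transition fires with probability at least $p_{\min} = \min_{t,C} w(t)/w(C) > 0$ and strictly decreases $d$, so from any reachable $M$ the marking $\mO$ is reached within $N$ steps with probability at least $p_{\min}^{N}$. A standard geometric tail bound makes the expected number of fired transitions finite, and since $\time(\sigma) \le \td_{\max}\cdot|\sigma|$ (a maximum of causal sums of durations is at most the total of all durations), we obtain $ET^{S^\ast}_\W < \infty$, whence $ET_\W < \infty$.

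I expect the main obstacle to be the first step: the diamond identity must be verified with care exactly where the reward proof is trivial, and one has to confirm that independence of $t_1,t_2$ together with $1$-safeness excludes the single problematic configuration — two independent transitions depositing different timestamps into a common postset place. Once $\time$ is known to be Mazurkiewicz-invariant, everything downstream is a faithful reuse of the \cite{PEVA} machinery together with elementary finite Markov-chain bounds.
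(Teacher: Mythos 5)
Your proposal is correct and, for part~(1) and the unsoundness direction of part~(2), it follows essentially the paper's own route: your diamond identity for $upd$ is exactly the time half of the paper's Lemma~\ref{lem:equivValue}, the scheduler-independence of class probabilities is the probability half (proved there from confusion-freeness), and the grouping of the defining sum by $\equiv$-classes is carried out in the paper via the bijection induced by Lemma~\ref{lem:mazur2} together with the characterization of $ET^S_\W$ in Lemma~\ref{lem:charexpected}. One imprecision in your sketch of the diamond identity: the shared postset place is \emph{not} the only potentially conflicting configuration. If $p \in \postset{t_1} \cap \preset{t_2}$, then $\max_{q \in \preset{t_2}}$ also differs between the two orders ($p$'s timestamp has been updated by $t_1$ in one order but not the other), so your claim that the updates are unaffected ``because $\preset{t_1} \cap \preset{t_2} = \emptyset$'' does not yet follow. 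The paper excludes all three overlaps $\postset{t_1}\cap\preset{t_2}$, $\postset{t_2}\cap\preset{t_1}$ and $\postset{t_1}\cap\postset{t_2}$ by 1-safeness (each would doubly mark a place at one of the intermediate or final markings); your blanket appeal to 1-safeness covers this in spirit, but the pre/post overlap cases must be named, since both orders being firable is what forces the contradiction.

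Where you genuinely diverge is the soundness-implies-finiteness direction. The paper takes an \emph{arbitrary} memoryless scheduler $S$ and must then show that $\mO$ remains reachable inside the induced chain $\MDP^S_\W$; this requires invoking Lemma~\ref{lem:mazur2} once more, to convert an arbitrary continuation $M \trans{\tau} \mO$ of length at most $2^n$ into one compatible with $S$, before concluding finiteness of the hitting time. You instead construct a specific memoryless scheduler that always selects the conflict set containing the first transition of a shortest firing sequence to $\mO$, so progress with probability at least $p_{\min}$ is immediate and no Mazurkiewicz argument is needed in this part. Your route is more self-contained, and it also makes explicit the bound $\time(\sigma) \le \td_{\max} \cdot \abs{\sigma}$ that converts the expected number of steps into expected time, a step the paper leaves implicit. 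The paper's version buys the slightly stronger fact that \emph{every} memoryless scheduler yields a finite expected time, but given part~(1) that extra generality is not needed.
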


By this theorem, the expected time $ET_\W$ can be computed by choosing a
suitable scheduler $S$, and computing $ET^S_\W$.

\section{Computation of the expected time}\label{sec:computation}
We show how to compute the expected time of a TPWN\@. We fix an appropriate
scheduler, show that it induces a finite-state Markov chain, define an
appropriate reward function for the chain, and prove that the expected time is
equal to the expected reward.

\subsection{Earliest-First Scheduler}\label{subsec:scheduler}

Consider a firing sequence $\mI \trans{\sigma} M$. We define the \emph{starting
time} of a conflict set $C \in \mathcal{C}(M)$ as the earliest time at which
the transitions of $C$ become enabled. This occurs after \emph{all} tokens of
$\preset{C}$ arrive\footnote{This is proved in Lemma~\ref{lem:start-time} in the
Appendix.}, and so the starting time of $C$ is the
maximum of $\mu(\sigma)_p$ for $p \in \preset{C}$ (recall that $\mu(\sigma)_p$
is the latest time at which a token arrives at $p$ while firing $\sigma$).

Intuitively, the ``earliest-first'' scheduler always chooses the conflict set
with the earliest starting time (if there are multiple such conflict sets, the
scheduler chooses any one of them). Formally, recall that a scheduler is a
mapping $\gamma \colon T^* \rightarrow 2^T$ such that for every firing sequence
$\mI \trans{\sigma} M$, the set $\gamma(\sigma)$ is a conflict set of $M$. We
define the \emph{earliest-first scheduler} $\gamma$ by:
\begin{align*}
    \gamma(\sigma) &\defeq
    \argmin_{C \in \mathcal{C}(M)} \, \max_{p \in \preset{C}} \, \mu(\sigma)_p &
        \text{where $M$ is given by $\mI \trans{\sigma} M$.}
\end{align*}

\begin{example}
Fig.~\ref{fig:scheduler-mc-infinite} shows the Markov chain induced by the
``earliest-first'' scheduler defined above in the MDP of
Fig.~\ref{fig:example2-mdp}. Initially we have a token at $\mI$ with arrival
time $0$. After firing $t_1$, which takes time $1$, we obtain tokens in $p_1$
and $p_3$ with arrival time $1$. In particular, the conflict sets $\{t_2,t_3\}$
and $\{t_4\}$ become enabled at time $1$. The scheduler can choose any of them,
because they have the same starting time. Assume it chooses $\{t_2,t_3\}$. The
Markov chain now branches into two transitions, corresponding to firing $t_2$
and $t_3$ with probabilities $\nicefrac{1}{5}$ and $\nicefrac{4}{5}$,
respectively. Consider the branch in which $t_2$ fires. Since $t_2$ starts at
time $1$ and takes $4$ time units, it removes the token from $p_1$ at time $1$,
and adds a new token to $p_1$ with arrival time $5$; the token at $p_3$ is not
affected, and it keeps its arrival time of $1$. So we have $\mu(t_1t_2) =
\left\{ \sstack{p_1}{5},\sstack{p_3}{1} \right\}$ (meaning
$\mu(t_1t_2)_{p_1}=5$, $\mu(t_1t_2)_{p_3}=1$, and $\mu(t_1t_2)_{p}= \bot$
otherwise). Now the conflict sets $\{t_2,t_3\}$ and $\{t_4\}$ are enabled
again, but with a difference: while $\{t_4\}$ has been enabled since time $1$,
the set $\{t_2,t_3\}$ is now enabled since time $\mu(t_1 t_2)_{p_1} = 5$. The
scheduler must now choose $\{t_4\}$, leading to the marking that puts tokens on
$p_1$ and $p_4$ with arrival times $\mu(t_1 t_2 t_4)_{p_1} = 5$ and $\mu(t_1
t_2 t_4)_{p_4} = 6$. In the next steps the scheduler always chooses
$\{t_2,t_3\}$ until $t_5$ becomes enabled. The final marking $\mO$ can be
reached after time $9$, through $t_1 t_3 t_4 t_5$ with probability
$\nicefrac{4}{5}$, or with times $10 + 4k$ for $k \in \N$, through $t_1 t_2 t_4
t_2^k t_3 t_5$ with probability $\left(\nicefrac{1}{5}\right)^{k+1} \cdot
\nicefrac{4}{5}$ (the times at which the final marking can be reached are
written in blue inside the final states).
\end{example}

Theorem~\ref{thm:finitememory} below shows that the earliest-first scheduler
only needs finite memory, which is not clear from the definition. The
construction is similar to those of~\cite{GaubertM99a,GaubertM99b,CarlierC87}.
However, our proof crucially depends on TPWNs being confusion-free.

\begin{restatable}{theorem}{thmfinitememory}\label{thm:finitememory}
Let $H \defeq \max_{t \in T} \td(t)$ be the maximum duration of the transitions
of $T$, and let $\Hbot \defeq \{\bot, 0,1,\ldots, H\} \subseteq \Nbot$. There
are functions $\nu \colon T^* \rightarrow \Hbot^P$ (compare with $\mu \colon
T^* \rightarrow \Nbot^P$), $f \colon \Hbot^P \times T \rightarrow \Hbot^P$ and
$r \colon \Hbot^P \to \N$ such that for every $\sigma = t_1 \ldots t_n \in T^*$
compatible with $\gamma$ and for every $t \in T$ enabled by $\sigma$:
\begin{align}
    \gamma(\sigma) & = \argmin_{C \in \mathcal{C}(\supp{\nu(\sigma)})} \,
            \max_{p \in \preset{C}} \, \nu(\sigma)_p \label{prop1} \\
    \nu(\sigma t) & = f(\nu(\sigma), t) \label{prop2} \\
    \time(\sigma) &= \max_{p \in P} \nu(\sigma)_p +
            \sum_{k=0}^{n-1} r(\nu(t_1 \ldots t_k)) \label{prop3}
\end{align}
\end{restatable}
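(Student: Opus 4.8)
The plan is to obtain $\nu$ as a normalized version of $\mu$. For a $\gamma$-compatible sequence $\sigma = \sigma' t$ with $t\in C=\gamma(\sigma')$, let $S(\sigma)\defeq\max_{p\in\preset{C}}\mu(\sigma')_p$ be the \emph{starting time} of the last fired conflict set, and set $S(\epsilon)\defeq 0$. I then define, for $p\in\supp{\mu(\sigma)}$,
\[
  \nu(\sigma)_p \defeq \max\!\bigl(0,\, \mu(\sigma)_p - S(\sigma)\bigr),\qquad \nu(\sigma)_p\defeq\bot \text{ for } p\notin\supp{\mu(\sigma)},
\]
so that $\nu(\sigma)_p$ records only the offset of the token in $p$ above the current frontier $S(\sigma)$. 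Guided by the desired identities I set $r(\vec x)\defeq \min_{C\in\mathcal C(\supp{\vec x})}\max_{p\in\preset{C}}\vec x_p$ and let $f$ be the update on offsets that the definition of $\mu$ forces; the point of the theorem is that these depend only on $\nu(\sigma)$ and $t$, not on the unbounded data in $\mu(\sigma)$.

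The structural fact on which everything rests is that along a $\gamma$-compatible run the starting times are \emph{non-decreasing}. First I would show that firing the earliest conflict set $C$ at time $S$ produces only tokens with timestamp $S+\td(t)\ge S$, and, by confusion-freeness (the invariance of conflict sets under concurrent firing), leaves the presets---and hence the starting times---of all surviving conflict sets unchanged; since $C$ had minimal starting time, no conflict set with starting time below $S$ can remain or appear, so $S(\sigma)$ is monotone. Two consequences follow. First, every token in $p$ was deposited by a transition $t'$ whose conflict set started at some $s'\le S(\sigma)$, whence $\mu(\sigma)_p - S(\sigma)\le \td(t')\le H$; together with the flooring at $0$ this gives $\nu(\sigma)\in\Hbot^P$. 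Second, every conflict set $C$ enabled at $\supp{\mu(\sigma)}$ has starting time $s(C)\ge S(\sigma)$, so the maximum $\max_{p\in\preset{C}}\mu(\sigma)_p$ is attained at a place whose offset is $\ge 0$ (not floored), and hence $\max_{p\in\preset{C}}\nu(\sigma)_p = s(C)-S(\sigma)$ exactly.

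This last identity yields \eqref{prop1} at once: minimizing $\max_{p\in\preset{C}}\nu(\sigma)_p$ over conflict sets is the same as minimizing $s(C)$, which is the definition of $\gamma(\sigma)$. It also shows that the tokens floored to $0$ are exactly those dominated in every enabled conflict set, so discarding their true value is harmless. For \eqref{prop2} I compute the new frontier $S(\sigma t)=S(\sigma)+\delta$ with $\delta\defeq\max_{p\in\preset{t}}\nu(\sigma)_p$ (using that, by the start-time lemma and confusion-freeness, the firing time $\max_{p\in\preset{t}}\mu(\sigma)_p$ equals $s(C)$), and check place by place that the new offsets are $\td(t)$ on $\postset{t}$, $\bot$ on $\preset{t}\setminus\postset{t}$, and $\max(0,\nu(\sigma)_p-\delta)$ elsewhere---each expressible from $\nu(\sigma)$ and $t$ alone, which defines $f$.

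Finally, for \eqref{prop3} I note that the place carrying $\max_p\mu(\sigma)_p=\time(\sigma)$ is never floored, so $\max_p\nu(\sigma)_p=\time(\sigma)-S(\sigma)$; it therefore suffices to prove $S(\sigma)=\sum_{k=0}^{n-1} r(\nu(t_1\cdots t_k))$. This is a telescoping sum: each step contributes $S(t_1\cdots t_{k+1})-S(t_1\cdots t_k)=\delta_k=\max_{p\in\preset{C_{k+1}}}\nu_p=\min_{C}\max_{p\in\preset{C}}\nu_p=r(\nu(t_1\cdots t_k))$, where the third equality holds because $t_{k+1}$ lies in the earliest conflict set. I expect the main obstacle to be the monotonicity of the starting times and the accompanying domination argument: these are exactly the points where confusion-freeness is indispensable (it guarantees that firing a transition does not reshape the conflict sets of concurrent transitions), and they are what make the flooring at $0$ sound and the increment $r$ a function of $\nu$ alone rather than of the fired transition.
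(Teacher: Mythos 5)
Your proposal is correct and follows essentially the same route as the paper's proof: you define $\nu(\sigma)=\mu(\sigma)\ominus\start(\sigma)$ with the same $r$ and $f$, and you rest everything on the same two pillars the paper isolates as Lemma~\ref{lem:start-time} (all transitions of a conflict set share one starting time) and Lemma~\ref{lem:earliest-schedule} (starting times are non-decreasing along $\gamma$-compatible runs, with confusion-freeness and 1-safeness doing exactly the work you identify), after which \eqref{prop1}--\eqref{prop3} follow by the same un-flooring identities and telescoping of $r$. The only cosmetic difference is that you bound $\nu(\sigma)_p\le H$ directly via the depositing transition's start time, where the paper derives it inside the induction by case analysis; both are sound.
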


Observe that, unlike $\mu$, the range of $\nu$ is finite. We call it the
\emph{finite abstraction} of $\mu$. Equation~\ref{prop1} states that $\gamma$
can be computed directly from the finite abstraction $\nu$.
Equation~\ref{prop2} shows that $\nu(\sigma t)$ can be computed from
$\nu(\sigma)$ and $t$. So $\gamma$ only needs to remember an element of
$\Hbot^P$, which implies that it only requires finite memory. Finally, observe
that the function $r$ of Equation~\ref{prop3} has a finite domain, and so it
allows us to use $\nu$ to compute the time needed by $\sigma$.

\begin{figure}
\centering%
\begin{subfigure}[t]{0.49\textwidth}
\centering%
\begin{tikzpicture}[>=latex,scale=0.8,every node/.style={scale=0.8}]
\tikzstyle{place}=[circle,thick,draw=black,fill=black!10!white,minimum size=6mm,inner sep=0.5mm]
\tikzstyle{transition}=[circle,thick,draw=black,fill=black,minimum size=1mm,inner sep=0mm]

\node [transition] at (-2.25,-2.25) (i){};
\node[left=0cm of i]{$\{\sstack{i}{0}\}$};

\node [transition] at (-0.75,-2.25) (it1){};
\node[above=0cm of it1]{\textcolor{tcolor}{$t_1$}};

\node [transition] at (0,-3) (p1p3u11){};
\node[below=0cm of p1p3u11,yshift=0em]{$\{\sstack{p_1}{1},\sstack{p_3}{1}\}$};

\node [transition] at (-1.5,-3.75) (p1p3u11t2){};
\node[left=0cm of p1p3u11t2]{\textcolor{tcolor}{$t_2$}};

\node [transition] at (1.5,-3.75) (p1p3u11t3){};
\node[right=0cm of p1p3u11t3]{\textcolor{tcolor}{$t_3$}};

\node [transition] at (-1.5,-4.5) (p1p3u51){};
\node[left=0cm of p1p3u51,yshift=0em]{$\{\sstack{p_1}{5},\sstack{p_3}{1}\}$};

\node [transition] at (-1.5,-5.25) (p1p3u51t4){};
\node[left=0cm of p1p3u51t4]{\textcolor{tcolor}{$t_4$}};

\node [transition] at (-1.5,-6) (p1p4u56){};
\node[left=0cm of p1p4u56,yshift=0.25em]{$\{\sstack{p_1}{5},\sstack{p_4}{6}\}$};

\node [transition] at (-1.5,-6.75) (p1p4u56t2){};
\node[left=0cm of p1p4u56t2]{\textcolor{tcolor}{$t_2$}};

\node [transition] at (-0.75,-6) (p1p4u56t3){};
\node[above=0cm of p1p4u56t3]{\textcolor{tcolor}{$t_3$}};

\node [transition] at (0,-6) (p2p4u76){};
\node[above=0cm of p2p4u76,xshift=0.5em]{$\{\sstack{p_2}{7},\sstack{p_4}{6}\}$};

\node [transition] at (0,-6.75) (p2p4u76t5){};
\node[left=0cm of p2p4u76t5]{\textcolor{tcolor}{$t_5$}};

\node [transition] at (1.5,-4.5) (p2p3u31){};
\node[right=0cm of p2p3u31,yshift=0em]{$\{\sstack{p_2}{3},\sstack{p_3}{1}\}$};

\node [transition] at (1.5,-5.25) (p2p3u31t4){};
\node[right=0cm of p2p3u31t4]{\textcolor{tcolor}{$t_4$}};

\node [transition] at (1.5,-6) (p2p4u36){};
\node[right=0cm of p2p4u36,yshift=0em]{$\{\sstack{p_2}{3},\sstack{p_4}{6}\}$};

\node [transition] at (1.5,-6.75) (p2p4u36t5){};
\node[right=0cm of p2p4u36t5]{\textcolor{tcolor}{$t_5$}};

\node [transition] at (-1.5,-7.5) (p1p4u96){};
\node[left=0cm of p1p4u96,yshift=0.5em]{$\{\sstack{p_1}{9},\sstack{p_4}{6}\}$};

\node [transition] at (-2.5,-8.25) (p1p4u96t2){};
\node[left=0cm of p1p4u96t2]{\textcolor{tcolor}{$t_2$}};

\node [transition] at (-0.5,-8.25) (p1p4u96t3){};
\node[right=0cm of p1p4u96t3]{\textcolor{tcolor}{$t_3$}};

\node [transition] at (-2.5,-9) (p1p4u136){};
\node[right=0cm of p1p4u136,yshift=0.5em]{$\{\sstack{p_1}{13},\sstack{p_4}{6}\}$};

\node [transition] at (-0.5,-9) (p2p4u116){};
\node[below=0cm of p2p4u116,xshift=0.5em]{$\{\sstack{p_2}{11},\sstack{p_4}{6}\}$};

\node [transition] at (0.5,-9) (p2p4u116t5){};
\node[above=0cm of p2p4u116t5]{\textcolor{tcolor}{$t_5$}};

\node [place] at (1.5,-7.5) (ou9){\sreward{9}};
\node[right=0cm of ou9]{$\{\sstack{o}{9}\}$};

\node [place] at (0,-7.5) (ou10){\sreward{10}};
\node[right=0cm of ou10]{$\{\sstack{o}{10}\}$};

\node [place] at (1.5,-9) (ou14){\sreward{14}};
\node[right=0cm of ou14]{$\{\sstack{o}{14}\}$};

\node at (-3.0,-9.75) (p1p4u136t2){$\dots$};
\node[below=0cm of p1p4u136t2]{$\phantom{t_2}$};

\node at (-1.75,-9.75) (p1p4u136t3){$\dots$};
\node[below=0cm of p1p4u136t3]{$\phantom{t_3}$};

\draw[->] (i) -- (it1);
\draw[->] (it1) -| (p1p3u11);

\draw[->] (p1p3u11) -|  node[above=0.0cm,xshift=2em]{$\tweight{\nicefrac{1}{5}}$} (p1p3u11t2);
\draw[->] (p1p3u11) -|  node[above=0.0cm,xshift=-2em]{$\tweight{\nicefrac{4}{5}}$} (p1p3u11t3);

\draw[->] (p1p3u11t2) -| (p1p3u51);
\draw[->] (p1p3u11t3) -| (p2p3u31);

\draw[->] (p1p3u51) -- (p1p3u51t4);
\draw[->] (p1p3u51t4) -- (p1p4u56);

\draw[->] (p1p4u56) --  node[left=0.0cm]{$\tweight{\nicefrac{1}{5}}$} (p1p4u56t2);
\draw[->] (p1p4u56) --  node[below=0.0cm]{$\tweight{\nicefrac{4}{5}}$} (p1p4u56t3);
\draw[->] (p1p4u56t2) -- (p1p4u96);
\draw[->] (p1p4u56t3) -- (p2p4u76);

\draw[->] (p2p4u76) -- (p2p4u76t5);
\draw[->] (p2p4u76t5) -- (ou10);

\draw[->] (p2p3u31) -- (p2p3u31t4);
\draw[->] (p2p3u31t4) -- (p2p4u36);

\draw[->] (p2p4u36) -- (p2p4u36t5);
\draw[->] (p2p4u36t5) -- (ou9);

\draw[->] (p1p4u96) --  node[above left=-0.15cm and -0.05cm]{$\tweight{\nicefrac{1}{5}}$} (p1p4u96t2);
\draw[->] (p1p4u96) --  node[above right=-0.15cm and -0.05cm]{$\tweight{\nicefrac{4}{5}}$} (p1p4u96t3);
\draw[->] (p1p4u96t2) -- (p1p4u136);
\draw[->] (p1p4u96t3) -- (p2p4u116);

\draw[->] (p1p4u136) --  node[left=0.05cm]{$\tweight{\nicefrac{1}{5}}$} (p1p4u136t2);
\draw[->] (p1p4u136) --  node[right=0.05cm]{$\tweight{\nicefrac{4}{5}}$} (p1p4u136t3);

\draw[->] (p2p4u116) -- (p2p4u116t5);
\draw[->] (p2p4u116t5) -- (ou14);

\draw[->] (ou10) edge[loop below] (ou9);
\draw[->] (ou9) edge[loop below] (ou10);
\draw[->] (ou14) edge[loop below] (ou14);

\end{tikzpicture}
\caption{Infinite MC for scheduler using $\mu(\sigma)$, with final states labeled by $\time(\sigma)$.}\label{fig:scheduler-mc-infinite}%
\end{subfigure}\hfill%
\begin{subfigure}[t]{0.49\textwidth}
\centering%
\begin{tikzpicture}[>=latex,scale=0.8,every node/.style={scale=0.8}]
\tikzstyle{place}=[circle,thick,draw=black,fill=black!10!white,minimum size=1mm,inner sep=0.5mm]
\tikzstyle{transition}=[circle,thick,draw=black,fill=black,minimum size=1mm,inner sep=0mm]

\node [place] at (-2.25,-2.25) (i){\sreward{0}};
\node[left=0cm of i]{$\{\sstack{i}{0}\}$};

\node [transition] at (-0.75,-2.25) (it1){};
\node[above=0cm of it1]{\textcolor{tcolor}{$t_1$}};

\node [place] at (0,-3) (p1p3u11){\sreward{1}};
\node[below=0cm of p1p3u11,yshift=0em]{$\{\sstack{p_1}{1},\sstack{p_3}{1}\}$};

\node [transition] at (-1.5,-3.75) (p1p3u11t2){};
\node[left=0cm of p1p3u11t2]{\textcolor{tcolor}{$t_2$}};

\node [transition] at (1.5,-3.75) (p1p3u11t3){};
\node[right=0cm of p1p3u11t3]{\textcolor{tcolor}{$t_3$}};

\node [place] at (-1.5,-4.5) (p1p3u51){\sreward{0}};
\node[left=0cm of p1p3u51,yshift=0em]{$\{\sstack{p_1}{4},\sstack{p_3}{0}\}$};

\node [transition] at (-1.5,-5.25) (p1p3u51t4){};
\node[left=0cm of p1p3u51t4]{\textcolor{tcolor}{$t_4$}};

\node [place] at (-1.5,-6) (p1p4u56){\sreward{4}};
\node[left=0cm of p1p4u56,yshift=0.25em]{$\{\sstack{p_1}{4},\sstack{p_4}{5}\}$};

\node [transition] at (-1.5,-6.75) (p1p4u56t2){};
\node[left=0cm of p1p4u56t2]{\textcolor{tcolor}{$t_2$}};

\node [transition] at (-0.75,-6) (p1p4u56t3){};
\node[above=0cm of p1p4u56t3]{\textcolor{tcolor}{$t_3$}};

\node [place] at (0,-6) (p2p4u76){\sreward{2}};
\node[above=0cm of p2p4u76,xshift=0.5em]{$\{\sstack{p_2}{2},\sstack{p_4}{1}\}$};

\node [transition] at (0,-6.75) (p2p4u76t5){};
\node[left=0cm of p2p4u76t5]{\textcolor{tcolor}{$t_5$}};

\node [place] at (1.5,-4.5) (p2p3u31){\sreward{0}};
\node[right=0cm of p2p3u31,yshift=0em]{$\{\sstack{p_2}{2},\sstack{p_3}{0}\}$};

\node [transition] at (1.5,-5.25) (p2p3u31t4){};
\node[right=0cm of p2p3u31t4]{\textcolor{tcolor}{$t_4$}};

\node [place] at (1.5,-6) (p2p4u36){\sreward{5}};
\node[right=0cm of p2p4u36,yshift=0em]{$\{\sstack{p_2}{2},\sstack{p_4}{5}\}$};

\node [transition] at (1.5,-6.75) (p2p4u36t5){};
\node[right=0cm of p2p4u36t5]{\textcolor{tcolor}{$t_5$}};

\node [place] at (-1.5,-7.5) (p1p4u96){\sreward{4}};
\node[left=0cm of p1p4u96,yshift=0.5em]{$\{\sstack{p_1}{4},\sstack{p_4}{1}\}$};

\node [transition] at (-2.5,-8.25) (p1p4u96t2){};
\node[left=0cm of p1p4u96t2]{\textcolor{tcolor}{$t_2$}};

\node [transition] at (-0.5,-8.25) (p1p4u96t3){};
\node[right=0cm of p1p4u96t3]{\textcolor{tcolor}{$t_3$}};

\node [place] at (-2.5,-9) (p1p4u136){\sreward{4}};
\node[right=0cm of p1p4u136,yshift=0.75em]{$\{\sstack{p_1}{4},\sstack{p_4}{0}\}$};

\node [place] at (-0.5,-9) (p2p4u126){\sreward{2}};
\node[below=0cm of p2p4u126,xshift=1.0em]{$\{\sstack{p_2}{2},\sstack{p_4}{0}\}$};

\node [transition] at (0.5,-9) (p2p4u126t5){};
\node[above=0cm of p2p4u126t5]{\textcolor{tcolor}{$t_5$}};

\node [place] at (1.5,-9) (ou15){\sreward{3}};
\node[right=0cm of ou15]{$\{\sstack{o}{3}\}$};

\node [transition] at (-2.5,-9.75) (p1p4u136t2){};
\node[below=0cm of p1p4u136t2]{\textcolor{tcolor}{$t_2$}};

\node [transition] at (-1.5,-9.75) (p1p4u136t3){};
\node[below=0cm of p1p4u136t3]{\textcolor{tcolor}{$t_3$}};

\draw[->] (i) -- (it1);
\draw[->] (it1) -| (p1p3u11);

\draw[->] (p1p3u11) -|  node[above=0.0cm,xshift=2em]{$\tweight{\nicefrac{1}{5}}$} (p1p3u11t2);
\draw[->] (p1p3u11) -|  node[above=0.0cm,xshift=-2em]{$\tweight{\nicefrac{4}{5}}$} (p1p3u11t3);

\draw[->] (p1p3u11t2) -| (p1p3u51);
\draw[->] (p1p3u11t3) -| (p2p3u31);

\draw[->] (p1p3u51) -- (p1p3u51t4);
\draw[->] (p1p3u51t4) -- (p1p4u56);

\draw[->] (p1p4u56) --  node[left=0.0cm]{$\tweight{\nicefrac{1}{5}}$} (p1p4u56t2);
\draw[->] (p1p4u56) --  node[below=0.0cm]{$\tweight{\nicefrac{4}{5}}$} (p1p4u56t3);
\draw[->] (p1p4u56t2) -- (p1p4u96);
\draw[->] (p1p4u56t3) -- (p2p4u76);

\draw[->] (p2p4u76) -- (p2p4u76t5);
\draw[->] (p2p4u76t5) -- (ou15);

\draw[->] (p2p3u31) -- (p2p3u31t4);
\draw[->] (p2p3u31t4) -- (p2p4u36);

\draw[->] (p2p4u36) -- (p2p4u36t5);
\draw[->] (p2p4u36t5) -- (ou15);

\draw[->] (p1p4u96) --  node[above left=-0.15cm and -0.05cm]{$\tweight{\nicefrac{1}{5}}$} (p1p4u96t2);
\draw[->] (p1p4u96) --  node[above right=-0.15cm and -0.05cm]{$\tweight{\nicefrac{4}{5}}$} (p1p4u96t3);
\draw[->] (p1p4u96t2) -- (p1p4u136);
\draw[->] (p1p4u96t3) -- (p2p4u126);

\draw[->] (p1p4u136) --  node[left=0.0]{$\tweight{\nicefrac{1}{5}}$} (p1p4u136t2);
\draw[->] (p1p4u136) --  node[above right=-0.15cm]{$\tweight{\nicefrac{4}{5}}$} (p1p4u136t3);
\draw[->] (p1p4u136t2) edge[post, bend left=100, looseness=2.0] (p1p4u136);
\draw[->] (p1p4u136t3) -- (p2p4u126);

\draw[->] (p2p4u126) -- (p2p4u126t5);
\draw[->] (p2p4u126t5) -- (ou15);

\draw[->] (ou15) edge[loop below] (ou15);

\end{tikzpicture}
\caption{Finite MC for scheduler using $\nu(\sigma)$, with states labeled by rewards $r(\nu(\sigma))$.}\label{fig:scheduler-mc-finite}%
\end{subfigure}%
\caption{Two Markov chains for the ``earliest-first'' scheduler.}\label{fig:scheduler-mc}%
\end{figure}

The formal definition of the functions $\nu$, $f$, and $r$ is given below,
together with the definition of the auxiliary operator $\ominus \colon \Nbot^P
\times \N \to \Nbot^P$:
\begin{align*}
    \left(\vec{x} \ominus n\right)_p &\defeq
    \begin{cases}
        \max(\vec{x}_p - n, 0) & \text{if }\vec{x}_p \neq \bot \\
        \bot & \text{if }\vec{x}_p = \bot \\
    \end{cases} &
    f(\vec{x}, t) &\defeq upd(\vec{x}, t) \ominus \max_{p \in \preset{t}} \vec{x}_p \\
    \nu(\epsilon) &\defeq \mu(\epsilon) \text{ and }
    \nu(\sigma t) \defeq \mu(\sigma t) \ominus \max_{p \in \preset{t}} \mu(\sigma)_p  &
    r(\vec{x}) &\defeq \min_{C \in \mathcal{C}(\supp{\vec{x}})}\max_{p \in \preset{C}} \vec{x}_p
\end{align*}

\begin{example}
Fig.~\ref{fig:scheduler-mc-finite} shows the finite-state Markov chain induced
by the ``earliest-first'' scheduler computed using the abstraction $\nu$.
Consider the firing sequence $t_1t_3$. We have $\mu(t_1t_3) = \left\{
\sstack{p_2}{3},\sstack{p_3}{1} \right\}$, i.e.~the tokens in $p_2$ and $p_3$
arrive at times $3$ and $1$, respectively. Now we compute $\nu(t_1t_3)$, which
corresponds to the \emph{local} arrival times of the tokens, i.e.~the time
elapsed \emph{since the last transition starts to fire until the token
arrives}. Transition $t_3$ starts to fire at time $1$, and so the local arrival
times of the tokens in $p_2$ and $p_3$ are $2$ and $0$, respectively, i.e.~we
have $\nu(t_1t_3) = \left\{ \sstack{p_2}{2},\sstack{p_3}{0} \right\}$. Using
these local times we compute the local starting time of the conflict sets
enabled at $\{p_2, p_3\}$. The scheduler always chooses the conflict set with
earliest local starting time. In Fig.~\ref{fig:scheduler-mc-finite} the
earliest local starting time of the state reached by firing $\sigma$, which is
denoted $r(\nu(\sigma))$, is written in blue inside the state. The theorem
above shows that this scheduler always chooses the same conflict sets as the
one which uses the function $\mu$, and that the time of a sequence can be
obtained by adding the local starting times. This allows us to consider the
earliest local starting time of a state as a \emph{reward} associated to the
state; then, the time taken by a sequence is equal to the sum of the rewards
along the corresponding path of the chain. For example, we have $\time(t_1 t_2
t_4 t_3 t_5) = 0 + 1 + 0 + 4 + 2 + 3 = 10$.

Finally, let us see how $\nu(\sigma t)$ is computed from $\nu(\sigma)$ for
$\sigma = t_1 t_2 t_4$ and $t=t_2$. We have $\nu(\sigma) = \left\{
\sstack{p_1}{4},\sstack{p_4}{5} \right\}$, i.e.~the local arrival times for
the tokens in $p_1$ and $p_4$ are $4$ and $5$, respectively. Now $\{t_2,t_3\}$
is scheduled next, with local starting time $r(\nu(\sigma)) = \nu(\sigma)_{p_1}
= 4$. If $t_2$ fires, then, since $\tau(t_2)=4$, we first add $4$ to the time
of $p_1$, obtaining $\left\{ \sstack{p_1}{8},\sstack{p_4}{5} \right\}$. Second,
we subtract $4$ from \emph{all} times, to obtain the time elapsed since $t_2$
started to fire (for local times the origin of time changes every time a
transition fires), yielding the final result $\nu(\sigma t_2)=\left\{
\sstack{p_1}{4},\sstack{p_4}{1} \right\}$.
\end{example}

\subsection{Computation in the probabilistic case}\label{subsec:computation-probabilistic}

Given a TPWN and its corresponding MDP, in the previous section we have defined
a finite-state earliest-first scheduler and a reward function of its induced
Markov chain. The reward function has the following property: the execution
time of a firing sequence compatible with the scheduler is equal to the sum of
the rewards of the states visited along it. From the theory of Markov chains
with rewards, it follows that the expected accumulated reward until reaching a
certain state, provided that this state is reached with probability $1$, can be
computed by solving a linear equation system. We use this result to compute
the expected time $ET_W$.

Let $\W$ be a sound TPWN. For every firing sequence $\sigma$ compatible with
the earliest-first scheduler $\gamma$, the finite-state Markov chain induced by
$\gamma$ contains a state $\vec{x} = \nu(\sigma) \in \Hbot^P$. Let
$C_{\vec{x}}$ be the conflict set scheduled by $\gamma$ at $\vec{x}$.  We
define a system of linear equations with variables $X_{\vec{x}}$, one for each
state $\vec{x}$:
\begin{equation}
    \begin{aligned}
        X_{\vec{x}} &= r(\vec{x}) + \sum_{t \in C_{\vec{x}}}
                \frac{w(t)}{w(C_{\vec{x}})} \cdot X_{f(\vec{x}, t )}
        & \text{ if }\supp{\vec{x}} \neq \mO \\
        X_{\vec{x}} &= \max_{p \in P} \vec{x}_p
        & \text{if }\supp{\vec{x}} = \mO
    \end{aligned}
\label{eqn:lineq}
\end{equation}
The solution of the system is the expected reward of a path leading from $\mI$
to $\mO$. By the theory of Markov chains with rewards/costs (\cite{Baier2008},
Chapter 10.5), we have:

\begin{lemma}
Let $\W$ be a sound TPWN. Then the system of linear equations~(\ref{eqn:lineq})
has a unique solution $\vec{X}$, and $ET_\W = \vec{X}_{\nu(\epsilon)}$.
\label{lem:exp-solution-lineq}
\end{lemma}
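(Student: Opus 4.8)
The plan is to reduce Lemma~\ref{lem:exp-solution-lineq} to the standard theory of expected accumulated rewards in finite Markov chains, as cited (\cite{Baier2008}, Chapter 10.5). The finite-state Markov chain $\mathcal{D}$ induced by the earliest-first scheduler $\gamma$ has state space a subset of $\Hbot^P$, with the state $\vec{x} = \nu(\sigma)$ reached by firing a compatible sequence $\sigma$. By Equation~\ref{prop2} of Theorem~\ref{thm:finitememory}, the successor $\nu(\sigma t) = f(\vec{x}, t)$ depends only on $\vec{x}$ and $t$, so $\mathcal{D}$ is genuinely finite-state and its transition probabilities are well-defined: from a non-final state $\vec{x}$ the chain moves to $f(\vec{x}, t)$ with probability $w(t)/w(C_{\vec{x}})$ for each $t \in C_{\vec{x}}$, exactly matching the right-hand side of system~(\ref{eqn:lineq}). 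Assigning reward $r(\vec{x})$ to each state, I would invoke the standard result: if the absorbing target (here $\mO$) is reached with probability $1$ from the initial state, then the expected accumulated reward satisfies the Bellman-type system~(\ref{eqn:lineq}) and this system has a unique solution.

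**First I would** argue that $\mO$ is reached with probability $1$ under $\gamma$. Since $\W$ is sound and $\gamma$ is a particular scheduler, Theorem~\ref{thm:expected-time}(2) gives that $ET^\gamma_\W = ET_\W$ is finite, which already forces the target to be reached almost surely (a positive probability of never reaching $\mO$ would contribute $\infty$ to the expected time in Definition~\ref{def:MDPreward}). This reachability-with-probability-one is the hypothesis needed to apply the cited Markov-chain-with-rewards theorem, yielding both existence and uniqueness of $\vec{X}$, together with the identification of $\vec{X}_{\nu(\epsilon)}$ as the expected accumulated reward from the initial state $\nu(\epsilon) = \mu(\epsilon)$.

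**The second, and genuinely substantive, step** is to show that this expected accumulated reward equals $ET_\W$. Here I would use Equation~\ref{prop3} of Theorem~\ref{thm:finitememory}: for a run $\sigma = t_1 \ldots t_n$ compatible with $\gamma$ we have $\time(\sigma) = \max_{p \in P} \nu(\sigma)_p + \sum_{k=0}^{n-1} r(\nu(t_1 \ldots t_k))$. The summands $r(\nu(t_1 \ldots t_k))$ are precisely the state rewards accumulated along the path in $\mathcal{D}$, while the boundary term $\max_{p \in P} \nu(\sigma)_p$ is exactly the reward assigned to the final state in the second line of~(\ref{eqn:lineq}). Thus $\time(\sigma)$ coincides with the total reward collected along the corresponding path of $\mathcal{D}$ up to and including absorption in $\mO$. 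Taking expectations over all runs and using that $\mO$ is reached with probability $1$, the expected time $ET^\gamma_\W = \sum_\sigma \time(\sigma)\cdot \mathit{Prob}^\gamma(\sigma)$ equals the expected accumulated reward, which is $\vec{X}_{\nu(\epsilon)}$ by the cited theorem. Combining with $ET^\gamma_\W = ET_\W$ from Theorem~\ref{thm:expected-time}(1) finishes the proof.

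**The main obstacle** I anticipate is the careful bookkeeping needed to match paths of the abstract finite chain $\mathcal{D}$ with runs of the TPWN, and in particular to ensure that the final boundary term in Equation~\ref{prop3} is correctly absorbed into the $\supp{\vec{x}} = \mO$ case of~(\ref{eqn:lineq}). One must check that distinct runs $\sigma$ reaching $\mO$ map to distinct absorbing paths with matching probabilities, and that the per-step reward in the cited Markov-chain framework (reward collected \emph{on entering} versus \emph{on leaving} a state) is aligned with the indexing $\sum_{k=0}^{n-1} r(\nu(t_1 \ldots t_k))$ in Equation~\ref{prop3}. Once this alignment is pinned down, the rest is a direct application of the standard linear-equation characterization of expected accumulated rewards.
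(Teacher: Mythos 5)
Your proposal is correct and follows essentially the same route as the paper, which justifies this lemma simply by citing the standard theory of Markov chains with rewards (\cite{Baier2008}, Chapter 10.5) after the setup of Theorem~\ref{thm:finitememory} (Equations~\ref{prop2} and~\ref{prop3} giving a well-defined finite chain whose accumulated rewards equal $\time(\sigma)$) and scheduler-independence from Theorem~\ref{thm:expected-time}; your derivation of probability-one absorption in $\mO$ from soundness and your transfer of the expected reward to $ET_\W$ via Theorem~\ref{thm:expected-time}(1) is exactly the intended argument. If anything, your write-up is more explicit than the paper's, which leaves the bookkeeping you flag---matching runs to paths of the abstract chain and absorbing the boundary term $\max_{p \in P}\nu(\sigma)_p$ into the $\supp{\vec{x}} = \mO$ case of~(\ref{eqn:lineq})---implicit in the citation.
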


\begin{theorem}
    Let $\W$ be a TPWN. Then $ET_\W$ is either $\infty$ or a rational number
    and can be computed in single exponential time.
\end{theorem}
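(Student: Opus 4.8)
The plan is to dispatch the dichotomy in the statement by first deciding soundness and then, in the sound case, assembling Theorem~\ref{thm:finitememory} and Lemma~\ref{lem:exp-solution-lineq} into an explicit algorithm whose cost I bound. By Theorem~\ref{thm:expected-time}(2) we have $ET_\W = \infty$ if and only if $\W$ is not sound, so the first step is a soundness test. Since $\W$ is $1$-safe, its reachability graph has at most $2^{|P|}$ nodes; I would build it by a standard worklist exploration and then check, by backward graph reachability on this graph, that $\mO$ is reachable from every reachable marking. This is polynomial in the size of the reachability graph, hence single exponential in the size of $\W$. If the test fails, the algorithm outputs $\infty$, settling that case.

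In the sound case I would construct the finite-state Markov chain of the earliest-first scheduler directly from the abstraction $\nu$. Starting from $\nu(\epsilon)$, I repeatedly use Equation~\ref{prop1} to read off the scheduled conflict set $C_{\vec{x}}$ from a state $\vec{x} \in \Hbot^P$, use the function $f$ of Equation~\ref{prop2} to generate its successors $f(\vec{x}, t)$ for $t \in C_{\vec{x}}$, and label $\vec{x}$ with the reward $r(\vec{x})$. The crucial observation for the complexity is that every reachable state lies in $\Hbot^P$, so their number is at most $(H+2)^{|P|}$. This is single exponential in the size of $\W$: its logarithm $|P|\cdot\log(H+2)$ is, up to constants, the product of $|P|$ and the bit-length of $H$, both bounded by the input size and hence polynomial. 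So the chain has single-exponentially many states and can be built in single exponential time.

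Finally I would set up the linear system~(\ref{eqn:lineq}) over these states and solve it. All coefficients are rational: the transition probabilities $w(t)/w(C_{\vec{x}})$ are rational, and the constants $r(\vec{x})$ and $\max_{p} \vec{x}_p$ are naturals bounded by $H$, hence of polynomial bit-length. By Lemma~\ref{lem:exp-solution-lineq} the system has a unique solution with $ET_\W = \vec{X}_{\nu(\epsilon)}$; uniqueness together with Cramer's rule shows this solution is rational, which gives the first half of the statement. Solving by Gaussian elimination, using a fraction-free scheme such as the Bareiss algorithm to keep numerators and denominators of bit-length polynomial in the dimension, costs a number of arithmetic operations cubic in the number of states, i.e.\ single exponential. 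The main obstacle is the complexity bookkeeping rather than any new idea: I must verify that the $(H+2)^{|P|}$ bound really is single exponential (the $|P|\cdot\log(H+2)$ estimate above), and that rational arithmetic during elimination does not push the running time beyond single exponential, for which controlling the bit-length of intermediate numerators and denominators on a matrix of single-exponential dimension is the delicate point.
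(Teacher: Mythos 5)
Your proposal is correct and follows essentially the same route as the paper: decide soundness by exploring the exponentially large reachability graph, output $\infty$ in the unsound case via Theorem~\ref{thm:expected-time}(2), and otherwise invoke Lemma~\ref{lem:exp-solution-lineq} to reduce $ET_\W$ to the unique (hence rational) solution of the linear system~(\ref{eqn:lineq}), whose number of variables is bounded by $\abs{\Hbot^P}$ and therefore single exponential in the input size. Your extra care about exact rational arithmetic (Cramer's rule for rationality, Bareiss-style fraction-free elimination to control intermediate bit-lengths) merely makes explicit the bookkeeping that the paper compresses into its $\lO\left(n^2 \cdot \abs{\vec{X}}^3\right)$ bound.
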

\begin{proof}
    We assume that the input has size $n$ and all times and weights are given
    in binary notation. Testing whether $\W$ is sound can be done by
    exploration of the state space of reachable markings in time $\lO(2^n)$. If
    $\W$ is unsound, we have $ET_\W = \infty$.

    Now assume that $\W$ is sound. By Lemma~\ref{lem:exp-solution-lineq},
    $ET_\W$ is the solution to the linear equation system~(\ref{eqn:lineq}),
    which is finite and has rational coefficients, so it is a rational number.
    The number of variables $\abs{\vec{X}}$ of~(\ref{eqn:lineq}) is bounded by
    the size of $\Hbot^P$, and as $H = \max_{t \in T} \td(t)$ we have
    $\abs{\vec{X}} \le (1 + H)^{\abs{P}} \le \left(1 + 2^n\right)^n \le 2^{n^2
    + n}$. The linear equation system can be solved in time $\lO\left(n^2
    \cdot \abs{\vec{X}}^3\right)$ and therefore in time $\lO(2^{p(n)})$ for some
    polynomial $p$.
\end{proof}

\section{Lower bounds for the expected time}\label{sec:lowerbound}
We analyze the complexity of computing the expected time of a TPWN\@. Botezano
{\it et al.} show in~\cite{BVT16} that deciding if the expected time exceeds a
given bound is \NP-hard. However, their reduction produces TPWNs with weights
and times of arbitrary size. An open question is if the expected time can be
computed in polynomial time when the times (and weights) must be taken from a
finite set. We prove that this is not the case unless $\P = \NP$, even if all
times are $0$ or $1$, all weights are $1$, the workflow net is sound, acyclic
and free-choice, and the size of each conflict set is at most $2$ (resulting
only in probabilities $1$ or $\nicefrac{1}{2}$). Further, we show that even
computing an $\epsilon$-approximation is equally hard. These two results above
are a consequence of the main theorem of this section: computing the expected
time is \sharpp-hard~\cite{Valiant79}. For example, counting the number of
satisfying assignments for a boolean formula (\#SAT) is a \sharpp-complete
problem. Therefore a polynomial-time algorithm for a \sharpp-hard problem
would imply $\P = \NP$.

The problem used for the reduction is defined on PERT
networks~\cite{Elmaghraby77}, in the specialized form of \emph{two-state
stochastic PERT networks}~\cite{Hagstrom88}, described below.

\begin{definition}
    A \emph{two-state stochastic PERT network} is a tuple $\PN = (G, s, t,
    \vec{p})$, where $G = (V,E)$ is a directed acyclic graph with vertices $V$,
    representing events, and edges $E$, representing tasks, with a single
    source vertex $s$ and sink vertex $t$, and where the vector $\vec{p} \in
    \Q^E$ assigns to each edge $e \in E$ a rational probability $p_e \in
    [0,1]$. We assume that all $p_e$ are written in binary.

    Each edge $e \in E$ of $\PN$ defines a random variable $X_e$ with
    distribution $\Pr(X_e = 1) = p_e$ and $\Pr(X_e = 0) = 1-p_e$. All $X_e$ are
    assumed to be independent. The \emph{project duration} $PD$ of $\PN$ is the
    length of the longest path in the network
    \[
        PD(\PN) \defeq \max_{\pi \in \Pi} \sum_{e \in \pi} X_e
    \]
    where $\Pi$ is the set of paths from vertex $s$ to vertex $t$. As this
    defines a random variable, the \emph{expected project duration} of $\PN$ is
    then given by $\Ex(PD(\PN))$.
\end{definition}

\begin{example}
Figure~\ref{fig:reduction-pert} shows a small PERT network (without $\vec{p}$),
where the project duration depends on the paths
$\Pi = \set{ e_1 e_3 e_6, e_1 e_4 e_7, e_2 e_5 e_7}$.
\end{example}

The following problem is \sharpp-hard (from~\cite{Hagstrom88}, using the
results from~\cite{ProvanB83}):

\begin{quote}
{\bf Given}: A two-state stochastic PERT network $\PN$. \\
{\bf Compute}: The expected project duration $\Ex(PD(\PN))$.
\end{quote}

\subsubsection{First reduction: 0/1 times, arbitrary weights.}
We reduce the problem above to computing the expected time of an acyclic TPWN
with 0/1 times but arbitrary weights. Given a two-state stochastic PERT
network $\PN$, we construct a timed probabilistic workflow net $\W_\PN$ as
follows:

\begin{itemize}
    \item For each edge $e = (u,v) \in E$, add the ``gadget net'' shown in
        Figure~\ref{fig:reduction-weight-rational}.  Assign $w(t_{e,0}) =
        1-p_e$, $w(t_{e,1}) = p_e$, $\tau(t_{e,0}) = 0$, and $\tau(t_{e,1}) =
        1$.
    \item For each vertex $v \in V$, add a transition $t_v$ with arcs from each
        $[e, v]$ such that $e = (u,v) \in E$ for some $u$ and arcs to each $[v,
        e]$ such that $e = (v,w) \in E$ for some $w$.  Assign $w(t_v) = 1$ and
        $\tau(t_v) = 0$.
    \item Add the place $i$ with an arc to $t_s$ and the place $o$ with an arc
        from $t_t$.
\end{itemize}

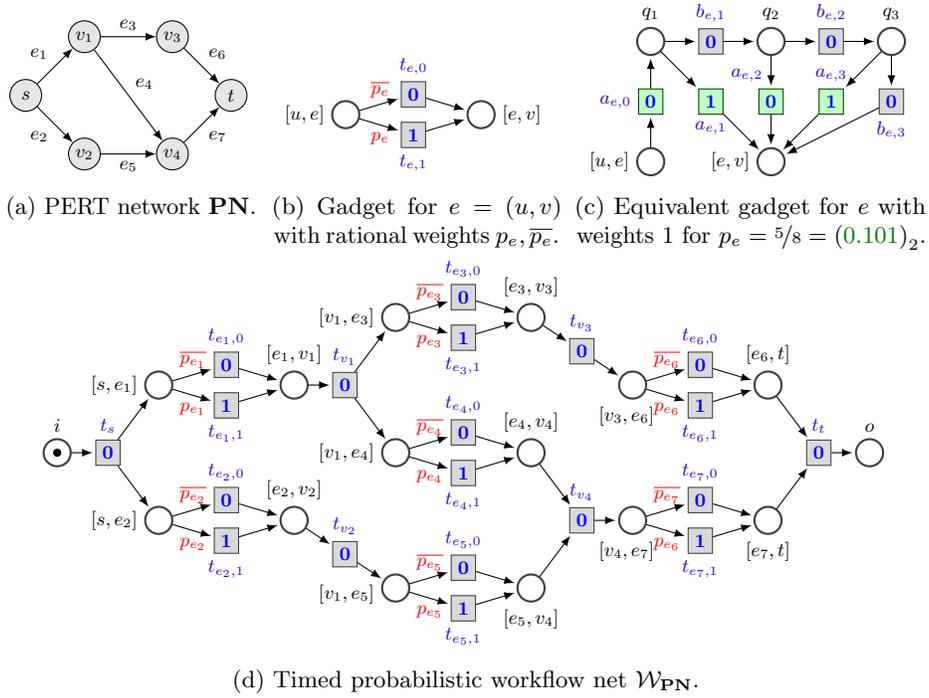
\begin{figure}[t]
\centering
\begin{subfigure}[t]{0.27\textwidth}
    \centering
    \begin{tikzpicture}[>=latex,scale=0.78,every node/.style={scale=0.8}]
    \tikzstyle{vertex}=[circle, draw, fill=black!10, inner sep=0pt, minimum size=15pt]
    \tikzstyle{weight} = [font=\small]
    \tikzstyle{edge} = [draw,thick,-]

    \node[vertex] (s)   at (0,1)   {$s$};
    \node[vertex] (v_1) at (1,2) {$v_1$};
    \node[vertex] (v_2) at (1,0) {$v_2$};
    \node[vertex] (v_3) at (2.5,2)   {$v_3$};
    \node[vertex] (v_4) at (2.5,0)   {$v_4$};
    \node[vertex] (t)   at (3.5,1) {$t$};

    \draw[->] (s) -- node[weight,above left] {$e_1$} (v_1);
    \draw[->] (s) -- node[weight,below left] {$e_2$} (v_2);
    \draw[->] (v_1) -- node[weight,above] {$e_3$} (v_3);
    \draw[->] (v_1) -- node[weight,above right] {$e_4$} (v_4);
    \draw[->] (v_2) -- node[weight,below] {$e_5$} (v_4);
    \draw[->] (v_3) -- node[weight,above right] {$e_6$} (t);
    \draw[->] (v_4) -- node[weight,below right] {$e_7$} (t);
\end{tikzpicture}
    \caption{PERT network $\PN$.}\label{fig:reduction-pert}
\end{subfigure}\hfill%
\begin{subfigure}[t]{0.31\textwidth}
    \centering
    \begin{tikzpicture}[>=latex,scale=0.9,every node/.style={scale=0.8}]
    \node[place,label={left:$[u,e]$}] (u_e) at (0,0) {};
    \node[place,label={right:$[e,v]$}] (e_v) at (2,0) {};
    \node [transition,label={above:\textcolor{tcolor}{$t_{e,0}$}}] (t_e_0) at (1,0.3) {$\ttime{0}$}
      edge [post] (e_v);
    \draw[pre] (t_e_0) -- node[above] {$\tweight{\overline{p_e}}$} (u_e);
    \node [transition,label={below:\textcolor{tcolor}{$t_{e,1}$}}] (t_e_1) at (1,-0.3) {$\ttime{1}$}
      edge [post] (e_v);
    \draw[pre] (t_e_1) -- node[below] {$\tweight{p_e}$} (u_e);
\end{tikzpicture}
    \caption{Gadget for $e = (u,v)$ with rational weights $p_e,\overline{p_e}$.}\label{fig:reduction-weight-rational}
\end{subfigure}\hfill%
\begin{subfigure}[t]{0.38\textwidth}
    \centering
    \begin{tikzpicture}[>=latex,scale=0.8,every node/.style={scale=0.8}]
    \node[place,label={left:$[u,e]$}] (u_e) at (2,-2) {};

    \node[place,label={above:$q_1$}] (q_1) at (2,0) {};
    \node[place,label={above:$q_2$}] (q_2) at (4,0) {};
    \node[place,label={above:$q_3$}] (q_3) at (6,0) {};

    \node[place,label={left:$[e,v]$}] (e_v) at (4,-2) {};

    \node [transition,marked,label={left:\textcolor{tcolor}{$a_{e,0}$}}] (a_e_0) at (2,-1) {$\ttime{0}$}
      edge [pre] (u_e)
      edge [post] (q_1);

    \node [transition,label={above:\textcolor{tcolor}{$b_{e,1}$}}] (b_e_1) at (3,0) {$\ttime{0}$}
      edge [pre] (q_1)
      edge [post] (q_2);
    \node [transition,marked,label={below:\textcolor{tcolor}{$a_{e,1}$}}] (a_e_1) at (3,-1) {$\ttime{1}$}
      edge [pre] (q_1)
      edge [post] (e_v);

    \node [transition,label={above:\textcolor{tcolor}{$b_{e,2}$}}] (b_e_2) at (5,0) {$\ttime{0}$}
      edge [pre] (q_2)
      edge [post] (q_3);
    \node [transition,marked,label={[xshift=-4mm]above:\textcolor{tcolor}{$a_{e,2}$}}] (a_e_2) at (4,-1) {$\ttime{0}$}
      edge [pre] (q_2)
      edge [post] (e_v);

    \node [transition,label={below:\textcolor{tcolor}{$b_{e,3}$}}] (b_e_3) at (6,-1) {$\ttime{0}$}
      edge [pre] (q_3)
      edge [post] (e_v);
    \node [transition,marked,label={above:\textcolor{tcolor}{$a_{e,3}$}}] (a_e_3) at (5,-1) {$\ttime{1}$}
      edge [pre] (q_3)
      edge [post] (e_v);
\end{tikzpicture}
    \caption{Equivalent gadget for $e$ with weights 1 for $p_e = \nicefrac{5}{8} = {(\textcolor{markedtext}{0.101})}_2$.}\label{fig:reduction-weight-one}
\end{subfigure}
\begin{subfigure}[t]{\textwidth}
    \centering
    \begin{tikzpicture}[>=latex,scale=0.9,every node/.style={scale=0.8}]
    \node[place,label={above:$i$},tokens=1] (i) at (0.5,1) {};

    \node[place,label={left:$[s,e_1]$}] (s_e_1) at (2,2) {};
    \node[place,label={above:$[e_1,v_1]$}] (e_1_v_1) at (4,2) {};
    \node [transition,label={above:\textcolor{tcolor}{$t_{e_1,0}$}}] (t_e_1_0) at (3,2.3) {$\ttime{0}$}
      edge [post] (e_1_v_1);
    \draw[pre] (t_e_1_0) -- node[above] {$\tweight{\overline{p_{e_1}}}$} (s_e_1);
    \node [transition,label={below:\textcolor{tcolor}{$t_{e_1,1}$}}] (t_e_1_1) at (3,1.7) {$\ttime{1}$}
      edge [post] (e_1_v_1);
    \draw[pre] (t_e_1_1) -- node[below] {$\tweight{p_{e_1}}$} (s_e_1);

    \node[place,label={left:$[s,e_2]$}] (s_e_2) at (2,0) {};
    \node[place,label={above:$[e_2,v_2]$}] (e_2_v_2) at (4,0) {};
    \node [transition,label={above:\textcolor{tcolor}{$t_{e_2,0}$}}] (t_e_2_0) at (3,0.3) {$\ttime{0}$}
      edge [post] (e_2_v_2);
    \draw[pre] (t_e_2_0) -- node[above] {$\tweight{\overline{p_{e_2}}}$} (s_e_2);
    \node [transition,label={below:\textcolor{tcolor}{$t_{e_2,1}$}}] (t_e_2_1) at (3,-0.3) {$\ttime{1}$}
      edge [post] (e_2_v_2);
    \draw[pre] (t_e_2_1) -- node[below] {$\tweight{p_{e_2}}$} (s_e_2);

    \node[place,label={left:$[v_1,e_3]$}] (v_1_e_3) at (5.5,3) {};
    \node[place,label={above:$[e_3,v_3]$}] (e_3_v_3) at (7.5,3) {};
    \node [transition,label={above:\textcolor{tcolor}{$t_{e_3,0}$}}] (t_e_3_0) at (6.5,3.3) {$\ttime{0}$}
      edge [post] (e_3_v_3);
    \draw[pre] (t_e_3_0) -- node[above] {$\tweight{\overline{p_{e_3}}}$} (v_1_e_3);
    \node [transition,label={below:\textcolor{tcolor}{$t_{e_3,1}$}}] (t_e_3_1) at (6.5,2.7) {$\ttime{1}$}
      edge [post] (e_3_v_3);
    \draw[pre] (t_e_3_1) -- node[below] {$\tweight{p_{e_3}}$} (v_1_e_3);

    \node[place,label={left:$[v_1,e_4]$}] (v_1_e_4) at (5.5,1) {};
    \node[place,label={above:$[e_4,v_4]$}] (e_4_v_4) at (7.5,1) {};
    \node [transition,label={above:\textcolor{tcolor}{$t_{e_4,0}$}}] (t_e_4_0) at (6.5,1.3) {$\ttime{0}$}
      edge [post] (e_4_v_4);
    \draw[pre] (t_e_4_0) -- node[above] {$\tweight{\overline{p_{e_4}}}$} (v_1_e_4);
    \node [transition,label={below:\textcolor{tcolor}{$t_{e_4,1}$}}] (t_e_4_1) at (6.5,0.7) {$\ttime{1}$}
      edge [post] (e_4_v_4);
    \draw[pre] (t_e_4_1) -- node[below] {$\tweight{p_{e_4}}$} (v_1_e_4);

    \node[place,label={[yshift=-1mm]left:$[v_1,e_5]$}] (v_2_e_5) at (5.5,-1) {};
    \node[place,label={below:$[e_5,v_4]$}] (e_5_v_4) at (7.5,-1) {};
    \node [transition,label={above:\textcolor{tcolor}{$t_{e_5,0}$}}] (t_e_5_0) at (6.5,-0.7) {$\ttime{0}$}
      edge [post] (e_5_v_4);
    \draw[pre] (t_e_5_0) -- node[above] {$\tweight{\overline{p_{e_5}}}$} (v_2_e_5);
    \node [transition,label={below:\textcolor{tcolor}{$t_{e_5,1}$}}] (t_e_5_1) at (6.5,-1.3) {$\ttime{1}$}
      edge [post] (e_5_v_4);
    \draw[pre] (t_e_5_1) -- node[below] {$\tweight{p_{e_5}}$} (v_2_e_5);

    \node[place,label={[xshift=-1mm]below:$[v_3,e_6]$}] (v_3_e_6) at (9,2) {};
    \node[place,label={above:$[e_6,t]$}] (e_6_t) at (11,2) {};
    \node [transition,label={above:\textcolor{tcolor}{$t_{e_6,0}$}}] (t_e_6_0) at (10,2.3) {$\ttime{0}$}
      edge [post] (e_6_t);
    \draw[pre] (t_e_6_0) -- node[above] {$\tweight{\overline{p_{e_6}}}$} (v_3_e_6);
    \node [transition,label={below:\textcolor{tcolor}{$t_{e_6,1}$}}] (t_e_6_1) at (10,1.7) {$\ttime{1}$}
      edge [post] (e_6_t);
    \draw[pre] (t_e_6_1) -- node[below] {$\tweight{p_{e_6}}$} (v_3_e_6);

    \node[place,label={[xshift=-1mm]below:$[v_4,e_7]$}] (v_4_e_7) at (9,0) {};
    \node[place,label={below:$[e_7,t]$}] (e_7_t) at (11,0) {};
    \node [transition,label={above:\textcolor{tcolor}{$t_{e_7,0}$}}] (t_e_7_0) at (10,0.3) {$\ttime{0}$}
      edge [post] (e_7_t);
    \draw[pre] (t_e_7_0) -- node[above] {$\tweight{\overline{p_{e_7}}}$} (v_4_e_7);
    \node [transition,label={below:\textcolor{tcolor}{$t_{e_7,1}$}}] (t_e_7_1) at (10,-0.3) {$\ttime{1}$}
      edge [post] (e_7_t);
    \draw[pre] (t_e_7_1) -- node[below] {$\tweight{p_{e_6}}$} (v_4_e_7);

    \node[place,label={above:$o$}] (o) at (12.5,1) {};

    \node [transition,label={above:\textcolor{tcolor}{$t_s$}}] (t_s) at (1.25,1) {$\ttime{0}$}
      edge [pre]  (i)
      edge [post] (s_e_1)
      edge [post] (s_e_2);
    \node [transition,label={above:\textcolor{tcolor}{$t_{v_1}$}}] (t_v_1) at (4.75,2) {$\ttime{0}$}
      edge [pre]  (e_1_v_1)
      edge [post] (v_1_e_3)
      edge [post] (v_1_e_4);
    \node [transition,label={above:\textcolor{tcolor}{$t_{v_2}$}}] (t_v_2) at (4.75,-0.5) {$\ttime{0}$}
      edge [pre]  (e_2_v_2)
      edge [post] (v_2_e_5);
    \node [transition,label={above:\textcolor{tcolor}{$t_{v_3}$}}] (t_v_1) at (8.25,2.5) {$\ttime{0}$}
      edge [pre]  (e_3_v_3)
      edge [post] (v_3_e_6);
    \node [transition,label={above:\textcolor{tcolor}{$t_{v_4}$}}] (t_v_4) at (8.25,0) {$\ttime{0}$}
      edge [pre]  (e_4_v_4)
      edge [pre]  (e_5_v_4)
      edge [post] (v_4_e_7);
    \node [transition,label={above:\textcolor{tcolor}{$t_t$}}] (t_t) at (11.75,1) {$\ttime{0}$}
      edge [pre]  (e_6_t)
      edge [pre]  (e_7_t)
      edge [post] (o);

\end{tikzpicture}
    \caption{Timed probabilistic workflow net $\W_\PN$.}\label{fig:reduction-net}
\end{subfigure}
\caption{A PERT network and its corresponding timed probabilistic workflow net.
    The weight $\overline{p}$ is short for $1-p$. Transitions without annotations
    have weight $1$.
}\label{fig:reduction-pert-net}
\end{figure}

The result of applying this construction to the PERT network from
Figure~\ref{fig:reduction-pert} is shown in Figure~\ref{fig:reduction-net}. It
is easy to see that this workflow net is sound, as from any reachable marking,
we can fire enabled transitions corresponding to the edges and vertices of the
PERT network in the topological order of the graph, eventually firing $t_t$ and
reaching $\mO$. The net is also acyclic and free-choice.

\begin{restatable}{lemma}{lemreductionpartone}\label{lem:reduction-part-one}
    Let $\PN$ be a two-state stochastic PERT network
    and let $\W_\PN$ be its corresponding TPWN
    by the construction above.
    Then $ET_{\W_\PN} = \Ex(PD(\PN))$.
\end{restatable}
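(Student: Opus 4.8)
The plan is to set up a probability-preserving correspondence between the runs of $\W_\PN$ and the outcomes $(x_e)_{e\in E}\in\{0,1\}^E$ of the independent variables $X_e$, and to show that the run associated with $(x_e)$ takes time exactly $\max_{\pi\in\Pi}\sum_{e\in\pi}x_e$, the realised project duration. Since $\W_\PN$ is sound, Theorem~\ref{thm:expected-time} makes $ET_{\W_\PN}$ scheduler-independent, so I may fix an arbitrary scheduler $S$ and evaluate $ET^S_{\W_\PN}$. By confusion-freeness the scheduler only fixes the order of independent events, and because the net is acyclic, strongly connected and $1$-safe, every run fires each edge transition and each vertex transition exactly once, with the only probabilistic branching occurring inside the edge gadgets.

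First I would pin down the probabilities. The net is free-choice, so the conflict set enabled once $[u,e]$ is marked is exactly $\{t_{e,0},t_{e,1}\}$, and by the MDP semantics it is resolved with $\Pr(t_{e,0})=\frac{1-p_e}{(1-p_e)+p_e}=1-p_e$ and $\Pr(t_{e,1})=p_e$. Distinct edge gadgets have disjoint presets, hence lie in different conflict sets, so their resolutions are independent. Mapping the firing of $t_{e,1}$ (resp.\ $t_{e,0}$) to $x_e=1$ (resp.\ $x_e=0$), the run realising $(x_e)$ has probability $\prod_{e\in E}\Pr(X_e=x_e)$, which matches the law of $(X_e)$.

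Second I would compute the timestamp of the token reaching $o$ for a fixed outcome $(x_e)$. Let $T_v$ denote the arrival time $\mu(\sigma)_{[v,e]}$ of the tokens produced by the vertex transition $t_v$; this is independent of the outgoing edge since $\tau(t_v)=0$. The base case is $T_s=0$, as $t_s$ consumes the token of $i$ (timestamp $0$) and has zero duration. For $e=(u,v)$ the gadget transition has preset $\{[u,e]\}$ and duration $x_e$, so the update rule gives $\mu(\sigma)_{[e,v]}=T_u+x_e$; and since $t_v$ has zero duration and preset $\{[e',v]\mid e'=(\cdot,v)\in E\}$, the maximum in the update rule yields
\begin{equation*}
    T_v=\max_{(u,v)\in E}\bigl(T_u+x_{(u,v)}\bigr).
\end{equation*}
By induction along a topological order of $G$ this is precisely the longest-path dynamic program, so $T_t=\max_{\pi\in\Pi}\sum_{e\in\pi}x_e$. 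Finally, at the final marking $\mO$ only $o$ is marked, so $\mu(\sigma)_p=\bot$ for $p\neq o$ and hence $\time(\sigma)=\max_{p\in P}\mu(\sigma)_p=\mu(\sigma)_o=T_t$.

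Combining the two ingredients gives
\begin{equation*}
    ET_{\W_\PN}=\sum_{(x_e)\in\{0,1\}^E}\Bigl(\max_{\pi\in\Pi}\sum_{e\in\pi}x_e\Bigr)\prod_{e\in E}\Pr(X_e=x_e)=\Ex(PD(\PN)),
\end{equation*}
as claimed. I expect the main obstacle to be the timestamp computation of the third step: one must verify that the maximum taken in the update rule at each vertex transition faithfully implements the longest-path recursion, carefully tracking how the edge durations are accumulated through the gadgets and confirming the base case at $t_s$. This is exactly the place where the $\max$-semantics of time (synchronisation at vertices) and the additive semantics (task durations on edges) must be shown to jointly realise the PERT project duration. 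The probabilistic bookkeeping is comparatively routine once free-choiceness is invoked, but it does rely on matching independence of the $X_e$ to disjointness of the gadgets' conflict sets, and on appealing to scheduler-independence to make the sum over outcomes well defined.
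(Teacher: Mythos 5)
Your proposal is correct and takes essentially the same route as the paper's own proof: a probability-preserving correspondence between outcomes $(x_e) \in \{0,1\}^E$ and runs (matching the product of branch probabilities in the edge gadgets to $\prod_{e \in E}\Pr(X_e = x_e)$), combined with the verification that the timestamp recursion $T_v = \max_{(u,v) \in E}\bigl(T_u + x_{(u,v)}\bigr)$ realises the longest-path equations defining $PD(\PN,\vec{x})$, so that $\time(\sigma) = \mu(\sigma)_o = PD(\PN,\vec{x})$. The only cosmetic difference is that the paper fixes a concrete scheduler following a topological order on $V \cup E$ to make the bijection with paths of the induced Markov chain explicit, whereas you invoke the scheduler-independence of Theorem~\ref{thm:expected-time} and work with an arbitrary scheduler; both are valid.
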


\subsubsection{Second reduction: 0/1 times, 0/1 weights.}
The network constructed this way already uses times $0$ and $1$, however the
weights still use arbitrary rational numbers. We now replace the gadget nets
from Figure~\ref{fig:reduction-weight-rational} by equivalent nets where all
transitions have weight $1$. The idea is to use the binary encoding of the
probabilities $p_e$, deciding if the time is $0$ or $1$ by a sequence of coin
flips. We assume that $p_e = \sum_{i=0}^k 2^{-i} p_i$ for some $k \in \N$ and
$p_i \in \set{0, 1}$ for $0 \le i \le k$. The replacement is shown in
Figure~\ref{fig:reduction-weight-one} for $p_e = \nicefrac{5}{8} = {(0.101)}_2$.

\subsubsection{Approximating the expected time is \sharpp-hard.}
We show that computing an $\epsilon$-approximation for $ET_\W$ is
\sharpp-hard~\cite{Hagstrom88,ProvanB83}.

\begin{restatable}{theorem}{thmapproximationsharpphard}\label{thm:approximation-sharpp-hard}
    The following problem is \sharpp{}-hard:
\begin{quote}
{\bf Given}:
    A sound, acyclic and free-choice TPWN $\W$ where all transitions $t$
    satisfy $w(t) = 1$, $\tau(t) \in \set{ 0, 1 }$ and
    $\abs{\postset{(\preset{t})}} \le 2$, and an $\epsilon > 0$.\\
{\bf Compute}:
    A rational $r$ such that $r - \epsilon < ET_W < r + \epsilon$.
\end{quote}
\end{restatable}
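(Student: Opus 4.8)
The plan is to give a polynomial-time Turing reduction from the problem of computing the expected project duration $\Ex(PD(\PN))$ of a two-state stochastic PERT network, which is \sharpp-hard by~\cite{Hagstrom88,ProvanB83}. Given $\PN$, I first apply the two constructions already described. The first reduction (Lemma~\ref{lem:reduction-part-one}) produces a sound, acyclic, free-choice TPWN with times in $\set{0,1}$ and expected time $\Ex(PD(\PN))$; the second reduction then replaces every rational-weight gadget of Figure~\ref{fig:reduction-weight-rational} by the unit-weight gadget of Figure~\ref{fig:reduction-weight-one}, yielding a net $\W$ that still satisfies $ET_\W = \Ex(PD(\PN))$ but now has all weights equal to $1$ and all conflict sets of size at most $2$ (so $\abs{\postset{(\preset{t})}} \le 2$, and every probability is $1$ or $\nicefrac{1}{2}$). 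A routine inspection of Figure~\ref{fig:reduction-weight-one} shows that $\W$ is still sound, acyclic and free-choice, so it meets every structural requirement of the statement. Up to this point we only have a reduction to computing $ET_\W$ \emph{exactly}; the new ingredient is to show that an $\epsilon$-approximation already pins down the exact value once $\epsilon$ is chosen well.

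The crux is a bound on the denominator of $ET_\W$. Since all weights are $1$ and every conflict set has size at most $2$, each probabilistic choice is a fair coin flip, so the probability of a run $\sigma$ equals $2^{-c(\sigma)}$, where $c(\sigma)$ is the number of binary choices resolved along $\sigma$. Because $\W$ is acyclic, runs are finite, and $c(\sigma)$ is bounded by $N \defeq \sum_{e \in E} k_e$, the total number of coin-flip transitions over all edge gadgets (the gadget for $e$ contributes at most $k_e$ flips, one per bit of $p_e$); note that $N$ is linear in the size of $\PN$. As all times lie in $\set{0,1}$, every $\time(\sigma)$ is a nonnegative integer, and therefore
\[
    ET_\W \;=\; \sum_{\sigma} \time(\sigma)\cdot 2^{-c(\sigma)} \;\in\; 2^{-N}\,\mathbb{Z},
\]
the sum ranging over the finitely many runs of $\W$, since each summand is an integer multiple of $2^{-N}$.

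I now set $\epsilon \defeq 2^{-N-1}$, whose binary representation has $O(N)$ bits, and query the assumed approximation oracle on the instance $(\W,\epsilon)$, obtaining a rational $r$ with $r - \epsilon < ET_\W < r + \epsilon$. Then $\abs{2^N r - 2^N\,ET_\W} < \nicefrac{1}{2}$, and since $2^N\,ET_\W$ is an integer, rounding $2^N r$ to the nearest integer recovers $2^N\,ET_\W$ exactly; dividing by $2^N$ yields $ET_\W = \Ex(PD(\PN))$. Computing $(\W,\epsilon)$ and performing this rounding both take polynomial time, so the reduction is polynomial and establishes \sharpp-hardness of the approximation problem.

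The main obstacle is exactly this precision argument: one must ensure that the exact value has a denominator of polynomial bit-length, so that a polynomially representable $\epsilon$ separates it from every other candidate value. This is where restricting all randomness to fair coin flips (the unit-weight, size-$2$ gadgets of the second reduction) is indispensable --- with arbitrary rational weights the least common denominator of the run probabilities could be doubly exponential, and no polynomial-size $\epsilon$ would isolate $ET_\W$. The remaining obligations, namely that the gadget of Figure~\ref{fig:reduction-weight-one} preserves $ET_\W$ and keeps the net sound, acyclic and free-choice, are routine structural checks and follow the same reasoning as for the first reduction.
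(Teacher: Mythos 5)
Your proof is correct, but it takes a genuinely different route from the paper's. The paper never reduces from the \emph{exact} computation of $\Ex(PD(\PN))$: it observes that Provan and Ball~\cite{ProvanB83} already prove \sharpp-hardness of \emph{$\epsilon$-approximating} the reliability measure $f(G,s,t;p)$, and that Hagstrom's PERT construction satisfies $f(G,s,t;p) = \Ex(PD(\PN)) - 2$; combined with $ET_{\W_\PN} = \Ex(PD(\PN))$ from Lemmas~\ref{lem:reduction-part-one} and~\ref{lem:reduction-part-two}, an $\epsilon$-approximation of $ET_{\W_\PN}$ is, after subtracting $2$, an $\epsilon$-approximation of $f(G,s,t;p)$, so approximation hardness is inherited directly with no precision analysis at all. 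You instead reduce from exact computation of $\Ex(PD(\PN))$ and recover the exact value from the approximation oracle by a denominator-bound-and-round argument: every run probability is dyadic with exponent at most $N = \sum_e k_e$, hence $ET_\W \in 2^{-N}\mathbb{Z}$, and $\epsilon \defeq 2^{-N-1}$ isolates it. This is sound --- the net is 1-safe and acyclic, so each transition fires at most once per run, giving $c(\sigma) \le N$, and Lemma~\ref{lem:charexpected} with soundness justifies the finite run-sum expression for $ET_\W$ --- and it buys self-containedness: you only need the exact-computation hardness the paper quotes as its starting point, not the approximate form of the Provan--Ball result, and you additionally show that on these instances approximation and exact computation are polynomially interreducible. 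What the paper's route buys is brevity and independence from any bound on the denominator of $ET_\W$. One side remark of yours is inaccurate, though it does not affect the proof: the unit-weight, size-$2$ gadgets are not ``indispensable'' for the precision argument. Already in the first-reduction net each $p_e$ is dyadic with $k_e$ bits and each transition fires at most once per run, so every run probability (a product of factors $p_e$ or $1-p_e$) has denominator dividing $2^N$; more generally, for an acyclic 1-safe net with rational weights in binary the common denominator has polynomial bit-length, since each run's probability divides the product of the conflict-set weight sums. The second gadget is needed to meet the theorem's syntactic constraints $w(t)=1$ and $\abs{\postset{(\preset{t})}} \le 2$, not to tame the denominators.
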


\section{Experimental evaluation}\label{sec:experiments}
We have implemented our algorithm to compute the expected time of a TPWN as a
package of the tool \prom{}\footnote{\url{http://www.promtools.org/}}. It is
available via the package manager of the latest nightly build under the package
name \texttt{WorkflowNetAnalyzer}.

We evaluated the algorithm on two different benchmarks. All experiments in
this section were run on the same machine equipped with an Intel Core i7-6700K
CPU and \SI{32}{\giga\byte} of RAM\@. We measure the actual runtime of the
algorithm, split into construction of the Markov chain and solving the linear
equation system, and exclude the time overhead due to starting \prom{} and
loading the plugin.

\subsection{IBM benchmark}

We evaluated the tool on a set of 1386 workflow nets extracted from a
collection of five libraries of industrial business processes modeled in the
IBM WebSphere Business Modeler~\cite{FFJKLVW09}. All of the 1386 nets in the
benchmark libraries are free-choice and therefore confusion-free. We selected
the sound and 1-safe nets among them, which are 642 nets. Out of these, 409
are marked graphs, i.e.~the size of any conflict set is 1. Out of the remaining
233 nets, 193 are acyclic and 40 cyclic.

As these nets do not come with probabilistic or time information, we annotated
transitions with integer weights and times chosen uniformely from different
intervals:
\begin{enumerate*}[label=(\arabic*)]
    \item $w(t) = \tau(t) = 1$,
    \item $w(t),\tau(t) \in [1,10^3]$ and
    \item $w(t),\tau(t) \in [1,10^6]$.
\end{enumerate*}
For each interval, we annotated the transitions of each net with random weights
and times, and computed the expected time of all 642 nets.

For all intervals, we computed the expected time for any net in less than
\SI{50}{\ms}. The analysis time did not differ much for different intervals.
The solving time for the linear equation system is on average $5\%$ of the
total analysis time, and at most $68\%$. The results for the nets with the
longest analysis times are given in Table~\ref{tab:ibm-select-results}. They
show that even for nets with a huge state space, thanks to the earliest-first
scheduler, only a small number of reachable markings is explored.

\begin{table}[t]
    \setlength{\tabcolsep}{3pt}
    \begin{center}
    \begin{tabular}[t]{lrrrrrrrrrrr}
        \toprule
        & \multicolumn{4}{c}{Net info \& size} & \multicolumn{3}{c}{Analysis time (\si{\ms})}  & \multicolumn{3}{c}{$\abs{\vec{X}}$}\\
        \cmidrule(r){2-5}
        \cmidrule(r){6-8}
        \cmidrule(r){9-11}
        Net & cyclic & $\abs{P}$ & $\abs{T}$ & $\abs{\reachwn[N]}$ & $[1]$ & $[10^3]$ & $[10^6]$ & [1] & $[10^3]$ & $[10^6]$ \\
        \midrule
        m1.s30\_s703 & no & 264 & 286 & 6117 & 40.3 & 44.6 & 43.8 & 304 & 347 & 347 \\
        m1.s30\_s596 & yes & 214 & 230 & 623 & 21.6 & 24.4 & 23.6 & 208 & 232 & 234 \\
        b3.s371\_s1986 & no & 235 & 101 & $2\cdot10^{17}$ & 16.8 & 16.4 & 16.5 & 101 & 102 & 102 \\
        b2.s275\_s2417 & no & 103 & 68 & 237626 & 14.2 & 17.8 & 15.9 & 355 & 460 & 431 \\
        \bottomrule
    \end{tabular}
    \end{center}
    \caption{Analysis times and size of the state space $\abs{\vec{X}}$ for the 4
        nets with the highest analysis times, given for each of the three
        intervals $[1],[10^3],[10^6]$ of possible times. Here, $\abs{\reachwn[N]}$
        denotes the number of reachable markings of the net.
    }\label{tab:ibm-select-results}
\end{table}

\subsection{Process Mining Case Study}

As a second benchmark, we evaluated the algorithm on a model of a loan
application process. We used the data from the BPI Challenge
2017~\cite{BPI2017Data}, an event log containing 31509 cases provided by a
financial institute, and took as a model of the process the final net from the
report of the winner of the academic category~\cite{BPI2017AcademicWinner}, a
simple model with high fitness and precision w.r.t.~the event log.

Using the \prom{} plugin ``Multi-perspective Process
Explorer''~\cite{DBLP:conf/bpm/MannhardtLR15} we annotated each transition with
waiting times and each transition in a conflict set with a local percentage of
traces choosing this transition when this conflict set is enabled. The net with
mean times and weights as percentages is displayed in
Fig.~\ref{fig:bpi-net}.

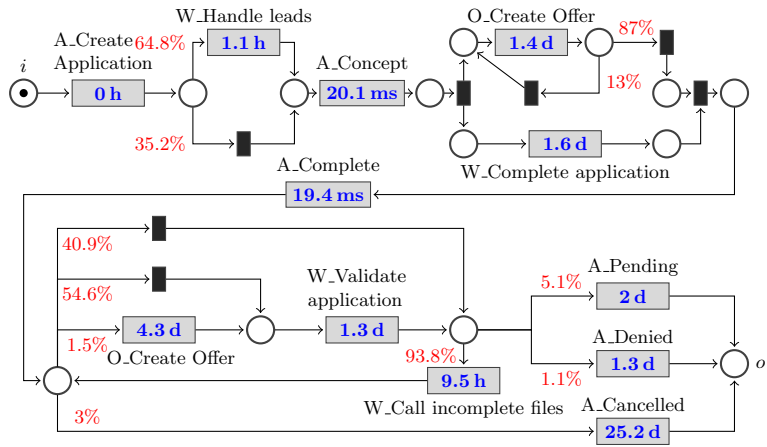
\begin{figure}
\centering
\begin{tikzpicture}[scale=0.9,every node/.style={scale=0.8}]
    \node[place,label={above:$i$},tokens=1] (i) at (-0.5,0.25) {};
    \node[place] (p1) at (2,0.25) {};
    \node[place] (p2) at (3.5,0.25) {};
    \node[place] (p3) at (5.5,0.25) {};
    \node[place] (p4) at (6,1.0) {};
    \node[place] (p5) at (6,-0.5) {};
    \node[place] (p6) at (8,1.0) {};
    \node[place] (p7) at (9,-0.5) {};
    \node[place] (p8) at (9,0.25) {};
    \node[place] (p9) at (10,0.25) {};

    \node[place] (p10) at (0,-4) {};
    \node[place] (p11) at (3,-3.25) {};
    \node[place] (p12) at (6,-3.25) {};
    \node[place,label={right:$o$}] (o) at (10,-3.75) {};

    \node [t-timed] (a-create-application) at (0.75,0.25) {\ttime{\SI{0}{\hour}}}
      edge [pre] (i)
      edge [post] (p1);
      \node [above=0mm of a-create-application,text width=1.8cm]{A\_Create Application};

      \node [t-timed,label={above:W\_Handle leads}] (w-handle-leads) at (2.75,1.0) {\ttime{\SI{1.1}{\hour}}};
      \draw[post] (w-handle-leads) -| (p2);
      \draw[pre] (w-handle-leads) -| node[left] {$\tweight{64.8\%}$} (p1);

      \node [t-immediate] (tau1) at (2.75,-0.5) {};
      \draw[post] (tau1) -| (p2);
      \draw[pre] (tau1) -| node[left] {$\tweight{35.2\%}$} (p1);

      \node [t-timed,label={above:A\_Concept}] (a-concept) at (4.5,0.25) {\ttime{\SI{20.1}{\ms}}}
      edge [pre] (p2)
      edge [post] (p3);

      \node [t-immediate] (tau2) at (6,0.25) {}
      edge [pre] (p3)
      edge [post] (p4)
      edge [post] (p5);

      \node [t-timed,label={above:O\_Create Offer}] (o-create-offer1) at (7,1.0) {\ttime{\SI{1.4}{\day}}}
      edge [pre] (p4)
      edge [post] (p6);

      \node [t-immediate] (tau3) at (7,0.25) {}
      edge [post] (p4);
      \draw[pre] (tau3) -| node[right,yshift=2mm] {$\tweight{13\%}$} (p6);

      \node [t-immediate] (tau4) at (9,1.0) {}
      edge [post] (p8);
      \draw[pre] (tau4) -- node[above] {$\tweight{87\%}$} (p6);

      \node [t-timed,label={below:W\_Complete application}] (w-complete-application) at (7.5,-0.5) {\ttime{\SI{1.6}{\day}}}
      edge [pre] (p5)
      edge [post] (p7);

      \node [t-immediate] (tau5) at (9.5,0.25) {}
      edge [pre] (p8)
      edge [post] (p9);
      \draw[pre] (tau5) |- (p7);

      \node [t-timed,label={above:A\_Complete}] (a-complete) at (4,-1.25) {\ttime{\SI{19.4}{\ms}}};
      \draw[pre] (a-complete) -| (p9);
      \draw[post] (a-complete) -- (-0.5,-1.25) |- (p10);

      \node [t-timed] (o-create-offer2) at (1.5,-3.25) {\ttime{\SI{4.3}{\day}}}
      edge [post] (p11);
      \draw[pre] (o-create-offer2) -| node[below,xshift=5mm] {$\tweight{1.5\%}$} (p10);
      \node [below=0mm of o-create-offer2,xshift=2mm] {O\_Create Offer};

      \node [t-immediate] (tau6) at (1.5,-2.5) {};
      \draw[post] (tau6) -| (p11);
      \draw[pre] (tau6) -| node[below,xshift=5mm] {$\tweight{54.6\%}$} (p10);

      \node [t-immediate] (tau7) at (1.5,-1.75) {};
      \draw[post] (tau7) -| (p12);
      \draw[pre] (tau7) -| node[below,xshift=5mm] {$\tweight{40.9\%}$} (p10);

      \node [t-timed] (w-validate-application) at (4.5,-3.25) {\ttime{\SI{1.3}{\day}}}
      edge [pre] (p11)
      edge [post] (p12);
      \node [above=0mm of w-validate-application,text width=1.8cm]{W\_Validate application};

    \node [t-timed,label={below:W\_Call incomplete files}] (w-call-incomplete-files) at (6,-4) {\ttime{\SI{9.5}{\hour}}}
      edge [post] (p10);
      \draw[pre] (w-call-incomplete-files) -- node[left] {$\tweight{93.8\%}$} (p12);

      \node [t-timed,label={above:A\_Pending}] (a-pending) at (8.5,-2.75) {\ttime{\SI{2}{\day}}};
      \draw[post] (a-pending) -| (o);
      \draw[pre] (a-pending) -| node[above,xshift=5mm] {$\tweight{5.1\%}$} (7,-3.25) -- (p12);

      \node [t-timed,label={above:A\_Denied}] (a-denied) at (8.5,-3.75) {\ttime{\SI{1.3}{\day}}}
      edge [post] (o);
      \draw[pre] (a-denied) -| node[below,xshift=5mm] {$\tweight{1.1\%}$} (7,-3.25) -- (p12);

      \node [t-timed,label={above:A\_Cancelled}] (a-cancelled) at (8.5,-4.75) {\ttime{\SI{25.2}{\day}}};
      \draw[pre] (a-cancelled) -| node[above,xshift=5mm] {$\tweight{3\%}$} (p10);
      \draw[post] (a-cancelled) -| (o);

\end{tikzpicture}
\caption{Net from~\cite{BPI2017AcademicWinner} of process for personal loan
applications in a financial institute, annotated with mean waiting times and
local trace weights. Black transitions are invisible transitions not appearing
in the event log with time 0.
}\label{fig:bpi-net}
\end{figure}

\begin{table}
    \setlength{\tabcolsep}{3pt}
    \begin{center}
        \begin{tabular}[t]{lrr@{\hspace{0.8mm}}rrrrrrrrr}
        \toprule
        & & & & & \multicolumn{3}{c}{Analysis time} \\
        \cmidrule(r){6-8}
        Distribution        & $\abs{T}$ & \multicolumn{2}{c}{$ET_W$} & $\abs{\vec{X}}$ & Total & Construction & Solving \\
        \midrule
        Deterministic            &   19 & \SI{24}{\day}&\SI{1}{\hour}  &     33 &  \SI{40}{\ms} &  \SI{18}{\ms}      & \SI{22}{\ms} \\
        \midrule
        Histogram/\SI{12}{\hour} &  141 & \SI{24}{\day}&\SI{18}{\hour} &   4054 & \SI{244}{\ms} & \SI{232}{\ms}      & \SI{12}{\ms} \\
        Histogram/\SI{6}{\hour}  &  261 & \SI{24}{\day}&\SI{21}{\hour} &  15522 & \SI{2.1}{\s}  & \SI{1.8}{\s}       & \SI{0.3}{\s}  \\
        Histogram/\SI{4}{\hour}  &  375 & \SI{24}{\day}&\SI{22}{\hour} &  34063 &  \SI{10}{\s}  & \SI{6}{\s}         & \SI{4}{\s}  \\

        Histogram/\SI{2}{\hour}  &  666 & \SI{24}{\day}&\SI{23}{\hour} & 122785 & \SI{346}{\s}  & \SI{52}{\s}        & \SI{294}{\s}  \\
        Histogram/\SI{1}{\hour}  & 1117 &              &           --- & 422614 &        ---    & \SI{12.7}{\minute} & \memout{}  \\
        \bottomrule
    \end{tabular}
    \end{center}
    \caption{Expected time, analysis time and state space size for the net in
        Fig.~\ref{fig:bpi-net} for various distributions, where \memout{}
        denotes reaching the memory limit.
    }\label{tab:bpi-results}
\end{table}

For a first analysis, we simply set the execution time of each transition
deterministically to its mean waiting time. However, note that the two
transitions ``O\_Create Offer'' and ``W\_Complete application'' are executed in
parallel, and therefore the distribution of their execution times influences
the total expected time. Therefore we also annotated these two transitions
with a histogram of possible execution times from each case. Then we split them
up into multiple transitions by grouping the times into buckets of a given
interval size, where each bucket creates a transition with an execution time
equal to the beginning of the interval, and a weight equal to the number of
cases with a waiting time contained in the interval. The times for these
transitions range from 6 milliseconds to 31 days. As bucket sizes we chose
$12,6,4,2$ and $1$ hour(s). The net always has $14$ places and $15$ reachable
markings, but a varying number of transitions depending on the chosen bucket
size. For the net with the mean as the deterministic time and for the nets
with histograms for each bucket size, we then analyzed the expected execution
time using our algorithm.

The results are given in Table~\ref{tab:bpi-results}. They show that using the
complete distribution of times instead of only the mean can lead to much more
precise results. When the linear equation system becomes very large, the
solver time dominates the construction time of the system. This may be because
we chose to use an exact solver for sparse linear equation systems. In the
future, this could possibly be improved by using an approximative iterative
solver.

\section{Conclusion}\label{sec:conclusion}
We have shown that computing the expected time to termination of a
probabilistic workflow net in which transition firings have deterministic
durations is \sharpp{}-hard. This is the case even if the net is free-choice,
and both probabilities and times can be written down with a constant number of
bits. So, surprisingly, computing the expected time is much harder than
computing the expected cost, for which there is a polynomial
algorithm~\cite{PEVA}.

We have also presented an exponential algorithm for computing the expected time
based on earliest-first schedulers. Its performance depends crucially on the
maximal size of conflict sets that can be concurrently enabled. In the most
popular suite of industrial benchmarks this number turns out to be small. So,
very satisfactorily, the expected time of any of these benchmarks, some of
which have hundreds of transitions, can still be computed in milliseconds.

\smallskip
\noindent \textbf{Acknowledgements.} We thank Hagen V\"olzer for input on the
implementation and choice of benchmarks.

\bibliographystyle{splncs04}
\bibliography{references}

\newpage
\appendix
\section{Appendix}\label{sec:appendix}
\subsection{Additional preliminaries}
\label{app-prelim}

\medskip

\noindent \textbf{Workflow Nets.} A \emph{marking} of a workflow net is a
function $M \colon P \to \N$, representing the number of tokens in each place.
A transition $t$ is \emph{enabled} at a marking $M$ if for all $p \in
\preset{t}$, we have $M(p) \ge 1$. If $t$ is enabled at $M$, it may
\emph{occur}, leading to a marking $M'$ obtained by removing one token from
each place of $\preset{t}$ and then adding one token to each place of
$\postset{t}$. We denote this by $M \trans{t} M'$. Formally, $M'$ is defined
by $M'(p) = M(p) - 1$ if $p \in \preset{t} \setminus \postset{t}$, $M'(p) =
M(p) + 1$ if $p \in \postset{t} \setminus \preset{t}$, and $M'(p) = M(p)$
otherwise. Let $\sigma = t_1 t_2 \ldots t_n$ be a sequence of transitions.
For a marking $M_0$, $\sigma$ is a \emph{firing} or \emph{occurrence sequence}
if $M_0 \trans{t_1} M_1 \trans{t_2} \ldots \trans{t_n} M_n$ for some markings
$M_1,\ldots,M_n$. We say that $M_n$ is reachable from $M_0$ by $\sigma$ and
denote this by $M_0 \trans{\sigma} M_n$.

Fig.~\ref{fig:example1} shows four workflow nets. We use them to illustrate
the definitions of Section~\ref{sec/preliminaries}.

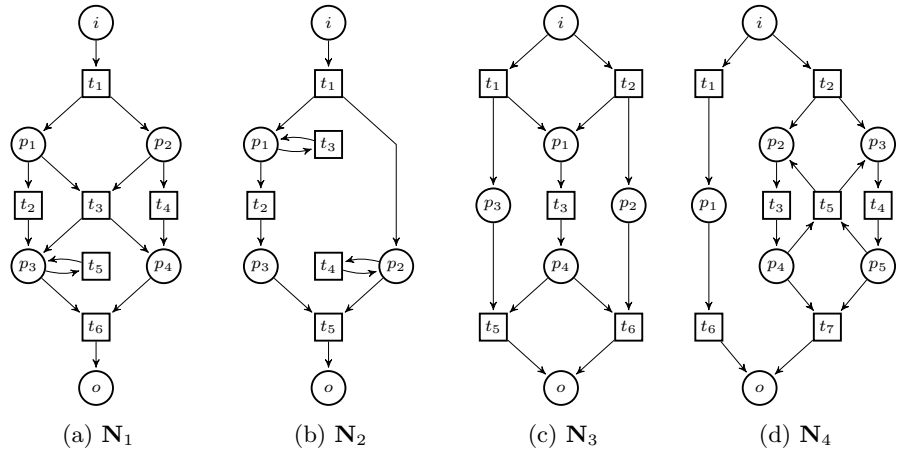
\begin{figure}
\centering
\begin{subfigure}[t]{0.24\textwidth}
    \centering
    \scalebox{0.9}{\begin{tikzpicture}[>=stealth',bend angle=45,auto]
	\tikzstyle{every node}=[font=\scriptsize]
	\tikzstyle{place}=[circle,thick,draw=black,fill=white,minimum size=5mm,inner sep=0mm]
	\tikzstyle{transition}=[rectangle,thick,draw=black,fill=white,minimum size=4mm,inner sep=0mm]
	\tikzstyle{every label}=[black]

	\node [place] at (0,0) (c0){$i$};
	\node [place] at (-1,-1.8) (c1){$p_1$};
	\node [place] at (1,-1.8) (c2){$p_2$};
    \node [place] at (-1,-3.6) (c3){$p_3$};
	\node [place] at (1,-3.6) (c4){$p_4$};
	\node [place] at (0,-5.4) (o){$o$};

	\node [transition] at (0,-0.9) (t1) {$t_1$}
	edge [pre] (c0)
	edge [post] (c1)
	edge [post] (c2);

	\node [transition] at (-1,-2.7) (t2) {$t_2$}
	edge [pre] (c1)
	edge [post] (c3);

	\node [transition] at (0,-2.7) (t3) {$t_3$}
	edge [pre] (c1)
    edge [pre] (c2)
    edge [post] (c3)
	edge [post] (c4);

    \node [transition] at (1,-2.7) (t4) {$t_4$}
	edge [pre] (c2)
	edge [post] (c4);

    \node [transition] at (0,-3.6) (t5) {$t_5$}
	edge [pre,bend left=15] (c3)
	edge [post,bend right=15] (c3);

    \node [transition] at (0,-4.5) (t6) {$t_6$}
	edge [pre] (c3)
    edge [pre] (c4)
	edge [post] (o);
\end{tikzpicture}}
    \caption{$\Net_1$}\label{fig:example1-1}%
\end{subfigure}\hfill%
\begin{subfigure}[t]{0.24\textwidth}
    \centering
    \scalebox{0.9}{\begin{tikzpicture}[>=stealth',bend angle=45,auto]
	\tikzstyle{every node}=[font=\scriptsize]
	\tikzstyle{place}=[circle,thick,draw=black,fill=white,minimum size=5mm,inner sep=0mm]
	\tikzstyle{transition}=[rectangle,thick,draw=black,fill=white,minimum size=4mm,inner sep=0mm]
	\tikzstyle{every label}=[black]

	\node [place] at (0,0) (c0){$i$};
	\node [place] at (-1,-1.8) (c1){$p_1$};
    \node [place] at (1,-3.6) (c2){$p_2$};
	\node [place] at (-1,-3.6) (c4){$p_3$};
	\node [place] at (0,-5.4) (o){$o$};

	\node [transition] at (0,-0.9) (t1) {$t_1$}
	edge [pre] (c0)
	edge [post] (c1);
    \draw[post] (t1) -- (1,-1.8) -- (c2);

	\node [transition] at (-1,-2.7) (t2) {$t_2$}
	edge [pre] (c1)
	edge [post] (c3);

	\node [transition] at (0,-1.8) (t3) {$t_3$}
	edge [pre, bend left=15] (c1)
	edge [post, bend right=15] (c1);

	\node [transition] at (0.0,-3.6) (t4) {$t_4$}
	edge [pre, bend left=15] (c2)
	edge [post, bend right=15] (c2);

	\node [transition] at (0,-4.5) (t5) {$t_5$}
    edge [pre] (c2)
	edge [pre] (c3)
	edge [post] (o);
\end{tikzpicture}}
    \caption{$\Net_2$}\label{fig:example1-2}%
\end{subfigure}\hfill%
\begin{subfigure}[t]{0.24\textwidth}
    \centering
    \scalebox{0.9}{\begin{tikzpicture}[>=stealth',bend angle=45,auto]
	\tikzstyle{every node}=[font=\scriptsize]
	\tikzstyle{place}=[circle,thick,draw=black,fill=white,minimum size=5mm,inner sep=0mm]
	\tikzstyle{transition}=[rectangle,thick,draw=black,fill=white,minimum size=4mm,inner sep=0mm]
	\tikzstyle{every label}=[black]

	\node [place] at (0,0) (c0){$i$};
	\node [place] at (0,-1.8) (c1){$p_1$};
	\node [place] at (1,-2.7) (c2){$p_2$};
    \node [place] at (-1,-2.7) (c3){$p_3$};
	\node [place] at (0,-3.6) (c4){$p_4$};
	\node [place] at (0,-5.4) (o){$o$};

	\node [transition] at (1,-0.9) (t1) {$t_2$}
	edge [pre] (c0)
	edge [post] (c1)
	edge [post] (c2);

	\node [transition] at (-1,-0.9) (t2) {$t_1$}
	edge [pre] (c0)
    edge [post] (c1)
	edge [post] (c3);

	\node [transition] at (0,-2.7) (t4) {$t_3$}
	edge [pre] (c1)
	edge [post] (c4);

	\node [transition] at (1,-4.5) (t5) {$t_6$}
	edge [pre] (c2)
    edge [pre] (c4)
	edge [post] (o);

	\node [transition] at (-1,-4.5) (t6) {$t_5$}
	edge [pre] (c3)
	edge [pre] (c4)
	edge [post] (o);
\end{tikzpicture}}
    \caption{$\Net_3$}\label{fig:example1-3}%
\end{subfigure}\hfill%
\begin{subfigure}[t]{0.24\textwidth}
    \centering
    \scalebox{0.9}{\begin{tikzpicture}[>=stealth',bend angle=45,auto]
	\tikzstyle{every node}=[font=\scriptsize]
	\tikzstyle{place}=[circle,thick,draw=black,fill=white,minimum size=5mm,inner sep=0mm]
	\tikzstyle{transition}=[rectangle,thick,draw=black,fill=white,minimum size=4mm,inner sep=0mm]
	\tikzstyle{every label}=[black]

	\node [place] at (-.75,0) (c0){$i$};
	\node [place] at (-1.5,-2.7) (c1){$p_1$};
	\node [place] at (-0.5,-1.8) (c2){$p_2$};
	\node [place] at (1,-1.8) (c3){$p_3$};
	\node [place] at (-0.5,-3.6) (c4){$p_4$};
	\node [place] at (1,-3.6) (c5){$p_5$};
	\node [place] at (-.75,-5.4) (o){$o$};

	\node [transition] at (-1.5,-0.9) (t1) {$t_1$}
	edge [pre] (c0)
	edge [post] (c1);

	\node [transition] at (0.25,-0.9) (t2) {$t_2$}
	edge [pre] (c0)
	edge [post] (c2)
	edge [post] (c3);

	\node [transition] at (-0.5,-2.7) (t3) {$t_3$}
	edge [pre] (c2)
	edge [post] (c4);

	\node [transition] at (1,-2.7) (t4) {$t_4$}
	edge [pre] (c3)
	edge [post] (c5);

	\node [transition] at (0.25,-2.7) (t5) {$t_5$}
	edge [pre] (c4)
	edge [pre] (c5)
	edge [post] (c2)
	edge [post] (c3);

	\node [transition] at (-1.5,-4.5) (t6) {$t_6$}
	edge [pre] (c1)
	edge [post] (o);

	\node [transition] at (0.25,-4.5) (t7) {$t_7$}
	edge [pre] (c4)
	edge [pre] (c5)
	edge [post] (o);;
\end{tikzpicture}}
    \caption{$\Net_4$}\label{fig:example1-4}%
\end{subfigure}\hfill%
\caption{Four workflow nets.}\label{fig:example1}%
\end{figure}

\medskip

\noindent \textbf{Soundness and 1-safeness.} All workflow nets of
Fig.~\ref{fig:example1} are sound and 1-safe. Adding an arc to $\Net_4$ from
$p_4$ to $t_6$ makes it unsound, because after firing $t_1$ the final marking
cannot be reached. Adding an arc from $t_7$ to $p_1$ preserves soundness but
makes the workflow not $1$-safe.

\medskip

\noindent \textbf{Independence, concurrency, conflict. } The marking $M=\{p_1,
p_2\}$ of the workflow $\Net_1$ of Fig.~\ref{fig:example1-1} enables $t_2,
t_3, t_4$. At this marking, $t_2$ and $t_4$ are concurrent, $t_2$ and $t_3$ are
in conflict, and $t_3$ and $t_4$ are in conflict.

The runs $t_1 \, t_2 \, t_4 \, t_6$ and $t_1 \, t_4 \, t_2 \, t_6$ of $\Net_1$
are Mazurkiewicz equivalent, but they are not Mazurkiewicz equivalent to $t_1
\, t_3 \, t_6$. Transitions $t_3$ and $t_4$ of the workflow $\Net_2$ are
independent, and so all runs of $t_1 (t_3 + t_4)^* t_2 t_5$ containing the same
number of occurrences of $t_3$ and $t_4$ are Mazurkiewicz equivalent.

\medskip

\noindent \textbf{Confusion-freeness, free-choice workflows.} The conflict sets
at the marking $M=\{p_1, p_2\}$ of the workflow $\Net_1$ of
Fig.~\ref{fig:example1-1} are $C(t_2, M)=\{t_2, t_3\}$, $C(t_4, M)=\{t_3,
t_4\}$, and $C(t_3, M)=\{t_2, t_3, t_4\}$. The workflow nets $\Net_1$ and
$\Net_2$ of Figure~\ref{fig:example1} are not confusion-free. For example, in
$\Net_1$, transitions $t_2$ and $t_4$ are concurrent at $M=\{p_1, p_2\}$. We
have $C(t_2,M)= \{t_2, t_3\}$, but $C(t_2, M \setminus \preset{t_4})=C(t_2,
\{p_1\}) = \{t_2\}$.

The workflows $\Net_3$ and $\Net_4$ are confusion-free. $\Net_3$ is not
free-choice because $t_5 \in \postset{p_3} \cap \postset{p_4}$, but $t_6 \in
\postset{p_4} \setminus \postset{p_3}$. $\Net_4$ is free-choice.

\subsection{Markov Decision Processes}
\label{app-MDP}

A \emph{Markov Decision Process} (MDP) is a tuple $\mathcal{M} = (Q,q_0,{\it
Steps})$ where $Q$ is a finite set of states, $q_0\in Q$ is the initial state,
and ${\it Steps} \colon Q \ra 2^{dist(Q)}$ is the probability transition
function.

A transition from a state $q$ corresponds to first nondeterministically
choosing a probability distribution $\lambda \in {\it Steps}(q)$ and then
choosing the successor state $q'$ probabilistically according to $\lambda$.

A path is a finite or infinite non-empty sequence $\pi = q_0 \trans{\lambda_0}
q_1 \trans{\lambda_1} q_2 \ldots$ where $\lambda_i \in {\it Steps}(q_i)$ for
every $i \geq 0$. We denote by $\pi(i)$ the $i$-th state along $\pi$ (i.e.~the
state $q_i$), and by $\pi^i$ the prefix of $\pi$ ending at $\pi(i)$ (if it
exists). The last state of a finite path $\pi$ is denoted ${\it last }(\pi)$. A
{\em scheduler} is a function that maps every finite path $\pi$ to one of the
distributions of ${\it Steps}({\it last}(\pi))$.

Given a scheduler $S$, we let $\mathit{Paths}^S$ denote all infinite paths $\pi
= q_0 \trans{\lambda_0} q_1 \trans{\lambda_1} q_2 \ldots$ starting in $q_0$ and
satisfying $\lambda_{i} = S(\pi^i)$ for every $i \geq 0$. We define a
probability measure $\mathit{Prob}^S$ on $\mathit{Paths}^S$ in the usual way
using cylinder sets~\cite{KemenyDenumerable}.

\subsection{Missing proofs of Section~\ref{sec/time}: TPWNs have the same
expected time under all schedulers}
\label{sec:rewardconfusfree}

We prove Theorem~\ref{thm:expected-time}, showing that the expected time of a
TPWN is \emph{independent of the scheduler}. Intuitively, confusion-freeness
guarantees that the schedulers of a PWN only decide the order in which
concurrent transitions occur, but only from a formal point of view, without
connection with physical reality. Indeed, consider the workflow $\Net_2$ of
Figure \ref{fig:example1}. After firing $t_1$, a scheduler can choose to fire
$t_4$ arbitrarily often before choosing the conflict set $\{t_2,t_3\}$, and
eventually $\{t_4,t_5\}$. If $t_4$ has a positive time, then the expected time
is different for each scheduler.

The first part of the section gives a useful characterization of the expected
reward under a scheduler. The second part proves the theorem.

\subsubsection{A characterization of the expected reward under a scheduler}

The {\em probabilistic language} of a scheduler assigns to each finite firing
sequence $\sigma$ the probability of the cylinder of all paths of $\MDP_\W$
with $\Pi(\sigma)$, where $\Pi(\sigma)$ is the path of $\MDP_\W$ corresponding
to $\sigma$. Formally, the {\em probabilistic language $\nu_S$} of a scheduler
$S$ is the function $\nu_S \colon T^* \rightarrow \R_{\geq 0}$ defined by
$\nu_S(\sigma) = \mathit{Prob}^S(cyl^S(\Pi(\sigma)))$, where ${\it cyl}^S(\pi)$
denotes the cylinder of the infinite paths of ${\it Paths}^S$ that extend
$\pi$. A transition sequence $\sigma$ is \emph{compatible with $S$} if
$\nu_S(\sigma)>0$.

\begin{example}
Consider the scheduler $S$ for the net in Fig.~\ref{fig:example1-4} that always
chooses $\{t_3\}$ every time the marking $\{p_2, p_3\}$ is reached. Assume
$w(t_1) = 2$ and $w(t_3) = 3$. Then, for example, $\nu_S(t_2\, t_3) =
\nu_S(t_2\, t_3 \, t_4)=\nicefrac{3}{5}$, $\nu_S(t_2 \, t_4)=0$, and $\nu_S(t_1
\, t_6)= \nicefrac{2}{5}$.
\end{example}

We have the following characterization of the expected time with respect to a
scheduler $S$:

\begin{lemma}
\label{lem:charexpected}
Let $\W = (\Net ,w,\td)$ be a TPWN and let $S$ be a scheduler of $\W$.
\begin{align*}
ET^S_\W &=
\begin{cases}
\infty & \text{ if $\displaystyle \sum_{\sigma \in {\it Run}_\W} \nu_S(\sigma) < 1$ } \\
\displaystyle \sum_{\sigma \in {\it Run}_\W} \time(\sigma) \cdot \nu_S(\sigma) & \text{ otherwise}
\end{cases}
\end{align*}
\end{lemma}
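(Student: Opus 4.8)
The plan is to read $ET^S_\W = ET^S_\W(\mO)$ as the expectation of the nonnegative random variable $\pi \mapsto \time(\mO,\pi)$ against the measure $\mathit{Prob}^S$ on $\mathit{Paths}^S$ (the displayed ``sum'' being this Lebesgue integral), and to evaluate it by partitioning $\mathit{Paths}^S$ according to how, if at all, each path reaches $\mO$. The crucial structural fact I would establish first is that $\mO$ is \emph{absorbing} in $\MDP_\W$: by strong connectivity every transition has a nonempty preset, and since $o$ has no outgoing arcs no transition has $o$ in its preset, so the only marked place of $\mO$ feeds no transition and $\mO$ enables nothing; hence ${\it Steps}(\mO)$ is the single self-loop distribution. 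Two consequences follow: a path that reaches $\mO$ stays there and fires no further transitions, and no proper prefix of a run is itself a run, so for a run $\sigma$ the finite path $\Pi(\sigma)$ is the shortest prefix ending in $\mO$ of any of its extensions.

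Next I would introduce, for each run $\sigma \in {\it Run}_\W$, the event $R_\sigma \subseteq \mathit{Paths}^S$ consisting of the infinite paths whose shortest prefix reaching $\mO$ has transition sequence $\sigma$. Using absorption I would show $R_\sigma = cyl^S(\Pi(\sigma))$: an infinite path lies in $R_\sigma$ exactly when it extends $\Pi(\sigma)$. This identifies $R_\sigma$ as a measurable cylinder with $\mathit{Prob}^S(R_\sigma) = \nu_S(\sigma)$ by the definition of the probabilistic language. Distinct runs yield distinct finite prefixes, so the family $\{R_\sigma\}_{\sigma \in {\it Run}_\W}$ is pairwise disjoint (runs with $\nu_S(\sigma) = 0$ contribute null sets), and since ${\it Run}_\W \subseteq T^*$ is countable only countably many cylinders are involved. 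Writing $N \defeq \mathit{Paths}^S \setminus \bigcup_\sigma R_\sigma$ for the event ``$\mO$ is never reached'', countable additivity gives $\mathit{Prob}^S(N) = 1 - \sum_{\sigma \in {\it Run}_\W} \nu_S(\sigma)$.

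Finally I would compute the expectation by splitting along this partition. On $R_\sigma$ the integrand is the constant $\time(\mO,\pi) = \time(\sigma)$, and on $N$ it equals $\infty$ by definition; as everything is nonnegative, monotone convergence (Tonelli) lets me sum the contributions term by term to obtain
\[
ET^S_\W \;=\; \sum_{\sigma \in {\it Run}_\W} \time(\sigma) \cdot \nu_S(\sigma) \;+\; \infty \cdot \mathit{Prob}^S(N).
\]
If $\sum_\sigma \nu_S(\sigma) < 1$ then $\mathit{Prob}^S(N) > 0$ and the second term is $\infty$, so $ET^S_\W = \infty$; if $\sum_\sigma \nu_S(\sigma) = 1$ then $N$ is a null set, the second term vanishes under the convention $\infty \cdot 0 = 0$, and $ET^S_\W = \sum_{\sigma} \time(\sigma)\cdot \nu_S(\sigma)$, matching the two cases.

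The main obstacle is the measure-theoretic bookkeeping rather than any hard combinatorics: verifying that $\{R_\sigma\}$ together with $N$ is genuinely a countable measurable partition of $\mathit{Paths}^S$, that each $R_\sigma$ is \emph{precisely} the cylinder $cyl^S(\Pi(\sigma))$ (this is exactly where absorption of $\mO$ is needed), and that the $\infty$-valued part on $N$ is treated correctly, positive measure forcing $\infty$ and a null set contributing nothing.
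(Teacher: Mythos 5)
Your proposal is correct and follows essentially the same route as the paper's own proof: the paper likewise partitions $\mathit{Paths}^S$ into the paths that reach $\mO$ (grouped by their run, with probability $\sum_{\sigma \in \mathit{Run}_\W}\nu_S(\sigma)$ and constant time $\time(\sigma)$ on each cylinder) and those that never do (time $\infty$), and then performs the same two-way case split on whether that probability is $<1$ or $=1$. Your version merely makes explicit the measure-theoretic bookkeeping (absorption of $\mO$, the identification $R_\sigma = cyl^S(\Pi(\sigma))$, disjointness, and Tonelli) that the paper's three-sentence argument leaves implicit.
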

\begin{proof}
By definition, the time taken by an infinite path of $\MDP_\W$ until it
reaches a state is finite if{}f the path eventually reaches the state.
Therefore, the probability of the paths of ${\it Paths}^S$ that take finite
time is equal to $\sum_{\sigma \in {\it Run}_\W} \nu_S(\sigma)$. If
$\sum_{\sigma \in {\it Run}_\W} \nu_S(\sigma) < 1$ then the paths taking
infinite time have nonzero probability, and so $ET^S_\W = \infty$. Otherwise
they have zero probability, and $ET^S_\W = \sum_{\sigma \in {\it Run}_\W}
\time(\sigma) \cdot \nu_S(\sigma)$.
\end{proof}

\subsubsection{Independence of the scheduler}

We start the proof of Theorem~\ref{thm:expected-time} with a lemma:

\begin{lemma}
\label{lem:equivValue}
Let $S_1, S_2$ be schedulers of a 1-safe and confusion-free PWN $\W = (\Net
,w,\td)$, and let $\sigma_1, \sigma_2$ be Mazurkiewicz-equivalent runs of $\W$
compatible with $S_1$ and $S_2$, respectively. Then
$\time(\sigma_1)=\time(\sigma_2)$ and $\nu_{S_1}(\sigma_1) =
\nu_{S_2}(\sigma_2)$.
\end{lemma}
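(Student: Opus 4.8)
The plan is to reduce both equalities to a single Mazurkiewicz step and then lift to arbitrary $\equiv$ by transitivity. Since $\equiv$ is the reflexive–transitive closure of $\equiv_1$, and since (as noted in the preliminaries) every sequence Mazurkiewicz-equivalent to a firing sequence is itself a firing sequence, it suffices to treat the case $\sigma \equiv_1 \tau$ with $\sigma = \rho\,t_1 t_2\,\rho'$ and $\tau = \rho\,t_2 t_1\,\rho'$, where $t_1, t_2$ are independent; both $\sigma$ and $\tau$ are then runs, as is every intermediate sequence in a connecting chain, so all quantities below are well defined throughout. Because the two schedulers play no role beyond selecting the order of concurrent transitions, I show separately that $\time$ and a scheduler-free form of $\nu_S$ are invariant under such a swap.

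\emph{Time.} First I would record the structural fact that if $t_1, t_2$ are independent and both enabled at the $1$-safe marking $M$ reached by $\rho$, then their footprints $\preset{t_i}\cup\postset{t_i}$ are disjoint: independence gives $\preset{t_1}\cap\preset{t_2}=\emptyset$, while $1$-safeness rules out each remaining overlap, since a shared place in $\postset{t_1}\cap\preset{t_2}$, $\preset{t_1}\cap\postset{t_2}$, or $\postset{t_1}\cap\postset{t_2}$ would force a second token onto an already- or soon-to-be-marked place. As $upd(\cdot,t_i)$ reads only the timestamps of $\preset{t_i}$ and rewrites only those of $\preset{t_i}\cup\postset{t_i}$, disjoint footprints make the two updates commute, so $\mu(\rho\,t_1 t_2)=\mu(\rho\,t_2 t_1)$; applying $\rho'$ to identical timestamps yields $\mu(\sigma)=\mu(\tau)$ and hence $\time(\sigma)=\time(\tau)$.

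\emph{Probability.} The key point is that $\nu_S(\sigma)$ does not depend on $S$ for any compatible $\sigma$. Unfolding the MDP semantics, a compatible $\sigma=u_1\cdots u_n$ with markings $M_0\trans{u_1}\cdots\trans{u_n}M_n$ has $\nu_S(\sigma)=\prod_{k=1}^{n} w(u_k)/w(C_k)$, where $C_k$ is the conflict set the scheduler selects at $M_{k-1}$; but by Lemma~\ref{lem:confsetspart} the conflict sets at each marking partition its enabled transitions, so the unique conflict set giving $u_k$ positive probability is $C(u_k,M_{k-1})$, forcing $C_k=C(u_k,M_{k-1})$. Hence $\nu_S(\sigma)$ equals the purely structural product $P(\sigma)\defeq\prod_{k=1}^{n} w(u_k)/w(C(u_k,M_{k-1}))$, independent of $S$, and it suffices to show $P(\sigma)=P(\tau)$. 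Here the factors from $\rho$ are identical on both sides, and those from $\rho'$ coincide because $\rho\,t_1 t_2$ and $\rho\,t_2 t_1$ reach the same marking; after cancelling the common $w(t_1)w(t_2)$, the identity reduces to $w(C(t_1,M))\,w(C(t_2,M'))=w(C(t_2,M))\,w(C(t_1,M''))$ with $M'=(M\setminus\preset{t_1})\cup\postset{t_1}$ and $M''=(M\setminus\preset{t_2})\cup\postset{t_2}$. This is exactly where confusion-freeness enters: since $t_1,t_2$ are concurrent at $M$, its defining condition gives $C(t_2,M')=C(t_2,M)$ and $C(t_1,M'')=C(t_1,M)$, after which both sides are literally equal. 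The main obstacle is the bookkeeping in this last step — isolating precisely the two factors that differ and checking that confusion-freeness applies at the intermediate markings — whereas the time half is a routine commutation argument once the disjoint-footprint fact is in place.
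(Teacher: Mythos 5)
Your proof is correct and takes essentially the same route as the paper's: the same reduction to a single $\equiv_1$ swap, the same disjoint-footprint argument (independence plus 1-safeness) making the two timestamp updates commute for the time half, and the same two confusion-freeness identities $C(t_2,M')=C(t_2,M)$ and $C(t_1,M'')=C(t_1,M)$ for the probability half. Your explicit scheduler-free product $P(\sigma)$ is just a slightly more careful packaging of the paper's remark that, by confusion-freeness (and the partition property of conflict sets), the conflict set of each fired transition is uniquely determined, which also makes the lift through the intermediate sequences of the $\equiv_1$-chain airtight.
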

\begin{proof}
Since Mazurkiewicz equivalence is the reflexive and transitive closure of
$\equiv_1$, it suffices to prove the result for the special case
$\sigma_1 \equiv_1 \sigma_2$.

Let $\sigma_1  = \tau \,  t_1\,  t_2 \, \tau'$ and $\sigma_1  = \tau \,  t_2\,
t_1 \, \tau'$ for independent transitions $t_1$ and $t_2$, and let
\begin{align*}
\mI &\trans{\tau} M \trans{t_1} M_1 \trans {t_2} M' \trans{\tau'} \mO
& &\text{and} &
\mI &\trans{\tau} M \trans{t_2} M_2 \trans {t_1} M' \trans{\tau'} \mO.
\end{align*}

As $t_1$ and $t_2$ are independent, we have $\preset{t_1} \cap \preset{t_2} =
\emptyset$, and both $t_1$ and $t_2$ are enabled at $M$. As the net is 1-safe,
we must have $\postset{t_1} \cap \preset{t_2} = \emptyset$, $\postset{t_2} \cap
\preset{t_1} = \emptyset$ and $\postset{t_1} \cap \postset{t_2} = \emptyset$,
as otherwise $M_1$, $M_2$ or $M'$, respectively, would contain more than one
token in a place in the intersection. Together, we have $(\preset{t_1} \cup
\postset{t_1}) \cap (\preset{t_2} \cup \postset{t_2}) = \emptyset$.

We have $\time(\sigma_1)=\time(\sigma_2)$ if and only if $\time(\tau t_1 t_2) =
\time(\tau t_2 t_1)$, given by:
\begin{align*}
    \time(\tau t_1 t_2)_p &=
    \begin{cases}
        \max_{q \in \preset{t_2}} \lambda(\tau t_1)_q + \tau(t_2) & \text{if }p \in \preset{t_2} \cup \postset{t_2} \\
        \lambda(\tau t_1)_p & \text{if }p \not\in \preset{t_2} \cup \postset{t_2}
    \end{cases} \\
    \time(\tau t_2 t_1)_p &=
    \begin{cases}
        \max_{q \in \preset{t_1}} \lambda(\tau t_2)_q + \tau(t_1) & \text{if }p \in \preset{t_1} \cup \postset{t_1} \\
        \lambda(\tau t_2)_p & \text{if }p \not\in \preset{t_1} \cup \postset{t_1}
    \end{cases}
\end{align*}
For any $p \in \preset{t_1} \cup \postset{t_1}$, we have $\lambda(\tau t_2)_p =
\lambda(\tau)_p$ and $\lambda(\tau t_1)_p = \max_{q \in \preset{t_1}}
\lambda(\tau)_q$, and for any $p \in \preset{t_2} \cup \postset{t_2}$, we have
$\lambda(\tau t_1)_p = \lambda(\tau)_p$ and $\lambda(\tau t_2)_p = \max_{q \in
\preset{t_2}} \lambda(\tau)_q$.  Thus we have
\begin{align*}
    \time(\tau t_2 t_1)_p &= \time(\tau t_1 t_2)_p =
    \begin{cases}
        \max_{q \in \preset{t_1}} \lambda(\tau)_q + \tau(t_1) & \text{if }p \in \preset{t_1} \cup \postset{t_1} \\
        \max_{q \in \preset{t_2}} \lambda(\tau)_q + \tau(t_2) & \text{if }p \in \preset{t_2} \cup \postset{t_2} \\
        \lambda(\tau)_p & \text{if }p \not\in (\preset{t_1} \cup \postset{t_1}) \cup (\preset{t_2} \cup \postset{t_2})
    \end{cases}\ .
\end{align*}

We have that $\nu_{S_1}(\sigma_1) = \nu_{S_1}(\sigma_2)$ if and only if
$\nu_{S_1}(\tau t_1 t_2) = \nu_{S_2}(\tau t_1 t_2)$, as the sequences then both
proceed with $\tau'$. $\nu_{S_1}(\tau t_1 t_2)$ is the probability of executing
first $\tau$ and then $t_1 t_2$ from $M$ under scheduler $S_1$, and
$\nu_{S_2}(\tau t_2 t_1)$ is the probability of executing first $\tau$ and then
$t_2 \, t_1$ from $M$ under scheduler $S_2$. As $\W$ is confusion free, the
conflict set of each transition in $\tau$ during its execution is uniquely
determined, so we have $\nu_{S_1}(\tau) = \nu_{S_2}(\tau)$. Further we have
\begin{align*}
\nu_{S_1}(\tau t_1 t_2) &= \nu_{S_1}(\tau) \cdot \frac{w(t_1)}{w(C(t_1, M))} \cdot \frac{w(t_2)}{w(C(t_2, M_1))} \\
\nu_{S_2}(\tau t_2 t_1) &= \nu_{S_2}(\tau) \cdot \frac{w(t_2)}{w(C(t_2, M))} \cdot \frac{w(t_1)}{w(C(t_1, M_2))}\ .
\end{align*}
Observe that $M_1 = (M \setminus \preset{t_1}) \cup \postset{t_1}$ and $M_2 =
(M \setminus \preset{t_2}) \cup \postset{t_2}$. Since $\W$ is confusion-free,
we have $C(t_1, M) = C(t_1, M_2)$, and $C(t_2, M) = C(t_2, M_1)$, and so
$\nu_{S_1}(\sigma_1) = \nu_{S_2}(\sigma_2)$.
\end{proof}

In~\cite{PEVA} we proved the following result:

\begin{lemma}
\label{lem:mazur2}
Let $\W$ be a confusion-free PWN and let $S$ be a scheduler of $\W$. For every
run $\sigma$ of $\W$ there is exactly one run $\tau \equiv \sigma$ compatible
with $S$.
\end{lemma}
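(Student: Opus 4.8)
The plan is to prove the two halves of the statement separately — \emph{existence} of a compatible run Mazurkiewicz-equivalent to $\sigma$, and its \emph{uniqueness} — in both cases leaning on two structural facts about confusion-free $1$-safe nets. The first fact is an easy consequence of Lemma~\ref{lem:confsetspart}: any two distinct transitions lying in a common conflict set are \emph{dependent}. Indeed, if $a,b$ both lie in the conflict set $C(t,M)$, then since $a$ is trivially in conflict with itself ($\preset{a}\neq\emptyset$) we have $a \in C(a,M)$, and because the conflict sets partition the enabled transitions, $C(a,M)=C(t,M)$; hence $b \in C(a,M)$, i.e.\ $\preset{a}\cap\preset{b}\neq\emptyset$. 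The second fact is the defining stability clause of confusion-freeness: firing a transition concurrent with $t$ leaves the conflict set of $t$ unchanged.

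For uniqueness I would argue by contradiction. Suppose $\tau_1\neq\tau_2$ are both compatible with $S$ and both $\equiv\sigma$, and let $w$ be their longest common prefix, so $\tau_1=w\,a\,x$ and $\tau_2=w\,b\,y$ with $a\neq b$. Compatibility forces $a,b\in\gamma(w)$, a single conflict set of the marking reached after $w$, so by the consequence above $a$ and $b$ are dependent. I would then invoke the standard fact that the projection of a sequence onto the occurrences of two dependent transitions is invariant under $\equiv$ (an adjacent swap never exchanges two dependent letters, and a swap in which at most one letter lies in $\{a,b\}$ leaves the $\{a,b\}$-projection untouched). The projections of $\tau_1$ and $\tau_2$ onto $\{a,b\}$ agree on the contribution of $w$ but then continue with $a$ and with $b$ respectively, hence differ — contradicting $\tau_1\equiv\tau_2$. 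Therefore $a=b$ and $\tau_1=\tau_2$.

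For existence I would build a compatible run by a greedy ``bubbling'' construction, maintaining the invariant that some run $\equiv\sigma$ has a compatible prefix of length $k$ and increasing $k$. Given such a run $u_1\cdots u_n$, let $M_k$ be the marking after the compatible prefix and $C_k\defeq\gamma(u_1\cdots u_k)$ the scheduled conflict set. I would take $u_m$ ($m>k$) to be the \emph{first} later transition consuming a token of $\preset{C_k}$; such $m$ exists because the tokens on $\preset{C_k}$ present at $M_k$ must be consumed before reaching $\mO$ (which has no outgoing arcs). Minimality of $m$ shows none of $u_{k+1},\dots,u_{m-1}$ touches $\preset{C_k}$, so each is independent of, hence concurrent with, the transitions of $C_k$; confusion-freeness then propagates the conflict set unchanged from $M_k$ to $M_{m-1}$, whence $u_m\in C_k$, while the same minimality gives $\preset{u_m}\cap\preset{u_l}=\emptyset$ for $k<l<m$. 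Thus $u_m$ can be commuted left past $u_{m-1},\dots,u_{k+1}$ by adjacent swaps of independent transitions, yielding an $\equiv$-equivalent run whose compatible prefix now has length $k+1$. Iterating to $k=n$ produces the desired compatible run.

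The main obstacle is the existence half, concretely the step $u_m\in C_k$. This is exactly where confusion-freeness is indispensable: without it the intervening concurrent firings $u_{k+1},\dots,u_{m-1}$ could disturb the scheduled conflict set, and the first transition consuming a token of $\preset{C_k}$ need not belong to the set the scheduler committed to. I would therefore spend most of the care verifying that each intervening $u_l$ is genuinely concurrent with every transition of $C_k$, so that the clause $C(t,M)=C(t,(M\setminus\preset{u_l})\cup\postset{u_l})$ applies, and then chaining these equalities along $M_k,\dots,M_{m-1}$.
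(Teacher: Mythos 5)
Your proposal cannot be checked against an in-paper proof, because the paper does not prove Lemma~\ref{lem:mazur2} at all: it is imported from~\cite{PEVA} with only a citation (``In~\cite{PEVA} we proved the following result''). Judged on its own, your argument is correct and complete in outline. The uniqueness half is airtight: compatibility pins both $a$ and $b$ into the single conflict set $\gamma(w)$; your derivation of their dependence from Lemma~\ref{lem:confsetspart} (via $a \in C(a,M)$ and the partition property) is exactly right; and since $a,b$ are dependent, and every transition is dependent on itself, no $\equiv_1$ swap ever exchanges two letters of $\{a,b\}$, so the $\{a,b\}$-projection is an invariant of $\equiv$ and the two runs cannot be equivalent. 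The one point worth stating explicitly is that $\equiv$ preserves length, so $\tau_1 \neq \tau_2$ really does yield a first position of disagreement with two \emph{distinct} letters, rather than one run being a proper prefix of the other.

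The existence half is also sound, and you correctly isolated the crux, namely $u_m \in C_k$. Your chain goes through: minimality of $m$ gives $\preset{u_l} \cap \preset{C_k} = \emptyset$ for $k < l < m$, hence each $u_l$ is independent of every $t \in C_k$; moreover, since no intervening transition removes tokens from $\preset{C_k}$, those places stay marked and every $t \in C_k$ remains enabled at $M_{k+1},\dots,M_{m-1}$ (enabledness cannot be destroyed by adding tokens), which is the missing half of ``concurrent'' that you flagged; the confusion-freeness clause, applied with fired transition $u_l$ and observed transition $t \in C_k$, then chains $C(t,M_k) = C(t,M_{k+1}) = \dots = C(t,M_{m-1}) = C_k$; finally $u_m$ is enabled at $M_{m-1}$ and dependent on some $t \in C_k$, so $u_m \in C(t,M_{m-1}) = C_k$, and $\preset{u_m} \subseteq \preset{C_k}$ makes $u_m$ independent of all intervening $u_l$, so the bubbling swaps are legitimate $\equiv_1$ steps that leave the compatible prefix untouched and extend it by one. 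Your existence argument for $m$ is also fine: $o$ has no outgoing arcs, so $o \notin \preset{C_k}$, and the places of $\preset{C_k}$ are marked at $M_k$ but empty at $\mO$, so some later transition must consume from them. With these small points spelled out, the induction terminates at $k=n$ and the lemma follows.
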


We use it to prove that all schedulers have the same expected time. Further,
we show that the expected time of a TPWN is infinite if{}f the TPWN is sound.

\thmexpectedtime*
\begin{proof}
\noindent (1) Let $S_1$, $S_2$ be any two schedulers of $\W$. We prove
$ET^{S_1}_{\W} = ET^{S_2}_{\W}$. Let ${\cal R}_1 $ and ${\cal R}_2$ be the
sets of runs of $\W$ compatible with $S_1$ and $S_2$, respectively. By
Lemma~\ref{lem:mazur2}, there exists a map $\phi_{12} \colon {\cal R}_1
\rightarrow {\cal R}_2$ that assigns to every $\sigma \in {\cal R}_1$ the
unique run of ${\cal R}_2$ that is Mazurkiewicz equivalent to $\sigma$. We
proceed in three steps.

\subsubsection{Claim I} $\phi_{12}$ is bijective. \\
To prove injectivity, assume $\phi_{12}(\sigma_1) = \sigma_2 =
\phi_{12}(\sigma_1')$ for two different runs $\sigma_1, \sigma_1' \in {\cal
R}_1$. Then we have $\sigma_1 \equiv \sigma_2 \equiv \sigma'_1$, and so
$\sigma_2$ is Mazurkiewicz equivalent to two different runs of ${\cal R}_1$,
contradicting Lemma~\ref{lem:mazur2}. To prove surjectivity, let $\sigma_2 \in
{\cal R}_2$. By Lemma~\ref{lem:mazur2} there is a unique run $\sigma_1 \in
{\cal R}_1$ such that $\sigma_1 \equiv \sigma_2$, and so necessarily
$\phi_{12}(\sigma_1) = \sigma_2$.

\subsubsection{Claim II} $\displaystyle \sum_{\sigma \in {\it Run}_\W}
\time(\sigma) \cdot \nu_{S_1}(\sigma) = \sum_{\sigma \in {\it Run}_\W}
\time(\sigma) \cdot \nu_{S_2}(\sigma)$.

$$\begin{array}{rcll}
&   & \displaystyle \sum_{\sigma \in {\it Run}_\W} \time(\sigma) \cdot \nu_{S_1}(\sigma)  \\[0.5cm]
& = & \displaystyle \sum_{\sigma_1 \in {\cal R}_1} \time(\sigma_1) \cdot \nu_{S_1}(\sigma_1)
& \mbox{($\nu_{S_1}(\sigma)=0$ for every $\sigma \in {\it Run}_\W \setminus {\cal R}_1$)}\\[0.5cm]
& = & \displaystyle \sum_{\sigma_1 \in {\cal R}_1} r(\phi_{12}(\sigma_1))\cdot \nu_{S_1}(\phi_{12}(\sigma_1))
& \mbox{($\sigma_1 \equiv \phi_{12}(\sigma_1)$ and Lemma~\ref{lem:equivValue})}\\[0.5cm]
& = & \displaystyle \sum_{\sigma_2\in {\cal R}_2 }\time(\sigma)\cdot \nu_{S}(\sigma)
& \mbox{($\phi_{12}$ is bijective by Claim I)} \\[0.5cm]
& = & \displaystyle \sum_{\sigma \in {\it Run}_\W} \time(\sigma_2) \cdot \nu_{S_2}(\sigma)
& \mbox{($\nu_{S_2}(\sigma)=0$ for every $\sigma \in {\it Run}_\W \setminus {\cal R}_2$)}
\end{array}$$

\subsubsection{Claim III} $\displaystyle \sum_{\sigma \in {\it Run}_\W}
\nu_{S_1}(\sigma) = \sum_{\sigma \in {\it Run}_\W} \nu_{S_2}(\sigma)$.

\smallskip

\noindent This is proved by the same sequence of steps followed in Claim II.

\bigskip

\noindent We are now ready to show $ET^{S_1}_\W = ET^{S_2}_\W$. Consider two
cases.
\begin{itemize}
\item $\displaystyle \sum_{\sigma \in {\it Run}_\W} \nu_{S_1}(\sigma) < 1$.
    Then $ET^{S_1}_\W = \infty = ET^{S_2}_\W$ by Lemma~\ref{lem:charexpected}
    and Claim III.
\item $\displaystyle \sum_{\sigma \in {\it Run}_\W} \nu_{S_1}(\sigma) = 1$. By
    Lemma~\ref{lem:charexpected} and Claim II, we have $$ET^{S_1}_\W=
    \displaystyle \sum_{\sigma \in {\it Run}_\W} \time(\sigma_1) \cdot
    \nu_{S_1}(\sigma)  = \sum_{\sigma \in {\it Run}_\W}  \time(\sigma) \cdot
    \nu_{S_2}(\sigma) = ET^{S_2}_\W \ .$$
\end{itemize}

\medskip

\noindent (2) If $\W$ is unsound, then there is a firing sequence $\mI
\trans{\sigma} M$ such that $\mO$ is unreachable from $M$. So the reward of
every path extending $\Pi(\sigma)$ is infinite. Let $S$ be any scheduler such
that $\nu_S(\sigma) > 0$. Then the probability of the cylinder
$cyl^S(\Pi(\sigma))$ is nonzero. So $ET^S_\W = \infty$, and by (1) we have
$ET_\W = \infty$.

Assume now that $\W$ is sound. Let $S$ be a memoryless scheduler of $\W$, and
let $\MDP^S_\W$ be the Markov chain obtained from $\MDP_\W$ by resolving
nondeterministic choices according to $S$. Since $S$ is memoryless, $\MDP^S_\W$
has at most as many states as $\MDP_\W$, and so, in particular, it is
finite-state.

Let $\mI \trans{\sigma} M$ be a firing sequence such that $\Pi(\sigma)$ is a
path of $\MDP^S_\W$. Since $\W$ is sound, there is a firing sequence $M
\trans{\tau} \mO$. Further, since $\W$ is 1-safe, it has at most $2^n$
reachable markings, where $n$ is the number of places of $\W$, and so $\tau$
can be chosen of length at most $2^n$.

Since $\mI \trans{\sigma \, \tau} \mO$, the sequence $\sigma\, \tau$ is a run
of $\W$. By Lemma~\ref{lem:mazur2} some run of $\W$ compatible with $S$ is
Mazurkiewicz equivalent to $\sigma \, \tau$. Since $\Pi(\sigma)$ is a path of
$\MDP^S_\W$, this run can be chosen of the form $\sigma \, \tau'$. So every
state of the Markov chain $\MDP^S_\W$ is connected to the final state $\mO$ by
a path of length at most $O(2^n)$. Since the weights of the transitions of $\W$
are all positive, the probability to reach the final marking $\mO$ from any
given marking can be bounded away from zero. So the expected number of steps
until state $\mO$ is reached for the first time in $\MDP^S_\W$, i.e.~the
hitting time of $\mO$, is finite (see e.g. Chapter 1 of~\cite{norris}), and so
the expected time is finite.
\end{proof}

\subsection{Missing proofs of Section~\ref{sec:computation}: The abstraction of
the earliest-first scheduler}

We defined the earliest-first scheduler in terms of the starting time of a
conflict set $C$, defined as the earliest time at which all the transitions of
$C$ become enabled. We assumed without proof that all the transitions become
enabled at the same time. We first prove this in Lemma~\ref{lem:start-time},
and then we prove Theorem~\ref{thm:finitememory}.

\subsubsection{Earliest-first sequences}

We first define the earliest starting time $\start(\sigma)$ of the
last transition of a sequence $\sigma$ as
\begin{align*}
    \start(\epsilon) &\defeq 0 & &\text{and} &
    \start(\sigma t) &\defeq \max_{q \in \preset{t}} \mu(\sigma)_q.
\end{align*}
We show that for every transition of a conflict set, its starting time is equal
to the starting time of the conflict set, and so, in particular, all
transitions of a conflict set have the same starting time.

\begin{restatable}{lemma}{lemstarttime}\label{lem:start-time}
    Let $\W$ be a TPWN, $\sigma$ an occurrence sequence of $\W$ and $C$ a
    conflict set enabled at the marking $M$ with $\mI \trans{\sigma} M$.

    Then for all transitions $t \in C$, we have $\start(\sigma t) = \max_{q \in
    \preset{C}} \mu(\sigma)_q$. Especially, all transitions in $C$ have the
    same earliest starting time after $\sigma$.
\end{restatable}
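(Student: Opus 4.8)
The plan is to reduce the statement to showing that all transitions of $C$ share one common starting time, i.e.\ that $\max_{q \in \preset{t}} \mu(\sigma)_q$ takes the same value for every $t \in C$. Since $\preset{t} \subseteq \preset{C}$ for each $t \in C$ and $\preset{C} = \bigcup_{t \in C} \preset{t}$, the inequality $\max_{q \in \preset{t}} \mu(\sigma)_q \le \max_{q \in \preset{C}} \mu(\sigma)_q$ holds trivially, and once all these maxima agree their common value is exactly $\max_{q \in \preset{C}} \mu(\sigma)_q$, which yields both claims at once. By Lemma~\ref{lem:confsetspart} the set $C$ is a block of the partition $\mathcal{C}(M)$, so any two $t, u \in C$ are in \emph{direct} conflict, meaning $\preset{t} \cap \preset{u} \neq \emptyset$ and both are enabled at $M$; I will use this throughout.

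Suppose, for contradiction, that two transitions $t, u \in C$ satisfy $\max_{q \in \preset{t}} \mu(\sigma)_q < \max_{q \in \preset{u}} \mu(\sigma)_q =: m$, and pick $q^\ast \in \preset{u}$ with $\mu(\sigma)_{q^\ast} = m$. The proof of Lemma~\ref{lem:equivValue} in fact establishes that the whole timestamp vector $\mu(\sigma)$, and not only the scalar $\time(\sigma)$, is invariant under Mazurkiewicz equivalence. Since the arrival time of a token is nondecreasing along every causal edge, the causal order of $\sigma$ admits a linearization $\sigma' \equiv \sigma$ that fires the transitions in nondecreasing order of the arrival time of the token each of them produces, and $\mu(\sigma') = \mu(\sigma)$. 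In $\sigma'$, let $s$ be the occurrence that deposits the token residing in $q^\ast$ at $M$, chosen as the last occurrence completing the enabling of $u$. Every occurrence producing a token of $\preset{t}$ that is present at $M$ has arrival time strictly below $m$ and hence fires before $s$, so $t$ is enabled at the marking $M^-$ reached immediately before $s$; by contrast $u$ is not enabled at $M^-$, since $q^\ast$ is still unmarked there.

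It then remains to invoke confusion-freeness. First, $\preset{s} \cap \preset{t} = \emptyset$: otherwise $s$ would consume a token of $\preset{t}$, which $t$ would need reproduced after $s$, forcing one of $t$'s preset timestamps to be at least $m$ and contradicting $\max_{q \in \preset{t}} \mu(\sigma)_q < m$. Hence $s$ and $t$ are concurrent at $M^-$, so confusion-freeness gives $C(t, M^-) = C\big(t, (M^- \setminus \preset{s}) \cup \postset{s}\big)$. But $u \notin C(t, M^-)$ because $u$ is not enabled at $M^-$, whereas after firing $s$ the transition $u$ is enabled and in direct conflict with $t$, so $u$ lies in the right-hand conflict set — a contradiction. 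Thus no such pair $t, u$ exists, and all transitions of $C$ share the claimed starting time. The main obstacle is the bookkeeping around simultaneous arrivals: when several preset places of $u$ receive their tokens at the same time $m$, one must choose $s$ and order the tied occurrences in $\sigma'$ so that $u$ becomes enabled exactly at $s$ while $t$ is already enabled before it. The base case $\sigma = \epsilon$, where $\preset{C} = \{i\}$ and every timestamp is $0$, and degenerate self-loops with $q^\ast \in \preset{s} \cap \postset{s}$ are handled separately.
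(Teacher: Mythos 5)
Your argument reaches the same core contradiction as the paper --- exhibit a reachable marking at which the producer of the last-arriving token of $\preset{u}$ is concurrent with $t$, and show that firing it changes the conflict set of $t$, violating confusion-freeness --- but you get to that marking by a different route. The paper works with the given $\sigma$ directly: it splits $\sigma = \sigma_1 u \sigma_2$ at the last producer and greedily re-fires from $\sigma_2$ only those transitions whose presets avoid $\preset{u}$, building the witness marking $M_k$ by hand. You instead invoke a global re-linearization of the trace sorted by completion times, using the fact (correct, since $\td \ge 0$ makes completion times monotone along the causal order) that such a linearization exists and is Mazurkiewicz-equivalent to $\sigma$, together with pointwise invariance of the timestamp vector $\mu$ under $\equiv$. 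Both of these auxiliary facts are true, but note that the paper's Lemma~\ref{lem:equivValue} only \emph{states} invariance of the scalar $\time(\sigma)$ and of $\nu_S$; the pointwise claim is established inside its proof, so you should extract it as an explicit lemma rather than cite the statement. Your appeal to Lemma~\ref{lem:confsetspart} to get that any two members of $C$ are mutually dependent is also sound (the blocks $C(t,M)$ and $C(u,M)$ both contain $t$, hence coincide), and this step is needed --- the paper leaves it implicit. The tie-breaking worry you raise is resolvable cleanly: define $s$ as the \emph{last} occurrence in $\sigma'$ producing a final token of $\preset{u}$; monotonicity of the sorted order then forces its completion time to equal $m$, and all other final tokens of $\preset{u}$ (and, since $\max_{q \in \preset{t}}\mu(\sigma)_q < m$, all final tokens of $\preset{t}$) lie strictly before it.

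The one genuine hole is the self-loop case $q^\ast \in \preset{s} \cap \postset{s}$, which you flag but never handle, and it is not vacuous: there $q^\ast$ \emph{is} marked at $M^-$, so $u$ may be enabled at $M^-$, and your instance of confusion-freeness --- the outer equality $C(t, M^-) = C\bigl(t, (M^- \setminus \preset{s}) \cup \postset{s}\bigr)$ --- no longer yields $u \notin C(t, M^-)$, so no contradiction follows. The fix is exactly the paper's formulation: confusion-freeness gives the three-way equality $C(t, M^-) = C(t, M^- \setminus \preset{s}) = C\bigl(t, (M^- \setminus \preset{s}) \cup \postset{s}\bigr)$, and comparing the \emph{middle} marking with the last one closes the case uniformly: at $M^- \setminus \preset{s}$ the place $q^\ast$ is unmarked whether or not $q^\ast \in \preset{s}$, so $u$ is not enabled there and $u \notin C(t, M^- \setminus \preset{s})$, while $u \in C\bigl(t, (M^- \setminus \preset{s}) \cup \postset{s}\bigr)$ as you argue. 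With that one-line change your case distinction disappears entirely and the proof is complete; as written, however, the deferred case is a real gap, since the two-marking comparison you chose genuinely fails on it.
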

\begin{proof}
    All places $p \in \preset{C}$ are marked at $M$, so we have $\mu(\sigma)_p
    \ge 0$ for all $p \in \preset{C}$. As $\preset{C} = \cup_{t \in C}
    \preset{t}$, we have that
    \begin{align*}
        \max_{q \in \preset{C}} \mu(\sigma)_q
        &= \max_{t \in C, q \in \preset{t}} \mu(\sigma)_q
        = \max_{t \in C} \max_{q \in \preset{t}} \mu(\sigma)_q
        = \max_{t \in C} \start(\sigma t).
    \end{align*}
    We therefore have $\start(\sigma t) \le \max_{q \in \preset{C}}
    \mu(\sigma)_q$ for any $t \in C$, with equality for at least one $t \in C$.

    For contradiction, now assume that $\start(\sigma t_1) < \max_{q \in
    \preset{C}} \mu(\sigma)_q$ for some $t_1 \in C$. Let $t_2 \in C$ be a
    transition with $\start(\sigma t_2) = \max_{q \in \preset{C}}
    \mu(\sigma)_q$. Then $\start(\sigma t_1) < \start(\sigma t_2)$, so there
    must be $q \in \preset{t_2} \setminus \preset{t_1}$ with $0 \le \max_{p \in
    \preset{t_1}} \mu(\sigma)_p < \mu(\sigma)_q$. Let this $q$ be the last
    such place to become marked during $\sigma$, and let $u$ be the last
    transition occurring in $\sigma$ marking $q$ at timestamp $\mu(\sigma)_q$.
    We have $\sigma = \sigma_1 u \sigma_2$ with $q \in \postset{u}$ and
    $\mu(\sigma_1 u)_q = \mu(\sigma)_q$, and for any transition $u'$ in
    $\sigma_2$, we have $\postset{u'} \cap (\preset{t_2} \setminus
    \preset{t_1}) = \emptyset$.

    Let $\sigma_2 = u_1 \ldots u_k$. We now define the following sequence:
    Initially, set $\tau_0 = \sigma_1$ and $M_0$ as the marking $\mI
    \trans{\sigma_1} M_0$. Then, for $1 \le i \le k$ in order, if $u_i$ is
    enabled at $M_{i-1}$ and $\preset{u_i} \cap \preset{u} = \emptyset$, then
    we set $\tau_i = \tau_{i-1} u_i$ and obtain $M_i$ by $M_{i-1} \trans{u_i}
    M_i$.  Otherwise, set $\tau_i = \tau_{i-1}$ and $M_i = M_{i-1}$. We then
    have $\mI \trans{\sigma_1} M_0 \trans{\tau_k} M_k$.

    \subsubsection{Claim} $t_1$ and $u$ are enabled and concurrent at $M_k$,
    and $t_1$ and $t_2$ are enabled at the marking $M'$ given by
    $M_k \trans{u} M'$.

    \medskip

    We have that $u$ is enabled at $M_1$, and no transition of $\tau_k$ removes
    tokens from $\preset{u}$, thus $u$ is also enabled at $M_k$.

    If $t_1$ were not enabled at $M_k$, then it would only become enabled at
    $M$ by a sequence of transitions initially depending on $u$. However then
    $\start(\sigma t_1) \ge \mu(\sigma)_q$, which contradicts our assumption.
    If $t_1$ and $u$ were not concurrent at $M_k$, then for some place $p \in
    \preset{t_1} \cap \preset{u}$, we would have $\start(\sigma t_1) \ge
    \mu(\sigma)_p \ge \max_{q \in \preset{u}} \mu(\sigma_1)_q + \tau(u) =
    \mu(\sigma_1 u)_q = \mu(\sigma)_q > \start(\sigma t_1)$, again a
    contradiction.

    As $u$ is the last transition of $\sigma$ marking a place $q' \in
    \preset{t_2} \setminus \preset{t_1}$, it must enable $t_2$ at $M'$.  This
    concludes the proof of the claim.

    \bigskip

    As $u$ marks $q$ and the net is 1-safe, we either have $M_k(q) = 0$ or
    $M_k(q) = 1$ and $q \in \preset{u}$. With the claim above, we then have
    that $t_1$ and $u$ are concurrent at $M_k$, and $t_2 \not\in C(t_1, M_k
    \setminus \preset{u})$ but $t_2 \in C(t_1, (M_k \setminus \preset{u}) \cup
    \postset{u})$, so the net is not confusion-free, which contradicts it being
    a TPWN. Therefore the assumption $\start(\sigma t_1) < \max_{q \in
    \preset{C}} \mu(\sigma)_q$ was false, which concludes the proof of the
    lemma.
\end{proof}

\medskip

\subsubsection{Correctness of the finite abstraction}

Now we prove Theorem~\ref{thm:finitememory}. First, we show that the
earliest-first scheduler $\gamma$ actually chooses transitions in increasing
starting time.

\begin{restatable}{lemma}{lemearliestschedule}\label{lem:earliest-schedule}
    Let $\W$ be a TPWN and $\sigma = t_1 \ldots t_n$ be a firing sequence of
    $\W$ compatible with $\gamma$.  Then for every $1 \le i < n$, we have
    $\start(t_1 \ldots t_i) \le \start(t_1 \ldots t_i t_{i+1})$, i.e.~all
    transitions occur in $\sigma$ in increasing order of starting time.
\end{restatable}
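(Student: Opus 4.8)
The plan is to prove, for each fixed $i$, the single-step inequality $\start(t_1 \ldots t_i) \le \start(t_1 \ldots t_i t_{i+1})$; the statement of the lemma is just the conjunction of these consecutive inequalities, so no induction is needed. Write $\sigma_j \defeq t_1 \ldots t_j$, let $M_{i-1}, M_i$ be the markings with $\mI \trans{\sigma_{i-1}} M_{i-1} \trans{t_i} M_i$, and let $C_i \defeq \gamma(\sigma_{i-1})$ and $C_{i+1} \defeq \gamma(\sigma_i)$ be the conflict sets picked by the earliest-first scheduler (compatibility gives $t_i \in C_i$ and $t_{i+1} \in C_{i+1}$). By Lemma~\ref{lem:start-time}, $\start(\sigma_{i-1} t_i) = \max_{p \in \preset{C_i}} \mu(\sigma_{i-1})_p$, which by the definition of $\gamma$ equals $\min_{C \in \mathcal{C}(M_{i-1})} \max_{p \in \preset{C}} \mu(\sigma_{i-1})_p$; likewise $\start(\sigma_i t_{i+1}) = \max_{p \in \preset{C_{i+1}}} \mu(\sigma_i)_p$. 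So it suffices to compare these two quantities, and the argument splits on whether $\preset{C_{i+1}} \cap \postset{t_i} = \emptyset$.

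If $\preset{C_{i+1}} \cap \postset{t_i} \neq \emptyset$, choose $p$ in this intersection. By the update rule defining $\mu$, the token that $t_i$ deposits in $p$ has arrival time $\mu(\sigma_i)_p = \max_{q \in \preset{t_i}} \mu(\sigma_{i-1})_q + \tau(t_i) = \start(\sigma_i) + \tau(t_i)$, which is at least $\start(\sigma_i)$ since durations are nonnegative. As $\start(\sigma_i t_{i+1})$ is a maximum over $\preset{C_{i+1}} \ni p$, we get $\start(\sigma_i t_{i+1}) \ge \mu(\sigma_i)_p \ge \start(\sigma_i)$ at once.

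The interesting case is $\preset{C_{i+1}} \cap \postset{t_i} = \emptyset$, and here confusion-freeness is the crucial ingredient. The goal is to show that $C_{i+1}$ is \emph{already} an available conflict set at $M_{i-1}$ with exactly the same starting time; minimality of $C_i$ at $M_{i-1}$ then yields the inequality. For every $u \in C_{i+1}$ one first checks that $u$ and $t_i$ are independent: otherwise firing $t_i$ would empty a shared preset place $p \in \preset{u} \subseteq \preset{C_{i+1}}$ that is not refilled (since $p \notin \postset{t_i}$ by assumption), contradicting that $u$ is enabled at $M_i$. Hence $u$ and $t_i$ are concurrent at $M_{i-1}$, so confusion-freeness gives $C(u, M_{i-1}) = C(u, M_i)$; since conflict sets partition the enabled transitions (Lemma~\ref{lem:confsetspart}) and $u \in C_{i+1}$, both sides equal $C_{i+1}$, proving $C_{i+1} \in \mathcal{C}(M_{i-1})$. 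Moreover no place of $\preset{C_{i+1}}$ lies in $\preset{t_i} \cup \postset{t_i}$ (disjoint from $\postset{t_i}$ by assumption, and a place of $\preset{t_i}$ would be emptied and not refilled), so the update rule leaves those timestamps unchanged: $\max_{p \in \preset{C_{i+1}}} \mu(\sigma_i)_p = \max_{p \in \preset{C_{i+1}}} \mu(\sigma_{i-1})_p$. Therefore $\start(\sigma_i) = \min_{C \in \mathcal{C}(M_{i-1})} \max_{p \in \preset{C}} \mu(\sigma_{i-1})_p \le \max_{p \in \preset{C_{i+1}}} \mu(\sigma_{i-1})_p = \max_{p \in \preset{C_{i+1}}} \mu(\sigma_i)_p = \start(\sigma_i t_{i+1})$.

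I expect this last case to be the main obstacle: it amounts to ruling out that firing $t_i$ exposes a genuinely new, earlier-starting conflict set that was either unavailable at $M_{i-1}$ or available only with a different starting time. This is precisely what confusion-freeness forbids, which is why the conflict-set invariance $C(u, M_{i-1}) = C(u, M_i)$ under firing a concurrent transition, together with the partition property, is the technical heart of the proof.
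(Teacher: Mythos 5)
Your proof is correct, and while its skeleton matches the paper's (reduce to the single-step inequality, then bound $\start(\sigma_i)$ by the next conflict set's starting time using Lemma~\ref{lem:start-time} and the minimality of the earliest-first choice), the case decomposition is genuinely different. The paper peels off the last step by a trivial induction and splits at the \emph{transition} level into three cases on how $\preset{t}$ meets $\preset{u} \cup \postset{u}$: if disjoint, $t$ was already enabled at the marking $M'$ before $u$ fired, so minimality of $\gamma(\sigma')$ plus Lemma~\ref{lem:start-time} and unchanged timestamps give the bound; if $\postset{u} \cap \preset{t} \neq \emptyset$, the deposited token does it; and the residual case $\preset{u} \cap \preset{t} \neq \emptyset$ with $\postset{u} \cap \preset{t} = \emptyset$ is refuted by $1$-safety. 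You instead split at the \emph{conflict-set} level on $\preset{C_{i+1}} \cap \postset{t_i}$, and in the empty case you prove the stronger fact that the whole set $C_{i+1}$ already belonged to $\mathcal{C}(M_{i-1})$ with the same starting time, via the confusion-freeness invariance $C(u, M_{i-1}) = C(u, M_i)$ and the partition property (Lemma~\ref{lem:confsetspart}). The paper never needs this backward persistence: it only needs that $t$ is enabled at $M'$ and lies in \emph{some} conflict set of $M'$, with Lemma~\ref{lem:start-time} equalizing transition-level and conflict-set-level starting times — the invariance property is exercised only inside that lemma. So your route is somewhat heavier but makes explicit exactly why firing $t_i$ cannot expose a new, earlier-starting conflict set, which the paper leaves implicit; conversely, the paper's split is leaner and isolates the one place where $1$-safety is invoked (its Case 3), whereas in your argument $1$-safety enters silently when you conclude that emptying a shared preset place disables $u$ at $M_i$ (with multiplicities, $u$ could remain enabled). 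One small presentational gap: when you assert that $u$ and $t_i$ are \emph{concurrent} at $M_{i-1}$, independence alone is not enough — you should note that $u$ is enabled at $M_{i-1}$, which follows because $\preset{u}$ is disjoint from $\preset{t_i}$ (independence) and from $\postset{t_i}$ (case assumption), so firing $t_i$ does not change the marking on $\preset{u}$. With that sentence added, the proof is complete.
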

\begin{proof}
    We proceed by induction on the length $n$ of $\sigma$. If $n \le 1$, then
    $\sigma = \epsilon$ or $\sigma = t$ for some transition $t$, and the claim
    holds trivially.

    Now assume $n \ge 2$. Let $\sigma = \sigma' u t$ for some sequence
    $\sigma'$ and transitions $u,t$. By induction hypothesis, the claim holds
    for $\sigma' u$, so we only have to show $\start(\sigma' u) \le
    \start(\sigma' u t)$. Let $M,M'$ be markings with $\mI \trans{\sigma'} M'
    \trans{u} M$ and therefore $\mI \trans{\sigma' u} M$. We have
    \begin{align*}
        \start(\sigma' u) &= \max_{q \in \preset{u}} \mu(\sigma')_q
        & &\text{and} &
        \start(\sigma' u t) &= \max_{q \in \preset{t}} \mu(\sigma' u)_q.
    \end{align*}

    We consider three cases:
    \begin{itemize}
        \item Case 1: $(\preset{u} \cup \postset{u}) \cap \preset{t} =
            \emptyset$. Then $u$ and $t$ are independent and firing $u$ cannot
            enable $t$, so $t$ is already enabled at $M'$. As $\sigma$ is
            compatible with $\gamma$ and $u$ is also enabled at $M'$, we have
            $u \in \gamma(\sigma')$ and with the definition of $\gamma$ and
            Lemma~\ref{lem:start-time} we get
            \begin{align}
                \start(\sigma' u)
                &= \max_{q \in \preset{u}} \mu(\sigma')_q
                = \min_{C \in \mathcal{C}(M')} \max_{q \in \preset{C}} \mu(\sigma')_q
                \le \max_{q \in \preset{t}} \mu(\sigma')_q.
                \label{eqn:u-min-conflict}
            \end{align}
            By the definition of $\mu$, we have $\mu(\sigma' u)_q =
            \mu(\sigma')_q$ for any $q \not\in \preset{u} \cup \postset{u}$ and
            thus also for any $q \in \preset{t}$. Therefore
            \begin{align*}
            \start(\sigma' u)
            &\stackrel{(\ref{eqn:u-min-conflict})}{\le} \max_{q \in \preset{t}} \mu(\sigma')_q
            = \max_{q \in \preset{t}} \mu(\sigma' u)_q
            = \start(\sigma' u t).
            \end{align*}
        \item Case 2: $\postset{u} \cap \preset{t} \neq \emptyset$. Let $p \in
            \postset{u} \cap \preset{t}$.  With the definition of $\mu$ we have
            \begin{align*}
                = \start(\sigma' u)
                &= \max_{q \in \preset{u}} \mu(\sigma')_q
                \le \max_{q \in \preset{u}} \mu(\sigma')_q + \td(u) \\
                &= \mu(\sigma' u)_p
                \le \max_{q \in \preset{t}} \mu(\sigma' u)_q
                = \start(\sigma' u t).
            \end{align*}
        \item Case 3: $\postset{u} \cap \preset{t} = \emptyset$ and $\preset{u}
            \cap \preset{t} \neq \emptyset$. Let $p \in \preset{u} \cap
            \preset{t}$. As $t$ is enabled at $M$, we have $M(p) \ge 1$.
            However as $p \not\in \postset{u}$, we have $M'(p) = M(p) + 1 \ge
            2$, which contradicts our assumption that the net is 1-safe.
    \end{itemize}
\end{proof}

\thmfinitememory*
\begin{proof}
    We first give the definitions of $f$, $\nu$ and $r$, and also define
    $\gamma': \Hbot^P \to 2^T$ as the scheduler using the abstraction as used
    in~(\ref{prop1}).
\begin{align*}
    \left(\vec{x} \ominus n\right)_p &\defeq
    \begin{cases}
        \max(\vec{x}_p - n, 0) & \text{if }\vec{x}_p \neq \bot \\
        \bot & \text{if }\vec{x}_p = \bot \\
    \end{cases} &
    f(\vec{x}, t) &\defeq upd(\vec{x}, t) \ominus \max_{p \in \preset{t}} \vec{x}_p \\
    \nu(\epsilon) &\defeq \mu(\epsilon) \text{ and }
    \nu(\sigma t) \defeq \mu(\sigma t) \ominus \max_{p \in \preset{t}} \mu(\sigma)_p  &
    r(\vec{x}) &\defeq \min_{C \in \mathcal{C}(\supp{\vec{x}})}\max_{p \in \preset{C}} \vec{x}_p \\
    \gamma'(\vec{x}) &= \argmin_{C \in \mathcal{C}(\supp{\vec{x}})} \, \max_{p \in \preset{C}} \, \vec{x}_p \\
\end{align*}
From the definitions of $\nu$ and $\start$ it is easy to see that for any
sequence $\sigma$, we have $\nu(\sigma) = \mu(\sigma) \ominus \start(\sigma)$.

We prove the three propositions~(\ref{prop1}), (\ref{prop2}), (\ref{prop3}) and
the additional following claims:
\begin{align}
    \mI &\trans{\sigma} \supp{\nu(\sigma)} \label{prop4} \\
    \start(\sigma) &= \sum_{k=0}^{n-1} r(\nu(t_1 \ldots t_k)) \label{prop5} \\
    \nu(\sigma) &\in \Hbot^P \label{prop6}
\end{align}
Claim~(\ref{prop6}) shows that the range of $\nu$ is actually finite.  The
finite range of $f$ directly follows from this and proposition~(\ref{prop2}).

We proceed by induction on the length of the sequence $\sigma = t_1 \ldots
t_n$. If $n=0$, then $\sigma = \epsilon$, so we must have $\preset{t} =
\{i\}$. Further $\nu(\sigma) = \mu(\sigma)$, $\supp{\nu(\sigma)} = \mI$,
$\gamma(\sigma) = \gamma'(\nu(\sigma))$, $\time(\sigma) = \start(\sigma) = 0$
and as $\max_{p \in \preset{t}} \mu(\sigma) = 0$, we have
\begin{align*}
\nu(\sigma t) &= \mu(\sigma t) \ominus \max_{p \in \preset{t}} \mu(\sigma)
= upd(\mu(\sigma), t) = upd(\nu(\sigma), t) \ominus \max_{p \in \preset{t}} \nu(\sigma) = f(\nu(\sigma), t).
\end{align*}

Now assume $n > 0$, and let $\sigma = \sigma' u$ for some sequence $\sigma'$
and transition $u$. As $\sigma$ is compatible with $\gamma$, we have $u \in
\gamma(\sigma')$. Let $M',M$ be the markings with $\mI \trans{\sigma'} M'
\trans{u} M$. By induction hypothesis, we have $\mI \trans{\sigma'}
\supp{\nu(\sigma')} = M'$. By the definition of $f$ and $upd$, we then have
$\supp{\nu(\sigma)} = \supp{f(\nu(\sigma'), u)} = M$, so $\mI \trans{\sigma}
\supp{\nu(\sigma)}$. This shows~(\ref{prop4}).

Let $s \in \preset{u}$ be a place such that $\mu(\sigma')_s = \max_{p \in
\preset{u}} \mu(\sigma')_p$. By Lemma~\ref{lem:earliest-schedule} we have
\begin{align}
    \start(\sigma')
    \le \start(\sigma' u)
    = \start(\sigma)
    = \max_{q \in \preset{u}} \mu(\sigma')_q
    = \mu(\sigma')_s.
    \label{eqn:start-xs}
\end{align}
As $\sigma$ is compatible with $\gamma$ we have $u \in \gamma(\sigma') \in
\mathcal{C}(M')$. We can derive
\begin{align}
\begin{aligned}
    r(\nu(\sigma'))
    &= \min_{C \in \mathcal{C}(M')} \max_{q \in \preset{C}} \nu(\sigma')_q
    = \min_{C \in \mathcal{C}(M')} \max_{q \in \preset{C}} \left(\mu(\sigma') \ominus \start(\sigma')\right)_q \\
    &= \min_{C \in \mathcal{C}(M')} \max_{q \in \preset{C}} \max(
    \underbrace{\mu(\sigma')_q - \start(\sigma')}_{\ge 0\text{ for }q=s \in \preset{u}}, 0) \\
    &= \min_{C \in \mathcal{C}(M')} \max_{q \in \preset{C}} \mu(\sigma')_q - \start(\sigma') \\
    &= \max_{q \in \preset{t}} \mu(\sigma')_q - \start(\sigma')
    = \start(\sigma) - \start(\sigma').
\end{aligned}
\label{eqn:start-beta}
\end{align}
Therefore
\begin{align*}
    \sum_{k=0}^{n-1} r(\nu(t_1 \ldots t_k))
    &= r(\nu(\sigma')) + \sum_{k=0}^{n-2} r(\nu(t_1 \ldots t_k)) \\
    &\stackrel{\text{I.H.}}{=} r(\nu(\sigma')) + \start(\sigma')
    \stackrel{(\ref{eqn:start-beta})}{=} \start(\sigma') + \start(\sigma) - \start(\sigma')
    = \start(\sigma)
\end{align*}
which shows~(\ref{prop5}).

We have $\time(\sigma) = \max_{p \in P} \mu(\sigma)_p$. For some $p \in P$, we
must have $\mu(\sigma)_p = \time(\sigma) \ge \start(\sigma) \ge 0$. Further,
we have
\begin{align*}
    \max_{p \in P} \nu(\sigma)_p + \sum_{k=0}^{n-1} r(\nu(t_1 \ldots t_k))
    &= \max_{p \in P} \nu(\sigma)_p + \start(\sigma) \\
    &= \max_{p \in P} \left(\mu(\sigma) \ominus \start(\sigma)\right)_p + \start(\sigma) \\
    &= \max_{p \in P} \max(\mu(\sigma)_p - \start(\sigma), 0) + \start(\sigma) \\
    &= \max_{p \in P} \mu(\sigma)_p - \start(\sigma) + \start(\sigma) \\
    &= \max_{p \in P} \mu(\sigma)_p = \time(\sigma)
\end{align*}
which shows~(\ref{prop3}).

By Lemma~\ref{lem:start-time} and~\ref{lem:earliest-schedule}, we have that
\begin{align}\label{eqn:maxstartu}
0 \le \max_{p \in \preset u} \mu(\sigma')_p
    = \start(\sigma' u)
    = \start(\sigma) \le \max_{q \in \preset{C}} \mu(\sigma)_q
\end{align}
for any conflict set $C$ enabled at $M$. We then have
\begin{align*}
    \gamma'(\nu(\sigma))
    &\stackrel{(\gamma')}{=} \argmin_{C \in \mathcal{C}(\supp{\nu(\sigma)})} \max_{q \in \preset{C}} \nu(\sigma)_q
    \stackrel{(\ref{prop4})}{=} \argmin_{C \in \mathcal{C}(M)} \max_{q \in \preset{C}} \nu(\sigma)_q \\
    &\stackrel{(\nu)}{=} \argmin_{C \in \mathcal{C}(M)} \max_{q \in \preset{C}} \left( \mu(\sigma) \ominus \max_{p \in \preset{u}} \mu(\sigma')_p \right)_q \\
    &\stackrel{(\ominus)}{=} \argmin_{C \in \mathcal{C}(M)} \max_{q \in \preset{C}} \max(
    \underbrace{\mu(\sigma)_q - \max_{p \in \preset{u}} \mu(\sigma')_p}_{\ge 0\text{ for some $q \in \preset{C}$ by (\ref{eqn:maxstartu})}}, 0 ) \\
    &= \argmin_{C \in \mathcal{C}(M)} \max_{q \in \preset{C}} \mu(\sigma)_q \stackrel{(\gamma)}{=} \gamma(\sigma).
\end{align*}
which shows~(\ref{prop1}).

We now proceed to show~(\ref{prop2}). The transition $t$ is enabled by
$\sigma$, so we have $M \trans{t} M_t$ for some marking $M_t$. We claim:
\begin{align}\label{eqn:startt}
    \max_{p \in \preset{u}} \mu(\sigma')_p
  = \start(\sigma) \le \start(\sigma t)
  = \max_{p \in \preset{t}} \mu(\sigma)_p
\end{align}
In the case that $t$ was already enabled by $\sigma'$, then $\start(\sigma' t)
\ge \start(\sigma' u)$, as otherwise the scheduler $\gamma$ would have selected
some conflict set starting at $\start(\sigma' t)$ instead of one containing $u$
startig at $\start(\sigma' u)$. Therefore $\start(\sigma t) \ge \start(\sigma'
t) \ge \start(\sigma' u) = \start(\sigma)$. In the other case, $t$ was enabled
by $u$ through some $p \in \postset{u} \cap \preset{t}$, then by the definition
of $\mu$ and $\start$ we get $\start(\sigma) \le \start(\sigma t)$.

We use the following fact about $\ominus$. For any $\vec{x} \in \Nbot^P$ and
$a,b \in \N$, we have:
\begin{align}\label{eqn:ominus-plus}
    \left( \vec{x} \ominus a \right) \ominus b
  = \vec{x} \ominus \left( a + b \right)
\end{align}

Now let $q$ be any place. We prove that $\nu(\sigma t)_q = f(\nu(\sigma),
t)_q$ by case distinction.
\begin{itemize}
    \item Case 1: $q \notin \preset{t} \cup \postset{t}$ and $\nu(\sigma)_q =
        \bot$. Then $M(q) = 0$ and $M_t(q) = 0$, so also $\mu(\sigma) =
        \mu(\sigma t) = \bot$ as well as $\nu(\sigma t) = \bot$. We have:
        \begin{align*}
            f(\nu(\sigma), t)_q
            &\stackrel{(f)}{=} \left( upd(\nu(\sigma), t) \ominus \max_{p \in \preset{t}} \nu(\sigma)_p \right)_q \\
            &\stackrel{(upd)}{=} \left( \nu(\sigma) \ominus \max_{p \in \preset{t}} \nu(\sigma)_p \right)_q
            \stackrel{(\ominus)}{=} \bot = \nu(\sigma t)_q.
        \end{align*}
    \item Case 2: $q \notin \preset{t} \cup \postset{t}$ and $\nu(\sigma)_q
        \neq \bot$. Then also Then $M(q) = 1$ and $M_t(q) = 1$, so also
        $\mu(\sigma)_q = \mu(\sigma t)_q \neq \bot$.  We have:
        \begin{align*}
            f(\nu(\sigma), t)_q
            &\stackrel{(f)}{=} \left( upd(\nu(\sigma), t) \ominus \max_{p \in \preset{t}} \nu(\sigma)_p \right)_q
            \stackrel{(upd)}{=} \left( \nu(\sigma) \ominus \max_{p \in \preset{t}} \nu(\sigma)_p \right)_q \\
            &\stackrel{(\nu)}{=} \left( \left( \mu(\sigma) \ominus \max_{r \in \preset{u}} \mu(\sigma')_r \right) \ominus
                \max_{p \in \preset{t}} \left( \mu(\sigma) \ominus \max_{r \in \preset{u}} \mu(\sigma')_r \right)_p \right)_q \\
            &\stackrel{(\ominus)}{=} \left( \left( \mu(\sigma) \ominus \max_{r \in \preset{u}} \mu(\sigma')_r \right) \ominus
                \max_{p \in \preset{t}} \max\left( \underbrace{\mu(\sigma)_p - \max_{r \in \preset{u}} \mu(\sigma')_r}_{\text{$\ge 0$ for some $p \in \preset{t}$ (\ref{eqn:startt})}}, 0 \right) \right)_q \\
            &= \left( \left( \mu(\sigma) \ominus \max_{r \in \preset{u}} \mu(\sigma')_r \right) \ominus
            \left( \max_{p \in \preset{t}} \mu(\sigma)_p - \max_{r \in \preset{u}} \mu(\sigma')_r \right) \right)_q \\
            &\stackrel{(\ref{eqn:ominus-plus})}{=} \left( \mu(\sigma) \ominus \left( \max_{r \in \preset{u}} \mu(\sigma')_r +
            \left( \max_{p \in \preset{t}} \mu(\sigma)_p - \max_{r \in \preset{u}} \mu(\sigma')_r \right) \right) \right)_q \\
            &= \left( \mu(\sigma) \ominus \max_{p \in \preset{t}} \mu(\sigma)_p \right)_q
            = \left( \mu(\sigma t) \ominus \max_{p \in \preset{t}} \mu(\sigma)_p \right)_q
            \stackrel{(\nu)}{=} \nu(\sigma t)_q
        \end{align*}
    \item Case 3: $q \in \postset{t}$. We have:
        \begin{align*}
            f(\nu(\sigma), t)_q
            &\stackrel{(f)}{=} \left( upd(\nu(\sigma), t) \ominus \max_{p \in \preset{t}} \nu(\sigma)_p \right)_q \\
            &\stackrel{(upd),(\ominus)}{=} \max\left( \max_{p \in \preset t} \nu(\sigma)_p + \tau(t) - \max_{p \in \preset{t}} \nu(\sigma)_p, 0 \right) \\
            &= \max(\tau(t), 0) = \tau(t) \\
            &= \max\left( \max_{p \in \preset t} \mu(\sigma)_p + \tau(t) - \max_{p \in \preset{t}} \mu(\sigma)_p, 0 \right) \\
            &\stackrel{(upd),(\ominus)}{=} \left( upd(\mu(\sigma), t) \ominus \max_{p \in \preset{t}} \mu(\sigma)_p \right)_q \\
            &\stackrel{(\mu)}{=} \left( \mu(\sigma t) \ominus \max_{p \in \preset{t}} \mu(\sigma)_p \right)_q
            \stackrel{(\nu)}{=} \nu(\sigma t)_q
        \end{align*}
    \item Case 4: $q \in \preset{t} \setminus \postset{t}$. Then
        $upd(\nu(\sigma), t)_q = \mu(\sigma t)_q = \bot$. We have:
        \begin{align*}
            f(\nu(\sigma), t)_q
            &\stackrel{(f)}{=} \left( upd(\nu(\sigma), t) \ominus \max_{p \in \preset{t}} \nu(\sigma)_p \right)_q \\
            &= \bot
            = \left( \mu(\sigma t) \ominus \max_{p \in \preset{t}} \mu(\sigma)_p \right)_q
            \stackrel{(\nu)}{=} \nu(\sigma t)_q
        \end{align*}
\end{itemize}
This concludes the proof of~(\ref{prop2}).

Finally, we show claim~(\ref{prop6}). Let $q \in P$. By the induction
hypothesis, we have $\nu(\sigma') \in \Hbot^P$. We show $\nu(\sigma)_q =
f(\nu(\sigma'), u) \in \Hbot$ by case distinction.
\begin{itemize}
    \item Case 1: $q \not\in \preset{u} \cup \postset{u}$. We have:
        \begin{align*}
            \nu(\sigma)_q
            &= f(\nu(\sigma'), u)_q = \left( upd(\nu(\sigma'), u) \ominus \max_{p \in \preset{u}} \nu(\sigma') \right)_q \\
            &\stackrel{(upd)}{=} \left( \nu(\sigma') \ominus \max_{p \in \preset{u}} \nu(\sigma') \right)_q
            \stackrel{(\ominus)}{\le} \nu(\sigma')_q
        \end{align*}
        As $\nu(\sigma)_q \le \nu(\sigma')_q \in \Hbot$, we have $\nu(\sigma)_q \in \Hbot$.
    \item Case 2: $q \in \postset{u}$. We have:
        \begin{align*}
            \nu(\sigma)_q
            &\stackrel{(\nu)}{=} \left( \mu(\sigma) \ominus \max_{p \in \preset{u}} \mu(\sigma')_p \right)_q \\
            &\stackrel{(\mu)}{=} \max \left( \max_{p \in \preset{u}} \mu(\sigma')_p + \tau(t) - \max_{p \in \preset{u}} \mu(\sigma')_p, 0 \right) \\
            &= \max( \tau(t), 0 ) = \tau(t) \in H \subseteq \Hbot
        \end{align*}
        Therefore $\max_{p \in P} Y_p \le \max_{t \in T} \td(t)$.
    \item Case 3: $q \in \preset{u} \setminus \postset{u}$. Then $\nu(\sigma)_q
        = \bot \in \Hbot$.
\end{itemize}
This concludes the induction and the proof of the theorem.

\end{proof}

\subsection{Missing proofs of Section~\ref{sec:lowerbound}: Computing the
expected time is \sharpp{}-hard}

\lemreductionpartone*
\begin{proof}
    First, we fix a total topological order $\preceq$ on $V \cup E$ of $\PN$,
    i.e.~for $x,y \in V \cup E$, if $x \prec y$ then there is no path from $y$
    to $x$. Then fix a scheduler $S$ for $\MDP_\W$ that always chooses an
    enabled transition that is minimal w.r.t $\preceq$ when comparing the
    corresponding vertices or edges, i.e.~for $t_v$ the $v \in V$ and for
    $t_{e,0}$ or $t_{e,1}$ the $e \in E$. From now on consider the Markov
    chain $\MDP_\W^S$.

    Let $E = \set{ e_1, \ldots, e_n }$ be the set of edges of the PERT network.
    A valuation of the random variables $\X = (X_{e_1}, \ldots, X_{e_n})$ is
    given by a vector $\vec{x} = (x_{e_1}, \ldots, x_{e_n}) \in \set{0,1}^n$
    and the probability of $\vec{x}$ is given by $\Pr(\X = \vec{x}) =
    \Pi_{i=1}^n \Pr(X_{e_i} = x_{e_i})$. Let the project duration of $\PN$
    with the valuation $\vec{x}$ be given by $PD(\PN, \vec{x}) = \max_{\pi \in
    \Pi} \sum_{e \in \pi} x_e$.

    \begin{claim}
    There is a bijective function $f : \set{0,1}^n \mapsto {\it Paths}^S$ such
    that for every $\vec{x} \in \set{0,1}^n$:
    (a) $\Pr(\X = \vec{x}) = \mathit{Prob}^S(f(\vec{x}))$, and
    (b) $PD(\PN, \vec{x}) = \time(\mO, f(\vec{x}))$.
    \end{claim}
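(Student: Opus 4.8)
The plan is to exploit the fact that in the Markov chain $\MDP_\W^S$ the only source of randomness is the choice inside each gadget. First I would observe that every conflict set of $\W_\PN$ is either a singleton $\{t_v\}$ of a vertex transition (since $\postset{(\preset{t_v})} = \{t_v\}$) or a gadget pair $\{t_{e,0}, t_{e,1}\}$ (since $\postset{[u,e]} = \{t_{e,0}, t_{e,1}\}$ for $e = (u,v)$). Because $\W_\PN$ is acyclic and sound, every run from $\mI$ to $\mO$ fires $t_s$, each $t_v$ and $t_t$ exactly once, and for each edge $e$ fires exactly one of $t_{e,0}, t_{e,1}$. Hence a run compatible with $S$ is completely determined by recording, for every edge $e$, whether $t_{e,1}$ or $t_{e,0}$ fired; I would define $f(\vec{x})$ to be the unique path of $\MDP_\W^S$ in which $t_{e,1}$ fires when $x_e = 1$ and $t_{e,0}$ fires when $x_e = 0$, the order of firings being fixed by $S$. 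This map is a bijection onto ${\it Paths}^S$, since reading off the gadget choices of a path inverts it.

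For part (a), the singleton conflict sets contribute factor $1$, while the gadget of edge $e$ contributes $w(t_{e,1})/w(\{t_{e,0},t_{e,1}\}) = p_e/(p_e + (1-p_e)) = p_e$ when $t_{e,1}$ fires and $1-p_e$ when $t_{e,0}$ fires, i.e.\ exactly $\Pr(X_e = x_e)$. Multiplying over the edges, whose choices are independent both in $\PN$ and along the path, gives $\mathit{Prob}^S(f(\vec{x})) = \prod_e \Pr(X_e = x_e) = \Pr(\X = \vec{x})$.

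The main work is part (b). Here I would analyse the timestamp function $\mu$ along the run $\sigma$ from $\mI$ to $\mO$ underlying $f(\vec{x})$. Writing $A_v$ for the time at which $t_v$ fires (equivalently, the common arrival time of the tokens it produces on its outgoing places $[v,e]$), the update rules for $\mu$ give $A_s = 0$ and the max-plus recurrence $A_v = \max_{e=(u,v)\in E} (A_u + x_e)$: firing $t_{e,x_e}$ delays the token from $[u,e]$, which arrives at time $A_u$, by $\tau(t_{e,x_e}) = x_e$, so it reaches $[e,v]$ at time $A_u + x_e$, and $t_v$ (duration $0$) fires once all its input places are marked, at the maximum of these times. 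This is precisely the recurrence satisfied by the longest-path length $L_v(\vec{x}) = \max_{\pi: s \to v} \sum_{e \in \pi} x_e$, so an induction along the topological order $\preceq$ yields $A_v = L_v(\vec{x})$ for every $v$. Taking $v = t$ gives $A_t = L_t(\vec{x}) = PD(\PN, \vec{x})$, and since at $\mO$ the single token sits on $o$ with arrival time $A_t$ and all other places are empty, $\time(\mO, f(\vec{x})) = \mu(\sigma)_o = A_t = PD(\PN, \vec{x})$.

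The one point requiring care is that the firing order dictated by $S$ may differ from the topological order used to unfold this recurrence; I would handle this with Lemma~\ref{lem:equivValue}, by which $\time(\sigma)$ depends only on the Mazurkiewicz equivalence class of $\sigma$. Since any two runs realising the same gadget choices $\vec{x}$ are Mazurkiewicz equivalent, I may evaluate $\time(\sigma)$ along the topological interleaving, where the max-plus recurrence above is transparent. I expect this invariance, together with the routine verification of the recurrence from the definition of $upd$, to be the only genuinely technical part of the argument.
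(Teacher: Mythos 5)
Your proposal is correct and follows essentially the same route as the paper's proof: you define $f$ by recording the outcome of each gadget's binary probabilistic choice, obtain (a) as the product of the per-gadget branching probabilities, and obtain (b) by showing that the token arrival times under $\mu$ satisfy the max-plus longest-path recurrence $A_s=0$, $A_v=\max_{e=(u,v)\in E}(A_u+x_e)$, exactly the system of equations the paper uses to conclude $\time(v)=y_v$. The only (harmless) difference is in handling the firing order: the paper fixes the scheduler $S$ to fire transitions in a total topological order of $V\cup E$, so uniqueness of the path and the unfolding of the recurrence are immediate, whereas you keep $S$ generic and invoke the Mazurkiewicz invariance of $\time$ (Lemma~\ref{lem:equivValue}) to pass to a topological interleaving --- a slightly more general but equally valid way of discharging the same point.
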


    \begin{claimproof}
    Let $\pi$ be an infinite path in $\MDP_\W^S$ starting at $\mI$. As the
    workflow net is sound, with probability 1 $\pi$ eventually visits $\mO$
    and then loops. In $\PN$, since every vertex or node $x \in V \cup E$ is
    on a path from $s$ to $t$, we have $s \preceq x \preceq t$. So for every
    $v \in V$, $\pi$ contains $t_v$ exactly once, and for every $e \in E$,
    $\pi$ contains either $t_{e,0}$ or $t_{e,1}$, but not both. Denote the
    transition in $\pi$ by $t_e$. By our choice of scheduler, we then have that
    for any $x,y \in V \cup E$, $t_x$ appears before $t_y$ in $\pi$ if and only
    if $x \prec y$. So the order and positions of $t_x,t_y$ is the same for
    all paths reaching $\mO$, the paths only differ in the choice of $t_e$,
    i.e.~in whether $t_e = t_{e,0}$ or $t_e = t_{e,1}$ holds. Further, for any
    combination of choices for each $t_e$, there is a path containing these
    transitions.

    Define $f$ such that $f(\vec{x}) = \pi$ if{}f $\pi$ is the path such that
    for every edge $e \in E$, if $x_e = 0$, then $\pi$ contains $t_{e,0}$, and
    if $x_e = 1$, then $\pi$ contains $t_{e,1}$. By the argument above, $f$ is
    bijective. We prove (a) and (b).

    \smallskip

    \noindent (a) For the probabilities of $f(\vec{x})$ and $\vec{x}$, we have:
    \begin{align*}
        \Pr(\X = \vec{x})
            &= \prod_{i=1}^n \Pr(X_{e_i} = x_{e_i})
             = \prod_{e \in E} \Pr(X_e = x_e) \\
            &= \prod_{\set{e \in E \mid x_e = 0}} \Pr(X_e = 0) \cdot
               \prod_{\set{e \in E \mid x_e = 1}} \Pr(X_e = 1) \\
            &= \prod_{\set{e \in E \mid x_e = 0}} (1 - p_i) \cdot
               \prod_{\set{e \in E \mid x_e = 1}} p_i \\
            &= \prod_{\set{e \in E \mid t_{e,0} \in f(\vec{x})}} (1 - p_e) \cdot
               \prod_{\set{e \in E \mid t_{e,1} \in f(\vec{x})}} p_e
            = {\it Prob}^S(f(\vec{x}))
    \end{align*}

    \noindent (b) We have $PD(\PN, \vec{x}) = max_{\pi \in \Pi}\sum_{e \in \pi}
    x_e$, where $\Pi$ is the set of paths from $s$ to $t$. This value is
    defined by the longest path from $s$ to $t$. For a directed acylic graph
    with source $s$ and sink $t$, where each edge $e$ has distance $x_e$, it is
    well known that the longest path from $s$ to $t$ is the value of $y_t$ in
    the unique set $\{y_v \mid v \in V\}$ satisfying the equations:
    \begin{align*}
        y_s &= 0 \\
        y_v &= \max\set{ y_u + x_e \mid e = (u,v) \in E } & \text{for every $v \in V \setminus \set{s}$}
    \end{align*}
    \noindent So we have $PD(\PN, \vec{x}) = y_t$. Let $\sigma$ be the
    transition sequence corresponding to $f(\vec{x})$ in $\W$. We have $\mI
    \trans{\sigma} \mO$, and $\sigma$ contains the transition $t_v$ for each
    vertex $v \in V$ and one of $t_{e,0}$ or $t_{e,1}$ for each edge $e \in E$.
    As before, for every $e \in E$, denote by $t_e$ the $t_{e,0}$ or $t_{e,1}$
    occurring in $\sigma$.

    Note that every place of $\W$ becomes marked exactly once during the
    occurrence of $\sigma$. For any place $p \in P$, denote by $\time(p)$ the
    time at which $p$ becomes marked, as given by $\mu(\sigma')_p$ for the
    timestamp function $\mu$ and the prefix $\sigma'$ of $\sigma$ after which
    $p$ becomes marked. Additionally, for any vertex $v \in V$, all places of
    the form $[v,e]$ become marked at the same time. Denote this time by
    $\time(v)$.

    From the definition of the timestamp and time functions and the
    construction of $\W$, we have that for any vertex $v \in V \setminus
    \set{s}$ and for any edge $e = (u,v) \in E$
    \begin{align*}
        \time(v) &= \max\set{ \time([e,u]) \mid e = (u,v) \in E } \\
        \time([e,u]) &= \time([u,e]) + \tau(t_e) = \time(u) + \tau(t_e)
    \end{align*}
    which combined give us
    \begin{align*}
        \time(v) &= \max\set{ \time(u) + \tau(t_e) \mid e = (u,v) \in E }.
    \end{align*}
    Further, we have $\time(s) = \time(i) = 0$, $\time(o) = \time(t)$ and, by
    the definition of the function $f$, $\tau(t_e) = x_e$. This gives us the
    following set of equations:
    \begin{align*}
        \time(s) &= 0 \\
        \time(v) &= \max\set{ \time(u) + x_e \mid e = (u,v) \in E } & \text{for every $v \in V \setminus \set{s}$}
    \end{align*}
    These are exactly the same equations as for the project duration, so we
    necessarily have $\time(v) = y_v$ for any vertex $v \in V$.

    Finally, for the time of $f(\vec{x})$, we have:
    \begin{align*}
    \time(\mO, f(\vec{x})) &= \time(\sigma) = \max_{p \in \mO} \mu(\sigma)_p
    = \mu(\sigma)_o = \time(o) = \time(t) = y_t = PD(\PN,\vec{x})
    \end{align*}

    \end{claimproof}

    For the expected project duration, we have:
    \begin{align*}
        \Ex(PD(\PN)) &=
        \sum_{\vec{x} \in \set{0,1}^n} PD(\PN, \vec{x}) \cdot \Pr(\X = \vec{x})
    \end{align*}

    For the expected time, as the net is sound and acyclic, we have:
    \begin{align*}
        ET_{\W_\PN} &= ET_{\W_\PN}^S = ET_{\W_\PN}^S(\mO)
              = \sum_{\pi \in \mathit{Paths}^S} \time(\mO,\pi) \cdot \mathit{Prob}^S(\pi)
    \end{align*}

    With the help of the function $f$ of the claim, we get $ET_{\W_\PN} =
    \Ex(PD(\PN, \X))$.
\end{proof}

\subsubsection{Replacing rational weights by weights 1}

We now formally describe how to replace all rational weights $p_e$,
$\overline{p_e}$ in $\W_\PN$ by weights $1$. For each edge $e = (u,v) \in E$,
apply the following construction:

\begin{itemize}
    \item Remove the transitions $t_{e,0}$ and $t_{e,1}$.
    \item For each $1 \le i \le k$, add a place $q_{e,i}$.
    \item For each $1 \le i \le k$, add two transitions $a_{e,i}, b_{e,i}$,
        both with arcs from $q_{e,i}$, an arc from $a_{e,i}$ to $[e,v]$, and if
        $i < k$, an arc from $b_{e,i}$ to $q_{i+1}$, otherwise an arc from
        $b_{e,i}$ to $[e,v]$, Assign $w(a_{e,i}) = w(b_{e,i}) = 1$,
        $\tau(a_{e,i}) = p_i$ and $\tau(b_{e,i}) = 0$.
    \item Add a transition $a_{e,0}$ with an arc from $[u,e]$ and, if $k \ge
        1$, an arc to $q_1$, otherwise an arc to $[e,v]$. Assign $w(a_{e,0}) =
        1$ and $\tau(a_{u,0}) = p_0$.
\end{itemize}

We show that that above construction preserves the expected time.

\begin{restatable}{lemma}{lemreductionparttwo}\label{lem:reduction-part-two}
    Let $\W = \W_\PN$ be a workflow net corresponding to a two-state stochastic
    PERT network $\PN$. Let $\W'$ be the result of replacing every gadget of
    $\W$ of the form described in Figure~\ref{fig:reduction-weight-rational} by
    a gadget of the form described in Figure~\ref{fig:reduction-weight-one},
    following the construction rules above.  Then $ET_{\W'} = ET_\W$.
\end{restatable}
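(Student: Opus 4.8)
The plan is to prove $ET_{\W'} = \Ex(PD(\PN))$ and combine this with Lemma~\ref{lem:reduction-part-one}, which already gives $ET_\W = \Ex(PD(\PN))$. First I would check that $\W'$ is again a legal TPWN to which our theory applies: it is acyclic, $1$-safe (every place of a gadget carries a token at most once along any run) and sound (from any reachable marking one can fire the remaining gadget transitions and the vertex transitions in topological order to reach $\mO$); it is free-choice, since the only places with more than one outgoing transition are the $q_{e,i}$, whose postset $\{a_{e,i},b_{e,i}\}$ is shared by no other place. Hence $\W'$ is confusion-free and, by Theorem~\ref{thm:expected-time}, $ET_{\W'}$ is well defined and independent of the scheduler, so I may compute it under a convenient topological-order scheduler $S$ and apply the characterization of Lemma~\ref{lem:charexpected}.

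The heart of the argument is an input/output analysis of the replacement gadget. I would show that, started with a token arriving at $[u,e]$ at time $T$, the gadget eventually deposits a token at $[e,v]$ at time $T + D_e$, where $D_e \in \{0,1\}$ is a random variable with $\Pr(D_e = 1) = p_e$ and $\Pr(D_e = 0) = 1 - p_e$, and where the $D_e$ over distinct edges are mutually independent. The key observations are: (i) each traversal fires $a_{e,0}$ and then, at each level $i$, chooses between the ``exit'' transition $a_{e,i}$ (time $p_i$) and the ``continue'' transition $b_{e,i}$ (time $0$), each with probability $\tfrac{1}{2}$; (ii) the time-$0$ transitions $b_{e,i}$ leave the token's timestamp unchanged, so exactly one time-carrying transition contributes and the accumulated delay is $p_0 + p_j$ if the token exits at level $j$, and $p_0$ if it runs to $b_{e,k}$; (iii) since $p_e \le 1$ forces $p_0 = 1 \Rightarrow p_i = 0$ for all $i \ge 1$, this delay always lies in $\{0,1\}$, matching the two choices $t_{e,0}, t_{e,1}$ of the original gadget. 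The exit probabilities are geometric ($2^{-j}$ for exit at level $j$), and summing the delay-$1$ outcomes gives $\Pr(D_e = 1) = \sum_{i \ge 0} 2^{-i} p_i = p_e$, exactly the binary expansion of $p_e$; independence across gadgets follows from confusion-freeness, as in claim~(a) of Lemma~\ref{lem:reduction-part-one}.

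With the gadget behavior in hand I would set up a contraction map $c$ sending each run $\sigma'$ of $\W'$ to the valuation $\vec{x} \in \{0,1\}^n$ with $x_e = D_e(\sigma')$, i.e.~to the run $f(\vec{x})$ of $\W$ from Lemma~\ref{lem:reduction-part-one}. Because the interface places $[u,e]$ and $[e,v]$ and the time-$0$ vertex transitions $t_v$ are untouched by the replacement, and the gadget contributes exactly delay $x_e$, the timestamps obey the same recursion $\mu(\sigma')_{[e,v]} = \mu(\sigma')_{[u,e]} + x_e$ and hence $\time(\sigma') = \mu(\sigma')_o = PD(\PN, \vec{x})$, the same value computed for $f(\vec{x})$ in $\W$. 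Grouping the runs of $\W'$ by their contraction in the sum of Lemma~\ref{lem:charexpected} then yields
\begin{align*}
ET_{\W'} &= \sum_{\vec{x} \in \{0,1\}^n} PD(\PN, \vec{x}) \sum_{c(\sigma') = \vec{x}} \nu_S(\sigma') \\
&= \sum_{\vec{x} \in \{0,1\}^n} PD(\PN, \vec{x}) \cdot \Pr(\X = \vec{x}) = \Ex(PD(\PN)),
\end{align*}
where the inner sum collapses to $\prod_e \Pr(D_e = x_e) = \Pr(\X = \vec{x})$ by the independence from the gadget analysis. Combined with Lemma~\ref{lem:reduction-part-one} this gives $ET_{\W'} = ET_\W$.

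The main obstacle I anticipate is the timestamp bookkeeping inside the gadget (steps (ii)--(iii) above): one must argue carefully that the delay never accumulates past $\{0,1\}$ and that the probability of delay $1$ is precisely $p_e$ via the binary expansion. A second, milder issue is justifying that the gadget-internal choices remain fair and independent under an arbitrary interleaving with the rest of the net; here scheduler-independence (Theorem~\ref{thm:expected-time}) does the heavy lifting, letting me fix a single topological-order scheduler and avoid reasoning about all interleavings.
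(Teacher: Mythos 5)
Your proposal is correct, but it takes a different decomposition than the paper. The paper compares $\W$ and $\W'$ \emph{directly}: it fixes schedulers $S,S'$ that fire all transitions of an edge gadget consecutively (so each gadget traversal is atomic, moving a token from $[u,e]$ to $[e,v]$ with no interleaving), views each gadget as a small workflow net with initial place $[u,e]$ and final place $[e,v]$, and argues it suffices that the two gadgets induce the same delay distribution --- which it establishes exactly as you do, via the observations that a time-$1$ transition can fire at most once per traversal and that the exit probabilities are geometric, giving $\Pr(\text{delay }1) = \sum_{i=0}^{k} 2^{-i} p_i = p_e$. You instead re-run the global argument of Lemma~\ref{lem:reduction-part-one} on $\W'$: a many-to-one contraction map from runs of $\W'$ to valuations $\vec{x} \in \{0,1\}^n$, the same max-recursion for timestamps yielding $\time(\sigma') = PD(\PN,\vec{x})$, and a regrouping of the sum in Lemma~\ref{lem:charexpected}, concluding $ET_{\W'} = \Ex(PD(\PN)) = ET_\W$ by transitivity. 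The shared kernel is identical (geometric exits, binary expansion, and the observation that $p_0 = 1$ forces $p_i = 0$ for $i \ge 1$ so the delay stays in $\{0,1\}$ --- a point the paper states only as ``we can choose a transition with time $1$ at most once''), but the framings buy different things: the paper's local substitution argument is shorter and modular, though it leans on the implicit claim that replacing a subnet by one with the same input/output delay distribution preserves the global expected time, justified only through the atomic-firing scheduler; your route avoids that implicit step at the cost of repeating the Lemma~\ref{lem:reduction-part-one} machinery, and it has the merit of explicitly verifying that $\W'$ is again sound, $1$-safe, acyclic and free-choice (so that Theorem~\ref{thm:expected-time} and scheduler-independence apply at all), a prerequisite the paper's proof leaves unstated.
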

\begin{proof}
    Let $\Net$ be the subnet of $\W$ from
    Figure~\ref{fig:reduction-weight-rational} and $\Net'$ be the corresponding
    subnet of $\W'$ of the form from Figure~\ref{fig:reduction-weight-one}. We
    fix schedulers $S,S'$ for $\MDP_{\W}$ and $\MDP_{\W'}$ that for each edge
    $e \in E$ choose transitions of $e$ (i.e.~$t_{e,0}, t_{e,1}, a_{e,i},
    b_{e,i}$) consecutively before choosing any transition for a vertex $v$ or
    another edge $e'$. It is easy to to see that in any path of
    $\MDP_{\W}^{S}$ and $\MDP_{\W'}^{S'}$, for each edge $e = (u,v)$, whenever
    the first transition for $e$ occurs, it moves the token from $[u,e]$, then
    all transitions of $e$ occur, and a token is placed into $[e,v]$ without
    any other transitions in between.

    We regard the subnets $\Net,\Net'$ as workflow nets themselves with initial
    place $i = [u,e]$ and final place $o = [e,v]$. It suffices to show that the
    time distributions for both nets are the same. In each net, we can choose a
    transition with time $1$ at most once, therefore we only need to show that
    the probability of choosing a transition with time $1$ is equal in $\Net$
    and $\Net'$.

    In $\Net$, the probability of choosing $t_{e,1}$ is $p_e$. We assumed that
    $p_e = \sum_{i=0}^k 2^{-i} p_i$ for $p_i \in \set{0, 1}$. In $\Net'$, the
    probability of choosing $a_{e,0}$ is $1 = 2^0$, and for $1 \le i \le k$,
    the probabilty of choosing $a_{e,i}$ is equal to $2^{-i}$, as we need to
    choose all $b_{e,j}$ with $j < i$ before, and then choose $a_{e,i}$ instead
    of $b_{e,i}$. For all $i$, we have $\tau(a_{e,i}) = 1$ iff $p_i = 1$.
    Therefore the total probability of choosing a transition with time $1$ in
    $\Net'$ is $\sum_{i=0}^k 2^{-i} p_i = p_e$.
\end{proof}

\subsubsection{Computing the expected time is \sharpp{}-hard}

\thmapproximationsharpphard*
\begin{proof}
In~\cite{Hagstrom88}, it is only shown that computing $\Ex(PD(\PN))$ is
\sharpp{}-hard by using the results of~\cite{ProvanB83}. However, it is easy
to see that their construction also shows that computing an
$\epsilon$-approximation for $\Ex(PD(\PN))$ is \sharpp{}-hard.
In~\cite{ProvanB83}, the authors showed that given a graph $G = (V, E)$,
vertices $s,t \in V$, a rational $p$ with $0 \le p \le 1$ and an $\epsilon >
0$, it is \sharpp{}-complete to compute a rational $r$ such that $r - \epsilon
< f(G,s,t;p) < r + \epsilon$, where $f(G,s,t;p)$ is the reliability measure of
the graph. In~\cite{Hagstrom88}, the constructed two-state stochastic PERT
network used to show \sharpp{}-hardness satisfies $f(G,s,t;p) = \Ex(PD(\PN)) -
2$.

Our constructed net $\W_\PN$ satisfies the constraints for the given TPWN, and
by Lemma~\ref{lem:reduction-part-one} and~\ref{lem:reduction-part-two}, it
satisfies $ET_{\W_\PN} = \Ex(PD(\PN))$. Therefore it follows that it is also
\sharpp-hard to compute an $\epsilon$-approximation for $ET_{\W_\PN}$, as this
directly gives an $\epsilon$-approximation for $f(G,s,t;p) = ET_{\W_\PN} - 2$.
\end{proof}

\end{document}